\documentclass[11pt,onecolumn]{IEEEtran}

\usepackage[mathscr]{eucal}
\usepackage[cmex10]{amsmath}
\usepackage{epsfig,epsf,psfrag}
\usepackage{amssymb,amsmath,amsthm,amsfonts,latexsym}
\usepackage{amsmath,graphicx,bm,xcolor,url}
\usepackage[caption=false]{subfig}
\usepackage{fixltx2e}%ordering of single and double column floats
\usepackage{array}%array and tabular environments
\usepackage{verbatim}
\usepackage{bm}
\usepackage{algorithmic, cite}
\usepackage{algorithm}
\usepackage{verbatim}
\usepackage{textcomp}
\usepackage{mathrsfs}
\usepackage{epstopdf}

\usepackage{epsfig,subfig}
\usepackage{epstopdf}
\usepackage{hyperref}
\usepackage{overpic}

\usepackage{amsmath}
\usepackage{amssymb}
\usepackage{color}

\newcommand{\openone}{\leavevmode\hbox{\small1\normalsize\kern-.33em1}}
 \newcommand{\pobs}{p}
 \newcommand{\DDelta}{\Delta}

\newcommand{\hel}{\kappa}
\newcommand{\hhel}{\hat{\kappa}}
\newcommand{\heta}{\hat{\eta}}

\newtheorem{theorem}{Theorem}
\newtheorem{lemma}{Lemma}

%\usepackage[mathscr]{eucal}
%\usepackage[cmex10]{amsmath}
%\usepackage{epsfig,epsf,psfrag}
%%\usepackage{amssymb,amsthm,amsfonts,latexsym}
%%\usepackage{graphicx,bm,xcolor,url}
%\usepackage[caption=false]{subfig}
%\usepackage{fixltx2e}%ordering of single and double column floats
%\usepackage{array}%array and tabular environments
%\usepackage{verbatim}
%\usepackage{bm}
%\usepackage{algorithmic}
%\usepackage{algorithm}
%\usepackage{verbatim}
%\usepackage{textcomp}
%\usepackage{mathrsfs}
%\usepackage{hyperref}
%\usepackage{epstopdf}
%\usepackage{setspace}

%% To produce a tilde in url
%\catcode`~=11 \def\UrlSpecials{\do\~{\kern -.15em\lower .7ex\hbox{~}\kern .04em}} \catcode`~=13

%\allowdisplaybreaks[1]

\newcommand{\nn}{\nonumber}

% Calligraphic stuff

\newcommand{\calE}{\mathcal{E}}

\newcommand{\calG}{\mathcal{G}}

\newcommand{\calL}{\mathcal{L}}
\newcommand{\calM}{\mathcal{M}}

\newcommand{\calR}{\mathcal{R}}
\newcommand{\calS}{\mathcal{S}}

\newcommand{\calX}{\mathcal{X}}

% Boldface stuff

\newcommand{\bw}{\boldsymbol{w}}

\newcommand{\bY}{\boldsymbol{Y}}

\newcommand{\bZ}{\boldsymbol{Z}}

% Roman stuff

\newcommand{\rme}{\mathrm{e}}

% Numbers bb font

\newcommand{\bbE}{\mathbb{E}}

\newcommand{\bbN}{\mathbb{N}}

\newcommand{\bbR}{\mathbb{R}}

% Mathfrak font

% Mathscr

% San serif font
\DeclareMathAlphabet{\mathbsf}{OT1}{cmss}{bx}{n}
\DeclareMathAlphabet{\mathssf}{OT1}{cmss}{m}{sl}% slanted sans serif

% define some useful uppercase Greek letters in regular and bold sf
\DeclareSymbolFont{bsfletters}{OT1}{cmss}{bx}{n}
\DeclareSymbolFont{ssfletters}{OT1}{cmss}{m}{n}
\DeclareMathSymbol{\bsfGamma}{0}{bsfletters}{'000}
\DeclareMathSymbol{\ssfGamma}{0}{ssfletters}{'000}
\DeclareMathSymbol{\bsfDelta}{0}{bsfletters}{'001}
\DeclareMathSymbol{\ssfDelta}{0}{ssfletters}{'001}
\DeclareMathSymbol{\bsfTheta}{0}{bsfletters}{'002}
\DeclareMathSymbol{\ssfTheta}{0}{ssfletters}{'002}
\DeclareMathSymbol{\bsfLambda}{0}{bsfletters}{'003}
\DeclareMathSymbol{\ssfLambda}{0}{ssfletters}{'003}
\DeclareMathSymbol{\bsfXi}{0}{bsfletters}{'004}
\DeclareMathSymbol{\ssfXi}{0}{ssfletters}{'004}
\DeclareMathSymbol{\bsfPi}{0}{bsfletters}{'005}
\DeclareMathSymbol{\ssfPi}{0}{ssfletters}{'005}
\DeclareMathSymbol{\bsfSigma}{0}{bsfletters}{'006}
\DeclareMathSymbol{\ssfSigma}{0}{ssfletters}{'006}
\DeclareMathSymbol{\bsfUpsilon}{0}{bsfletters}{'007}
\DeclareMathSymbol{\ssfUpsilon}{0}{ssfletters}{'007}
\DeclareMathSymbol{\bsfPhi}{0}{bsfletters}{'010}
\DeclareMathSymbol{\ssfPhi}{0}{ssfletters}{'010}
\DeclareMathSymbol{\bsfPsi}{0}{bsfletters}{'011}
\DeclareMathSymbol{\ssfPsi}{0}{ssfletters}{'011}
\DeclareMathSymbol{\bsfOmega}{0}{bsfletters}{'012}
\DeclareMathSymbol{\ssfOmega}{0}{ssfletters}{'012}

% Hat and Tilde

\newcommand{\hatF}{\hat{F}}

\newcommand{\hatL}{\hat{L}}

\newcommand{\hatP}{\hat{P}}

\newcommand{\hatq}{\hat{q}}

\newcommand{\hatS}{\hat{S}}

\newcommand{\hatv}{\hat{v}}

\newcommand{\hatw}{\hat{w}}

% Bold greek

\newcommand{\eps}{\varepsilon}
% functional dot

% exponential

% iid

% convergence

% Inequalities

\DeclareMathOperator*{\argmax}{arg\,max}

\DeclareMathOperator{\var}{\mathsf{Var}}

\DeclareMathOperator{\poly}{poly}

%binary entropy

\newcommand{\bone}{\mathbf{1}}
\newcommand{\etal}{\textit{et al.}}

\begin{document}

\flushbottom
\title{Adversarial Top-$K$ Ranking}

\author{Changho Suh  $\,\quad\,$
        Vincent Y.~F.~Tan$\,\quad \,$ Renbo Zhao
\thanks{C.\ Suh is  with the School of Electrical Engineering at Korea Advanced Institute of Science and Technology (email:\,chsuh@kaist.ac.kr).}
\thanks{V.~Y.~F.~Tan is with the Department of Electrical and Computer Engineering and the Department of Mathematics, National University of Singapore.
(email:\,vtan@nus.edu.sg).}
\thanks{R.~Zhao is with the Department of Electrical and Computer Engineering, National University of Singapore.
(email:\,elezren@nus.edu.sg).}   \thanks{C.~Suh is supported by a gift from Samsung.  V.~Y.~F.~Tan and R.~Zhao gratefully acknowledge financial support from the National University of Singapore  (NUS) under the  NUS Young Investigator Award R-263-000-B37-133.}  }

\IEEEpeerreviewmaketitle

\maketitle

\begin{abstract}
We study the top-$K$ ranking problem where the goal is to recover the set of top-$K$ ranked items out of a large collection of items based on partially revealed preferences. We consider an {\em adversarial crowdsourced} setting where there are two population sets, and  pairwise comparison samples drawn from one of the populations follow  the standard Bradley-Terry-Luce model (i.e., the chance of item $i$ beating item $j$ is proportional to the relative score of item $i$ to item $j$), while in the other population, the corresponding chance is inversely proportional to the relative score. When the relative size of the two populations   is known, we characterize the minimax limit on the sample size required  (up to a constant)  for reliably identifying the top-$K$ items, and
 demonstrate how it scales with the relative size.
%As a consequence, we demonstrate that there is no substantial degradation in the ranking performance  as long as the sizes of the two populations are sufficiently distinct, while for the case in which the sizes are pretty much close each other, the performance depends highly on how the relative size scales with the number of items. Numerical simulations corroborate our theoretical findings.
Moreover, by leveraging a tensor decomposition method for disambiguating mixture distributions, we extend our result to the more realistic scenario in which the relative population size is unknown, thus establishing an upper bound on the fundamental limit of the sample size for recovering the top-$K$ set.
%Our minimax upper bound on the sample size has a gap to the fundamental limit of the size-known case.
%We develop computationally efficient algorithms  that achieve the limits with linear-time complexity (in the size-known case) and with polynomial-time complexity (in the size-unknown case) respectively.
\end{abstract}

\begin{IEEEkeywords}
Adversarial population, Bradley-Terry-Luce model,  crowdsourcing, minimax optimality, sample complexity, top-$K$ ranking, tensor decompositions
\end{IEEEkeywords}

\section{Introduction}

%\begin{itemize}
%  \item %Ranking naturally arises in a variety of contexts.
Ranking is one of the fundamental problems that has proved crucial in a wide variety of
contexts---social choice~\cite{caplin1991aggregation,soufiani2014computing}, web search and information retrieval~\cite{Dwork2001}, recommendation systems~\cite{baltrunas2010group}, ranking individuals by group comparisons~\cite{huang08} and crowdsourcing~\cite{chen13:crowdsourcing}, to name a
few.
% Rank aggregation has been investigated in a variety
%of contexts such as social choice~\cite{caplin1991aggregation,soufiani2014computing}, web search and information retrieval~\cite{Dwork2001}, recommendation
%systems~\cite{baltrunas2010group}, ranking individuals by group comparisons~\cite{huang08} and crowdsourcing~\cite{chen13:crowdsourcing}, just to name only a few. This task aims
%to find a consistent ordering of a large collection of items
%given only partial   information of users' preferences.
%In this big
%data era, a few natural challenges arise: (i) as the number of items to rank is ever growing, recovering a consistent total ordering of the items becomes increasingly harder; (ii) highly inconsistent, adversarial and incomplete
%data available makes the task even more difficult; and (iii) only a small amount of noisy pairwise/listwise preference
%information can be obtained.
%  \item %Two prominent ranking algorithms: MLE~\cite{ford1957solution}, Spectral methods (e.g., PageRank, RankCentrality~\cite{Negahban2012})
Due to its wide applicability, a large volume of work on ranking  has been done. The two main paradigms in the literature include spectral
ranking algorithms~\cite{Negahban2012,Dwork2001,brin1998anatomy} and maximum likelihood
estimation (MLE)~\cite{ford1957solution}.
%Subtleties aside, one common theme upon which the two paradigms agree is that as an attempt to order all the items, obtaining low squared loss estimates (i.e., low $\ell_2$ estimation error) of the underlying score/preference vector leads to high ranking accuracy.
While these ranking schemes yield reasonably good estimates which are faithful globally w.r.t.\ the latent preferences (i.e., low $\ell_2$ loss), it is not necessarily guaranteed that this results in optimal ranking accuracy. Accurate ranking
has more to do with how well the {\em ordering} of the estimates matches that of the true preferences (a discrete/combinatorial optimization problem), and less to do with how well we can estimate the true preferences (a continuous optimization problem).

%This approach, however, can be off the point of ranking, since the minimum squared
%error does not necessarily imply high ranking accuracy; it measures the faithfulness of score
%estimates from an overall sense, but fails to represent the precise ordering of the items.

%  \item %Recent work on top-K ranking
In   applications, a ranking algorithm that outputs a total ordering of  all  the items is not only   overkill, but it also unnecessarily increases complexity. Often, we pay attention to only a {\em few} significant items. Thus, recent work such as that by Chen and Suh~\cite{chen-suh:topKranking} studied the top-$K$ identification task. Here, one aims to recover a correct set of top-ranked items only. This work characterized the minimax limit on the sample size required  (i.e., the sample complexity) for reliable top-$K$ ranking, assuming
%a prominent parametric model, namely
 the Bradley-Terry-Luce (BTL) model~\cite{bradley1952rank,luce1959individual}.
%In achieving the fundamental limit, its proposed scheme \emph{Spectral MLE} concatenates, in series,  two popular algorithms (spectral methods and MLE), so as to yield low $\ell_{\infty}$ estimates of the latent score vector. It was shown   to be crucial in identifying top-ranked items with respect to their preferences.
%Spectral MLE first
%obtains preference estimates via a spectral method, namely \emph{Rank Centrality}~\cite{Negahban2012},
%which produces estimates with low squared losses. Subsequently,  by successively performing component-wise maximum likelihood estimates (MLEs) on
%some outlying estimates, it locally refines  the estimates to acquire low $\ell_{\infty}$ estimation error, leading to
%successful top-$K$ ranking.
%  \item %Challenge in an adversarial setting

While this result is concerned with practical issues, there are still limitations when modeling other realistic scenarios. The BTL model considered in~\cite{chen-suh:topKranking} assumes that the quality of pairwise comparison information which forms the basis of the model is the same across annotators. In reality  (e.g., crowdsourced settings), however, the quality of the information can vary significantly across different annotators. For instance, there may be a non-negligible fraction of spammers who provide answers in an \emph{adversarial} manner.
%A similar challenge arises in {\em adversarial web search}~\cite{AdvWeb_10},  in which a search engine aims to rank
In the context of {\em adversarial  web search}~\cite{AdvWeb_10}, web contents can be maliciously manipulated by spammers for commercial, social, or political benefits in a robust manner.
Alternatively, there  may exist false information such as false voting in social networks and fake ratings in recommendation systems~\cite{AdvIR_02}.
%Moreover, applications extend to other tasks in {\em adversarial information retrieval}~\cite{AdvIR_02}. In this domain, one is interested in reducing the effects of false information, such as false voting in social networks and fake ratings in recommendation systems.
%hence is not an appropriate model for a  multi-population setting where there can be  multiple score modalities.

As an initial effort to address this challenge, we investigate a so-called {\em adversarial BTL} model, which postulates the existence of two sets of populations---the \emph{faithful} and \emph{adversarial} populations, each of which has proportion $\eta$ and $1-\eta$ respectively. Specifically we consider a BTL-based pairwise comparison model  in which there exist latent variables indicating ground-truth preference scores of items.
In this model, it is assumed that comparison samples drawn from the faithful population follow the standard BTL model (the probability of item $i$ beating item $j$ is proportional to item $i$'s relative score to item $j$), and those of the adversarial population act in an ``opposite'' manner, i.e.,  the probability of   $i$ beating   $j$ is inversely proportional to the relative score. See Fig.~\ref{fig:model}.

%In fact, this model represents a crowdsourced setting in which there is a non-negligible fraction of spammers who provide comparison answers in a manner completely contrary   to that of the faithful population. This also resembles a challenge in {\em adversarial web search} \cite{AdvWeb_10}, in which a search engine aims to rank web contents that have been maliciously manipulated by spammers for commercial, social, or political benefits in a robust manner.
%Thus our proposed method provides a new and theoretically-grounded approach for addressing this challenge.
%As in~\cite{chen-suh:topKranking},
%we employ the probability of error as a performance metric with respect to the ground-truth top-$K$ set and an estimate.

%ensuring the probability of error to vanish in the limit of the number of items.

%  \item Goal: Indentify the top-K ranked items given a subset of pairwise measurements
%  \item %Contribution: Characterization of the fundamental limits on the sample size required for reliable ranking: Eta is known; Eta is unknown.

%\subsection{Main Contributions}
\subsection{Main contributions} We seek to characterize the fundamental limits on the sample size required for top-$K$ ranking, and to develop computationally efficient ranking algorithms.  There are two main contributions in this paper.
%for both cases in which $\eta$ is known and unknown.
%Our contributions are two-folded.

%We also  develop computationally tractable algorithms for both cases in which $\eta$ is known and unknown.

Building upon \emph{RankCentrality}~\cite{Negahban2012} and \emph{SpectralMLE}~\cite{chen-suh:topKranking}, we develop a ranking algorithm to characterize the minimax limit required for top-$K$ ranking, up to constant factors, for the $\eta$-known scenario.
We also show the minimax optimality of our ranking scheme by proving a converse or impossibility result that applies to {\em any} ranking algorithm using information-theoretic methods.
%By comparing the sample complexities from the ranking algorithm and the converse result,
As a result, we find that the sample complexity is inversely proportional to $(2 \eta-1)^2$, which suggests that less distinct the population sizes, the larger the sample complexity. 
%increases quadratically as $\eta$ decreases from $1$ to $1/2$, i.e.,
We also demonstrate that our result recovers that of the $\eta=1$ case in~\cite{chen-suh:topKranking}, so the work contained herein is a strict generalization of that in \cite{chen-suh:topKranking}.
%which suggests that less distinct the population sizes, the smaller the sample complexity.

The second contribution is to establish an upper bound on the sample complexity for the more practically-relevant scenario where $\eta$ is unknown. A novel procedure based on tensor decomposition approaches in Jain-Oh~\cite{JO14} and Anandkumar \etal~\cite{AGHKT} is proposed  to first obtain  an estimate of the parameter $\eta$ that is in a   neighborhood of  $\eta$, i.e., we seek to obtain an $\eps$-globally optimal solution.
 This is usually not guaranteed by traditional iterative methods such as Expectation Maximization~\cite{Demp}.
Subsequently,   the estimate is then used in the ranking algorithm that assumes   knowledge of $\eta$. We demonstrate that this algorithm leads to an order-wise worse sample complexity relative to the $\eta$-known case.
Our theoretical analyses suggest  that the degradation is unavoidable if we employ this natural two-step procedure.
%Whether or not this loss is unavoidable or fundamental remains unanswered, although

%\subsection{Related Works} % Crowdsourcing model~\cite{chen13:crowdsourcing}; Mixture model

\subsection{Related work} The most relevant related works are those by  Chen and Suh~\cite{chen-suh:topKranking}, Negahban \etal~\cite{Negahban2012}, and Chen \etal~\cite{chen13:crowdsourcing}. Chen and Suh~\cite{chen-suh:topKranking} focused on top-$K$ identification under the standard BTL model, and derived an  $\ell_{\infty}$ error bound on preference scores which is intimately related to top-$K$ ranking accuracy. Negahban \etal~\cite{Negahban2012} considered the same comparison model and derived an $\ell_2$ error bound. A key distinction in our work is that we consider a different measurement model in which there are two population sets, although the $\ell_{\infty}$ and $\ell_2$ norm error analyses in~\cite{chen-suh:topKranking, Negahban2012} play crucial roles in determining the sample complexity.

The statistical model introduced by Chen \etal~\cite{chen13:crowdsourcing} attempts to represent crowdsourced settings and forms the basis of our adversarial comparison model. We note that no theoretical analysis of the sample complexity is available in~\cite{chen13:crowdsourcing} or other related works on crowdsourced rankings~\cite{yi13,ye13,kim14}. For example, Kim \etal~\cite{kim14} employed variational EM-based algorithms to estimate the latent scores;  {\em global} optimality guarantees for such algorithms are difficult to establish. Jain and Oh~\cite{JO14} developed a tensor decomposition method~\cite{AGHKT} for learning the parameters of a mixture model~\cite{Kearns94,Freund99,Feldman08} that includes our model as a special case. We specialize their model and  relevant  results to our setting for determining the accuracy of the estimated $\eta$. This allows us to establish an upper bound on the   sample complexity when  $\eta$ is unknown.% See Section~\ref{sec:conclusion} for detailed discussions. 

Recently, Shah and Wainwright~\cite{shah15} showed that a simple counting method~\cite{Borda1781} achieves order-wise optimal sample complexity for top-$K$ ranking under a general comparison model which includes, as special cases, a variety of parametric ranking models including the one under consideration in this paper (the BTL model). However, the authors made assumptions on the statistics of the pairwise comparisons which are different from that in our model. Hence, their result is not directly applicable to our setting.

\subsection{Notations}
%{\em Notation:} 
We provide a brief summary of the notations used throughout the paper.
Let $[n]$ represent $\left\{ 1,2,\cdots,n\right\} $.
We denote by $\Vert \boldsymbol{w} \Vert $,
$\Vert \boldsymbol{w} \Vert _{1}$, $\Vert \boldsymbol{w} \Vert _{\infty}$
the $\ell_{2}$ norm, $\ell_{1}$ norm, and $\ell_{\infty}$ norm
of $\boldsymbol{w}$, respectively.
%A graph $\mathcal{G}$ is said to be an Erd\H{o}s\textendash{}R\'enyi random graph, denoted by $\mathcal{G}_{n,p}$, if each pair $(i,j)$ is connected by an edge independently with probability $p$. Besides, we use $\mathrm{deg}\left(i\right)$ to represent the degree of vertex $i$ in $\mathcal{G}$.
Additionally, for any two sequences $f(n)$ and $g(n)$, $f(n)\gtrsim g(n)$ or $f(n)=\Omega(g(n))$ mean
that there exists a (universal)  constant $c$ such that $f(n)\geq cg(n)$; $f(n)\lesssim g(n)$ or $f(n) = O(g(n))$
mean that there exists a constant $c$ such that $f(n)\leq cg(n)$; and
$f(n)\asymp g(n)$ or $f(n)=\Theta(g(n))$ mean that there exist constants $c_{1}$ and $c_{2}$
such that $c_{1}g(n)\leq f(n)\leq c_{2}g(n)$.  The notation $\poly(n)$ denotes a sequence in $O(n^c)$ for some $c>0$. 

\section{Problem Setup} \label{sec:model1}
%To formalize the problem setting, in this section, we describe in detail the mathematical setup and the performance metrics we consider in the rest of this paper.

%\subsection{Comparison Model}\label{sec:model}
We now describe the model which we will analyze subsequently. We assume that the observations used to learn the rankings are in the form of a limited number of pairwise comparisons over $n$ items.
In an attempt to reflect the adversarial crowdsourced setting of our interest in which there are two population sets---the \emph{faithful} and \emph{adversarial} sets---we adopt a comparison model introduced by Chen \etal~\cite{chen13:crowdsourcing}. This is a generalization of the BTL model~\cite{bradley1952rank, luce1959individual}. We delve into the details of the components of the model.
%The BTL model is a popular statistical model that has been applied to numerous applications for evaluating ranking accuracy of algorithms. We will now explain the BTL model and the variation that we consider for the adversarial crowdsourced setting in details.

%Furthermore, we assume that the pairwise comparison outcomes are generated according to a generalization of the so-called Bradley-Terry-Luce (BTL)~\cite{bradley1952rank, luce1959individual} model, which we coin the {\em crowdsourced-BTL model}~\cite{chen13:crowdsourcing}. The BTL model is a popular and basic model that has been applied to numerous applications for evaluating ranking accuracy of algorithms. We now explain the BTL model and the variation we consider for the crowdsourcing setting in detail.

%\begin{figure}[ht]
%%\vskip 0.2in
%\begin{center}
%\centerline{\includegraphics[width=0.7\columnwidth]{./}}
%\caption{Adversarial top-$K$ ranking from pairwise measurements.}
%\label{fig:model}
%\end{center}
%\vskip -0.2in
%\end{figure}

%\begin{figure}[!htp]
%%\vskip 0.2in
%\begin{center}
%\centerline{\includegraphics[width=0.7\columnwidth]{./figure/fig_model-eps-converted-to.pdf}}
%\caption{Adversarial top-$K$ ranking given samples $\bY=\{Y_{ij}^{(\ell)}\}$.}
%\label{fig:model}
%\end{center}
%\vskip -0.25in
%\end{figure}

\begin{figure}[t]
\begin{center}
{\epsfig{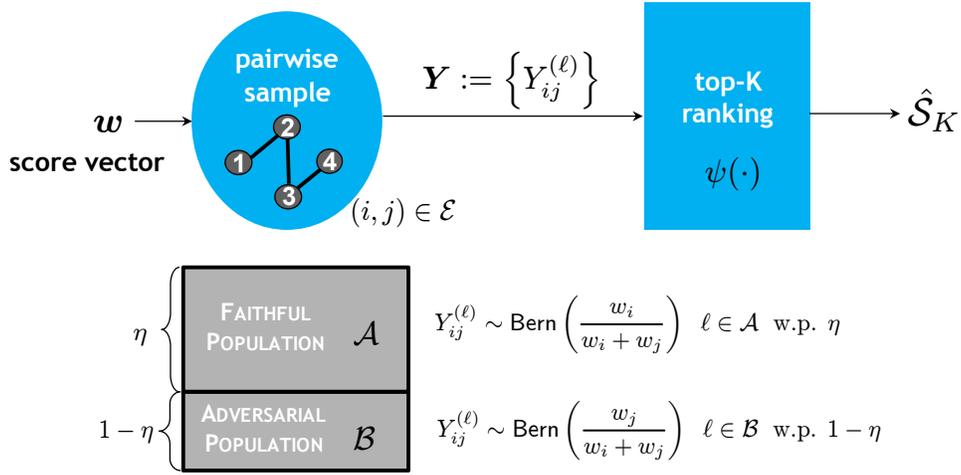}}
\end{center}
\vspace*{-0.1in}
\caption{Adversarial top-$K$ ranking given samples $\bY=\{Y_{ij}^{(\ell)}\}$ where $(i,j)\in\calE$ and $\calE$ is the edge set of an Erd\H{o}s-R\'enyi  random graph.}
\label{fig:model}
\vspace*{-0.1in}
\end{figure}

{\em Preference scores:} As in the standard BTL model,  this model postulates the existence of a ground-truth preference score vector $\bw = (w_1, w_2,\ldots, w_n)\in\bbR_+^n$. Each $w_i$ represents the underlying preference score of item $i$.
%The outcome of each paired comparison, say between item $i$ and item $j$ only depends on the relative scores of $w_i$ and $w_j$.
Without loss of generality, we assume that the scores are in non-increasing order:
\begin{equation}
w_1\ge w_2\ge \ldots\ge w_n> 0.   \label{eqn:decreasing}
\end{equation}
It is assumed that the dynamic range of the score vector is fixed irrespective of $n$:
\begin{equation}
w_i \in [w_{\min},w_{\max} ] , \qquad \forall\, i \in [n],
\end{equation}
for some positive constants $w_{\min}$ and $w_{\max} $. In fact, the case in which the ratio $\frac{w_{\max}}{w_{\min}}$ grows with $n$ can be readily translated into the  above  setting by first separating out those items with vanishing scores (e.g., via a simple voting method like Borda count~\cite{Borda1781,ammar-shah2011}).

{\em Comparison graph:} Let $\calG := ([n],\calE)$ be the comparison graph such that items $i$ and $j$ are compared by an annotator if the node pair $(i,j)$ belongs to the edge set $\calE$. We will assume throughout that the edge set $\calE$ is drawn in accordance to the Erd\H{o}s-R\'enyi (ER) model $\calG \sim {\cal G}_{n,p}$. That is node pair $(i,j)$ appears independently of any other node pair with an observation probability $\pobs \in(0,1)$.
    %Note that we allow $\pobs$ to scale with $n$ (typically decreasing with $n$) but we suppress this dependence for notational brevity throughout.

{\em Pairwise comparisons:} For each edge $(i,j) \in\calE$, we observe $L$  comparisons between $i$ and $j$.
Each outcome, indexed by $\ell \in [L]$ and denoted by $Y_{ij}^{ (\ell)}$, is drawn from a mixture of Bernoulli distributions weighted by an unknown parameter $\eta \in (1/2,1]$.  The $\ell$-th observation of edge $(i,j)$  has distribution ${\sf Bern}(\frac{w_i}{w_i + w_j})$ with probability $\eta$ and  distribution $ {\sf Bern} (\frac{w_j}{w_i + w_j})$ with probability $1- \eta$. Hence,
\begin{equation}
Y_{ij}^{(\ell)} \sim {\sf Bern} \left( \eta\frac{w_i}{w_i+w_j }+ (1-\eta) \frac{w_j}{w_i+w_j } \right).
\label{eqn:mixture_mode}
\end{equation}
See Fig.~\ref{fig:model}. %Here $Y_{ij}^{(\ell)}=1$ indicates the event that item $i$ is preferred over item $j$. Observe that when  $\eta=1$, the distribution of $Y_{ij}^{(\ell)}$ is exactly ${\sf Bern}( \frac{w_i}{w_i+w_j })$, corresponding to the standard BTL model.
When $\eta=1/2$, all the observations are fair coin tosses. In this case, no information can be gleaned about the rankings.
%We will see that in this extreme case, the sample complexity (to be defined precisely later) is infinity.
Thus we exclude this degenerate setting from our study. The case of $\eta \in [0, 1/2)$ is equivalent to the ``mirrored'' case of $1-\eta \in (1/2, 1]$ where we flip $0$'s to $1$'s and $1$'s to $0$'s. So without loss of generality, we assume that $\eta \in (1/2, 1]$. We allow $\eta$ to depend on $n$.
%, typically $\eta=1/2+\alpha_n$ for some positive  sequence $\alpha_n$.
%We emphasize that $\eta$ is unknown in general, although we gain intuition by first analyzing the known  $\eta$ setting. In the subsequent analyses,
%We consider two scenarios depending on the knowledge of $\eta$.
%and we then derive sample complexity bounds for each setting.

%\item {\em Multiple Pairwise Comparison:} %\blue{ (Vincent: We may want to replace this bullet point with the following sentences: It is assumed that conditional on ${\cal G}$, $Y_{ij}^{(\ell)}$'s are independent and identically distributed across all $\ell$'s. For ease of presentation, we introduce the collection of sufficient statistics as ... )}
%Multiple measurements are drawn from the graph $\calG := ([n],\calE)$.
Conditioned  on the graph $\calG$, the $Y_{ij}^{(\ell)}$'s are independent and identically distributed across all $\ell$'s, each according to the distribution of~\eqref{eqn:mixture_mode}.
% i.e.,% A total of $L \in\bbN$ independent  and identically distributed observations are drawn from each edge so
%\begin{equation}
%\Pr \left(Y_{ij}^{(1)} = y^{(1)} ,\ldots ,Y_{ij}^{(L)} = y^{(L)}   \,\big|\, \calG\right)=\prod_{l=1}^L P_{Y_{ij}^{ (l)}} (y^{(l)}),\quad\forall\, \big(y^{(1)},\ldots, y^{(L)} \big)\in\{0,1\}^L.
%\end{equation}
The collection of sufficient statistics is
\begin{equation}
Y_{ij} := \frac{1}{L}\sum_{\ell=1}^L Y_{ij}^{ (\ell)} ,\qquad\forall\,  (i,j) \in\calE. \label{eq:defSuffStat}
\end{equation}
The {\em per-edge number of samples} $L$ is measure of the quality of the measurements.
%We also find it convenient to introduce the notations $\bY_i  :=\{ Y_{ij} \}_{j : (i,j)\in \calE}$ and $\bY := \{Y_{ij}\}_{   (i,j)\in\calE}$.
We let  $\bY_i  :=\{ Y_{ij} \}_{j : (i,j)\in \calE}$, $\vec{Y}_{ij}:= \{Y_{ij}^{(\ell)} : \ell\in [L] \}$ and  $\bY := \{Y_{ij}\}_{   (i,j)\in\calE}$ be various statistics of the available data. 
%%\begin{equation}
%\begin{align}
%\bY_i &:=\{ Y_{ij} \}_{j : (i,j)\in \calE}; \; \vec{Y}_{ij}:= \left\{Y_{ij}^{(\ell)} : \ell\in [L] \right\}; \label{eqn:vecY}\\
%\bY&:= \left\{Y_{ij}^{(\ell)}: (i,j)\in\calE, \ell \in [L] \right\}. \label{eq:defDataSamples}
%\end{align}
%\end{equation}
%This represents the collection of sufficient statistics specifically for node $i\in [n]$.
%\item {\em Signal-To-Noise Ratio:} \tb{Needed??}

%\item {\em Dynamic Range of Preference Scores:}
%\end{itemize}

%\subsection{Performance Metric}\label{sec:metric}
{\em Performance metric:} We are interested in recovering the top-$K$ ranked items in the collection of $n$ items from the data  $\bY$.
%the collection of observed pairwise comparisons $\bY$.
%\begin{equation}
%\bY:= \left\{Y_{ij}^{(l)}: (i,j)\in\calE, l \in [L] \right\}. \label{eq:defDataSamples}
%\end{equation}
%The number of ranked items that we are interested is $K$. This number may also depend on the total number of items $n$.
We denote the true set of top-$K$ ranked items by $\calS_K$ which, by our ordering assumption, is the set $[K]$.
We would like to design a ranking scheme $\psi: \{0,1\}^{ |\calE| \times L}\to \binom{ [n ]}{K}$ that maps from the available measurements to a set of $K$ indices. %the set of size-$K$ subsets of the set $[n]$.
Given a ranking scheme $\psi$, the performance metric we consider is the {\em probability of error}
\begin{equation}
P_{\rme} (\psi ) := \Pr \left[ \psi( \bY ) \ne \calS_K \right].
\end{equation}
%which we would like vanishes as $n$ grows.
%Note that the set $\calS_K$ is fixed and deterministic so the probability is only over the randomness of $\bY$ and we do not have to distinguish between   maximum or average probability of error \blue{(Vincent: This sentence is a bit detailed, perhaps can be removed)}.
%For a fixed scaling of $(\pobs,L)$ with $n$,
%Let $\eps>0$ be an arbitrarily small constant.
%For a fixed $\epsilon>0$,
We consider the fundamental \emph{admissible region} ${\cal R}_{\boldsymbol{w}}$ of $(\pobs,L)$ pairs in which top-$K$ ranking is feasible for a given $\boldsymbol{w}$, i.e., $P_{\rme}(\psi)$ can be arbitrarily small for large enough $n$. %${\em sample complexity} as the smallest  $L$  such that
%there exists a  decoder  $\psi_n$ satisfying $P_{\rme} (\psi_n )<\epsilon$ for all $n$  large enough.
%for all $n$ large enough as $\calR$. % \blue{(Vincent: By convention, the sample size means the total number of samples $\binom{n}{2} p L$. Also it is not consistent with the one used in later parts.)}
 %When we consider a sequence of problems of growing sizes indexed by $n$,
 In particular, we are interested in the {\em  sample complexity}
\begin{equation}
\label{eq:sample_complexity}
%SS(n,K,\eta,\bw)
S_{\delta} := \inf_{p \in [0,1], L \in \mathbb{Z}^+ } \sup_{ \boldsymbol{a} \in \Omega_{ \delta} } \left\{ \binom{n}{2} p L : (p,L)\in\calR_{\boldsymbol{a}} \right\},
 \end{equation}
%This is a function of $n$ and $\eta$ as well as summary statistics of the preference vector $\bw$ such that asymptotic recoverability with arbitrarily small error probability is possible.
where $\Omega_{\delta}: = \{ \boldsymbol{a} \in \mathbb{R}^n: (a_{K} - a_{K+1})/a_{\max} \geq \delta \}$. Here we consider a \emph{minimax} scenario in which, given a score estimator,   nature can behave in an adversarial manner, and so she chooses the worst preference score vector that maximizes the probability of error under the constraint that the normalized score separation between the $K$-th and $(K+1)$-th items is at least $\delta$. Note that $\binom{n}{2}p$ is the expected number of edges of the ER graph so  $\binom{n}{2} p L$ is the expected number of pairwise samples  drawn from the model of our interest.
%when we draw $L$ samples from each edge $(i,j)\in\calE$.
  %i.e., that there exists a sequence of decoders $\psi_n$ such that
%\begin{equation}
%P_{\rme} (\psi_n)\to 0, \quad\mbox{as}\quad n \to\infty.
%\end{equation}
%This means that the probability of error is arbitrarily small as the number of items (as well as other problem size parameters) becomes large.
%In the following, we provide a computationally tractable algorithm as well as an impossibility result on the scaling of $L$. %Interestingly depsite our incognizance  of $\eta$, we  show that the achievable sample complexity is order-wise equal to the impossibility result.

\section{Main Results} \label{sec:main_res}
%In this section, we state our main results.

As suggested in~\cite{chen-suh:topKranking}, a crucial parameter  for successful top-$K$ ranking is the separation between the two items near the decision boundary, % i.e., the $K$-th and $(K+1)$-th ranked items:
%the sample complexity result hinges highly upon one key measure that reflects the distinguishability of the boundary item scores in the top-$K$ set:
%Our results hinge not only on the value of $\eta$ but also on the distinguishability of the scores of items in the top-$K$ set and the rest of the items. This is captured by the parameter
\begin{equation}
\Delta_K :=\frac{w_K-w_{K+1}}{w_{\max}}. \label{eqn:Delta_K}
\end{equation}
The sample complexity depends on $\bw$ and $K$ only  through $\Delta_K$---more precisely, it decreases as  $\Delta_K$ increases. Our contribution  is to identify relationships between $\eta$ and the sample complexity when $\eta$ is known and unknown.  We will see that the sample complexity increases as $\Delta_K$ decreases. This is intuitively true as $\Delta_K$ captures how distinguishable the top-$K$ set is from the rest of the items.

We assume that the graph $\mathcal{G}$ is drawn from the ER
model ${\cal G}_{n,p}$  with edge appearance probability $\pobs$.
We require $\pobs$ to satisfy
\begin{equation}
\pobs>\frac{\log n}{n}.
\end{equation}
From random graph theory, this implies that the graph is connected with high probability. If the graph were not connected,   rankings cannot be inferred~\cite{ford1957solution}. 

We start by considering the $\eta$-known scenario in which key ingredients for ranking algorithms and analysis can be easily digested, as well as which forms the basis for the $\eta$-unknown setting.

\begin{theorem}[Known $\eta$]
\label{thm:etaknown}
Suppose that $\eta$ is known and ${\cal G} \sim {\cal G}_{n,p}$. Also assume that $L = O ( \poly (n))$ and $Lnp \geq \frac{c_0}{(2 \eta -1)^2} \log n$. Then with probability   $\ge 1 - c_1 n^{-c_2}$, the  set of top-$K$ set can be identified exactly provided
\begin{align}
L \geq c_3 \frac{ \log n}{ (2 \eta-1)^2 n p \Delta_K^2 }.
\end{align}
Conversely, for a fixed $\epsilon \in (0,\frac{1}{2})$, if
\begin{align}
L \leq c_4 \frac{ (1- \epsilon) \log n}{ (2 \eta-1)^2 n p \Delta_K^2 } \label{eqn:converse}
\end{align}
holds, then for any top-$K$ ranking scheme $\psi$, there exists a preference vector $\boldsymbol{w}$ with separation $\Delta_K$ such that $P_{\rme} (\psi) \geq \epsilon$. Here, and in the following, $c_i>0, i \in \{0,1,\ldots, 4\}$ are finite universal constants.
\end{theorem}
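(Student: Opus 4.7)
The starting point is to observe that \eqref{eqn:mixture_mode} rearranges as
\begin{equation}
\bbE[Y_{ij}^{(\ell)}] = (2\eta-1)\frac{w_i}{w_i+w_j} + (1-\eta),
\end{equation}
so the affine rescaling $\tilY_{ij} := (Y_{ij}-(1-\eta))/(2\eta-1)$ is an unbiased estimator of the standard BTL win probability $p_{ij}:=w_i/(w_i+w_j)$, whose per-edge variance is inflated by a factor of $(2\eta-1)^{-2}$ relative to the $\eta=1$ case. Thus the adversarial BTL model with per-edge count $L$ behaves, from the viewpoint of score estimation, like a standard BTL model with an effective per-edge count $L_{\mathrm{eff}}=(2\eta-1)^2 L$. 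Substituting $L_{\mathrm{eff}}$ into the bound of~\cite{chen-suh:topKranking} predicts the scaling in Theorem~\ref{thm:etaknown}; the proof consists of porting their analysis rigorously to this setting.

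\textbf{Achievability.} The plan is to feed $\{\tilY_{ij}\}_{(i,j)\in\calE}$ (equivalently $\{Y_{ij}\}$ with rescaled transition kernels) into the SpectralMLE pipeline of~\cite{chen-suh:topKranking}. First I would run RankCentrality~\cite{Negahban2012}: build a reversible Markov chain from $\tilY_{ij}$, compute its stationary distribution, and invoke the $\ell_2$ perturbation analysis to show it approximates $\bw/\Vert\bw\Vert_1$. Then I would apply the coordinate-wise MLE refinement of~\cite{chen-suh:topKranking} to sharpen the guarantee in $\ell_\infty$. The two key steps to redo are (i) concentration of the empirical transition kernel around its mean, and (ii) local curvature/perturbation analysis of the MLE step; in both, the only modification relative to~\cite{chen-suh:topKranking} is that the concentration radius is inflated by $1/(2\eta-1)$. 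Tracing the constants through yields
\begin{equation}
\Vert\hatbw-\bw\Vert_\infty/w_{\max} \lesssim \sqrt{\frac{\log n}{(2\eta-1)^2 npL}},
\end{equation}
and the top-$K$ set is recovered exactly as soon as this is below $\Delta_K/2$. The hypothesis $Lnp(2\eta-1)^2\gtrsim \log n$ is precisely what is needed to invoke the spectral-gap and graph-Laplacian estimates of~\cite{Negahban2012} on ${\cal G}_{n,p}$ uniformly with high probability.

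\textbf{Converse.} For the lower bound I would use a Fano / Le Cam reduction. Construct a family $\{\bw^{(k)}\}_{k=K+1}^n$ of $n-K$ score vectors, each obtained from $\bw$ by swapping items $K$ and $k$ across the decision boundary, with all members remaining in $\Omega_\delta$. Any scheme that identifies $\calS_K$ must correctly distinguish which member produced the data. The induced KL divergence factorizes across edges and samples, and for a single edge
\begin{equation}
\mathrm{KL}\bigl(\Ber(p^{\mathrm{adv}}_{ij})\,\Vert\,\Ber(q^{\mathrm{adv}}_{ij})\bigr) \asymp \frac{(p^{\mathrm{adv}}_{ij}-q^{\mathrm{adv}}_{ij})^2}{p^{\mathrm{adv}}_{ij}(1-p^{\mathrm{adv}}_{ij})} = (2\eta-1)^2\,\frac{(p_{ij}-q_{ij})^2}{p^{\mathrm{adv}}_{ij}(1-p^{\mathrm{adv}}_{ij})},
\end{equation}
because the adversarial mixture contracts the mean gap by $(2\eta-1)$. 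Summing over the $\Theta(np)$ informative edges and $L$ samples per edge bounds the mutual information by $\Theta\bigl(npL(2\eta-1)^2\Delta_K^2\bigr)$; for Fano's inequality to permit vanishing error over $\asymp n$ hypotheses this must be $\gtrsim \log n$, yielding~\eqref{eqn:converse}.

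\textbf{Main obstacle.} I expect the KL bookkeeping in the converse to be routine, and the heavy lifting to lie in the achievability: the Chen-Suh $\ell_\infty$ spectral analysis was designed for Bernoulli frequencies bounded in $[0,1]$, whereas the rescaled $\tilY_{ij}$ can fall outside this interval and is no longer Bernoulli. The cleanest route, I believe, is to keep the analysis in terms of the original $Y_{ij}$ (still Bernoulli, still in $[0,1]$) and track how the effective signal $\bbE[Y_{ij}]-1/2=(2\eta-1)(p_{ij}-1/2)$ shrinks; this way concentration continues to be controlled by a standard Bernoulli Bernstein bound, while the $(2\eta-1)^{-2}$ penalty surfaces naturally in both the spectral-gap step of~\cite{Negahban2012} and the local MLE perturbation step of~\cite{chen-suh:topKranking}.
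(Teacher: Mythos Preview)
Your proposal is correct and follows essentially the same route as the paper: shift the observations via $\tilY_{ij}=(Y_{ij}-(1-\eta))/(2\eta-1)$, run RankCentrality for an $\ell_2$ initialization, refine coordinatewise by MLE to get the $\ell_\infty$ bound $\|\hatbw-\bw\|_\infty\lesssim (2\eta-1)^{-1}\sqrt{\log n/(npL)}$, and for the converse apply Fano over hypotheses obtained by a single swap across the boundary, with the per-edge KL picking up a $(2\eta-1)^2$ factor. Your ``main obstacle'' remark is exactly how the paper resolves it: the spectral step works with $\tilY_{ij}$ (Hoeffding with the enlarged range $\Theta((2\eta-1)^{-1})$ gives the inflated concentration radius), while the MLE step is written directly in terms of the original Bernoulli $Y_{ij}$ via the mixture likelihood, so boundedness is never an issue there. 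One small caveat: your converse packing $\{\bw^{(k)}\}_{k=K+1}^n$ has $n-K$ members and thus only yields $\log(n-K)$ in Fano, which is $\Theta(\log n)$ only when $K$ is bounded away from $n$; the paper handles both regimes by taking $M=\min\{K,n-K\}+1$ hypotheses (swapping into or out of the top-$K$ set depending on whether $K<n/2$).
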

\begin{proof}
See Section~\ref{sec:ProofofTheorem1} for the algorithm and a sketch of the achievability proof (sufficiency). The proof of the converse (impossibility part) can be found in Section~\ref{sec:converse_etaknown}.
\end{proof}
This theorem asserts that the sample complexity scales as
\begin{equation}
S_{\Delta_K} \asymp   \frac{ n \log n }{ \left( 2\eta - 1 \right)^{{2}} \Delta_K^2 }.\label{eq:MinSampleComplexity}
\end{equation}
This result recovers that for the faithful scenario where $\eta=1$ in~\cite{chen-suh:topKranking}. When $\eta- \frac{1}{2}$ is uniformly bounded above $0$, we  achieve the same order-wise sample complexity. This suggests that the ranking performance is not substantially worsened if the sizes of the two populations are sufficiently distinct. For the challenging scenario in which $\eta \approx \frac{1}{2}$, the sample complexity depends on how $\eta - \frac{1}{2}$ scales with $n$.  Indeed, this dependence is quadratic. This theoretical result will be validated by experimental results in Section~\ref{sec:experiments}. Several other remarks are in order. %This scaling behavior is well demonstrated by our result, asserting that the sample complexity scales inversely with $2 \eta -1$ at a quadratic rate.

%Several other remarks are in order.
%\end{remark}

{\em No computational barrier:} Our proposed algorithm  is based primarily upon two popular ranking algorithms: spectral methods and MLE, both of which enjoy nearly-linear time complexity in our ranking problem context. Hence, the information-theoretic limit promised by~\eqref{eq:MinSampleComplexity} can be achieved by a computationally efficient   algorithm.

{\em Implication of the minimax lower bound:} The minimax lower bound continues to hold when $\eta$ is unknown, since we can only do better for the $\eta$-known scenario, and hence the lower bound is also a lower bound in the $\eta$-unknown scenario.

{\em Another adversarial scenario:} Our results readily generalize to another adversarial scenario in which samples drawn from the adversarial population are {\em completely noisy}, i.e., they follow the distribution ${\sf Bern}(\frac{1}{2})$. With a slight modification of our proof techniques, one can easily verify that the sample complexity is on the order of  $\frac{ n \log n}{ \eta^2  \Delta_{K}^2}$ if $\eta$ is known. This will be evident after we describe the algorithm in Section~\ref{sec:ProofofTheorem1}.

%{\em  From top-$K$ partitioning to detailed ranking:} One can also extend the result into a more general ranking scenario in which we wish to infer detailed rankings among the top-$K$ items. A natural ranking algorithm includes a successive identification starting from top-$K$ items, top-$(K-1)$ items, and all the way to the top item, which would require the sampling complexity\footnote{Careful readers may note that the definition needs to slightly adapted to pertain to the minimax scenario in this case.} to be on the order of $\frac{ n \log n}{ (2 \eta -1)^2  \min_{ i \in [K] } \Delta_{i}^2 }$ where $\Delta_{i}:= \frac{ w_i - w_{i+1}}{w_{\max}}$. By adapting the minimax lower bound proof, one can show that this is also necessary.

%\item {\em Non-adaptive vs. Adaptive Sampling:} Our work considers a non-adaptive random sampling scenario. In comparison, for the adaptive sampling case, one can expect possibly lower sample complexity especially when score differences $\Delta_{i}$ across consecutive items are significantly distinct, as a careful choice of comparison pairs will matter in the case (see~\cite[Theorem 1]{Busa-Fekete:topKselection}, explored under a different measurement model). On the other hand, under the minimax scenario in which the worst-case score vector can arbitrarily be chosen in an adversarial manner, we expect similar sample complexity performances between both sampling cases, although this needs to be carefully examined. The reason is that there may exist a worst-case score vector such that adaptive sampling does not help further.
%\end{itemize}

\begin{theorem}[Unknown $\eta$]\label{thm:etaunknown}
Suppose that $\eta$ is unknown and $\mathcal{G} \sim {\cal G}_{n,p}$. Also assume that $L = O (  \poly(n))$ and $Lnp \geq \frac{c_0 }{(2 \eta -1)^4} \log^2 n$.
%\begin{equation}
%\Delta_K =\Omega\bigg( \frac{1}{\sqrt{L \cdot \log n}}\bigg)  \label{eqn:cond_DeltaK0}
%\end{equation}
Then with probability   $\ge 1 - c_1 n^{-c_2}$, the top-$K$ set can be identified exactly provided
\begin{align}
L \geq c_3 \frac{ \log^2 n}{ (2 \eta-1)^4 n p \Delta_K^4 }.
\end{align}
%2,c_3>0$ are some universal constants.
\end{theorem}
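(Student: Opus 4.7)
The plan is to execute the two-stage procedure foreshadowed in the introduction. First, estimate the unknown mixing weight $\eta$ via a tensor decomposition to obtain $\hat{\eta}$; second, substitute $\hat{\eta}$ for $\eta$ throughout the algorithm that proved Theorem~\ref{thm:etaknown} and re-run the $\ell_\infty$ perturbation analysis with the added deterministic bias accounted for. Since the information-theoretic converse of Theorem~\ref{thm:etaknown} continues to apply when $\eta$ is hidden (an algorithm cannot do strictly better without the side information), only the achievability direction needs to be proved.

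For the first stage, I would specialize the tensor decomposition framework of Jain--Oh~\cite{JO14}, which itself builds on the robust tensor power method of Anandkumar \etal~\cite{AGHKT}. The observation model~\eqref{eqn:mixture_mode} is a two-component mixture of Bernoullis, so by aggregating triples of pairwise outcomes on triangles of the comparison graph one can build empirical second- and third-order cross-moment tensors whose population versions admit a symmetric rank-two decomposition in which $\eta$ appears explicitly as the mixing coefficient. Instantiating the perturbation analysis of~\cite{JO14} on these tensors, while exploiting the $\Theta(n^3 p^3)$ triangles available in the Erd\H{o}s--R\'enyi graph, yields a high-probability bound of the form
\begin{equation*}
|\hat{\eta}-\eta| \,\lesssim\, \varepsilon_\eta(n,p,L,\eta,\Delta_K)
\end{equation*}
that degrades gracefully as $2\eta-1 \to 0$ (the two mixture components merge) and improves with the total sample budget. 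Extracting the explicit polynomial dependence of $\varepsilon_\eta$ on the problem parameters is the main technical work, and it is precisely what the paper means by ``specializing the relevant results of \cite{JO14} to our setting.''

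For the second stage, I would feed $\hat{\eta}$ into the Theorem~\ref{thm:etaknown} algorithm; that algorithm debiases each empirical frequency $Y_{ij}$ into a proxy for the underlying BTL probability $w_i/(w_i+w_j)$ and then runs \emph{SpectralMLE} on the resulting matrix. Replacing $\eta$ by $\hat{\eta}$ perturbs each proxy by a deterministic amount on the order of $|\hat{\eta}-\eta|/(2\eta-1)$; re-tracing the $\ell_\infty$ leave-one-out and spectral bounds from the achievability proof of Theorem~\ref{thm:etaknown} with this extra term included shows that exact top-$K$ recovery is preserved as long as the sum of the statistical fluctuation and this bias remains small compared to $\Delta_K$. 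Imposing this condition and substituting the rate $\varepsilon_\eta$ from the first stage yields the claimed sufficient sample size $L \gtrsim \log^2 n/((2\eta-1)^4 np\Delta_K^4)$; the multiplicative overhead of $\log n/((2\eta-1)^2 \Delta_K^2)$ relative to Theorem~\ref{thm:etaknown} is exactly the cost of having to learn $\eta$ to a precision commensurate with $\Delta_K$ before the ranking step can distinguish the top-$K$ items.

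The principal obstacle is the first stage. Concretely, one must (i) verify that the triangle-based moment tensors built on a sparse ER graph satisfy the identifiability and non-degeneracy hypotheses required by the tensor power method of~\cite{AGHKT,JO14} uniformly in $n$; (ii) concentrate the empirical tensors around their expectations with the correct polynomial dependence on $n$, $p$, $L$, and $2\eta-1$; and (iii) push these concentration bounds through the condition-number-type perturbation results in~\cite{AGHKT} without losing the right scaling in $2\eta-1$. Once $\varepsilon_\eta$ is controlled, the second stage reuses essentially verbatim the spectral and leave-one-out machinery developed for Theorem~\ref{thm:etaknown}, the only new ingredient being a routine absorption of the deterministic $\hat{\eta}$-bias into the existing $\ell_\infty$ perturbation bookkeeping.
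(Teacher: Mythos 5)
Your two-stage skeleton (estimate $\eta$ by specializing Jain--Oh, then plug $\heta$ into the known-$\eta$ pipeline and redo the $\ell_\infty$ analysis) is indeed the paper's plan, but the quantitative mechanism that actually produces the stated bound is missing or mis-stated in two places. First, you locate the $(2\eta-1)$ degradation in the wrong stage: you assert that the first-stage error $\varepsilon_\eta$ ``degrades as $2\eta-1\to 0$ because the mixture components merge,'' and that the induced bias in the debiased proxies is of order $|\heta-\eta|/(2\eta-1)$. In the paper the fidelity of $\heta$ is essentially $\eta$-free: Lemma~\ref{lem:fidelity}, via Lemma~\ref{lem:scale} (which exploits $\|\pi_0\|=\|\pi_1\|$ to get $\sigma_i(M_2)=\Theta(|\calE|)$ and $\mu(M_2)=\Theta(1)$), reduces Theorem~\ref{thm:jo} to the Chernoff-type requirement $L\gtrsim \eps^{-2}\log n$. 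The entire $\eta$-penalty enters downstream, because the debiasing maps $t\mapsto 1/(2t-1)$ and $t\mapsto (1-t)/(2t-1)$ have Lipschitz constants $\asymp (2\eta-1)^{-2}$ near $1/2$ (Lemma~\ref{lem:est_eta}), so a precision $\eps$ on $\heta$ inflates to a bias $\asymp \eps/(2\eta-1)^{2}$ on the shifted observations. The proof then \emph{balances} the target precision, choosing $\eps_L\asymp (2\eta-1)\sqrt[4]{\log^2 n/(npL)}$, which is exactly what yields the fourth-root spectral bound $\|\hat{\DDelta}\|\lesssim \frac{1}{2\eta-1}\sqrt[4]{\log^2 n/(npL)}$, the modified threshold $\hat{\xi}_t$ in \eqref{eqn:hatxi_t}, the $\ell_\infty$ bound \eqref{eqn:linfty_unknown}, and hence the $(2\eta-1)^4\Delta_K^4$ and $\log^2 n$ in the theorem. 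Your $1/(2\eta-1)$ accounting together with the heuristic ``learn $\eta$ to precision commensurate with $\Delta_K$'' would instead give a condition of the shape $L\gtrsim \log n/((2\eta-1)^{4}\Delta_K^{2})$ with no $np$ and the wrong $\Delta_K$ power, so the claimed bound does not follow from your argument. Relatedly, the second stage is not a ``routine absorption'': recovering the pointwise-MLE bound for the surrogate likelihood requires the uniform multiplicative approximation lemmas (Lemmas~\ref{lem:est_ratio} and~\ref{lem:est_eta1}) behind Lemma~\ref{lemma:l2vslinfty_est}, and the new initialization bound is Lemma~\ref{lem:l2_error_unknown}; this is the bulk of the new work in Appendices~\ref{app:lemma:l2vslinfty_est} and~\ref{app:prf_upsilon_unknown}.

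Second, your first-stage construction differs from the paper's in a way you would have to justify from scratch. You propose third-order cross-moments over \emph{triangles} of the comparison graph (invoking the $\Theta(n^3p^3)$ triangles of the ER graph); the paper instead stacks all observed pairs into the coupled $2|\calE|$-dimensional vectors $\pi_0$ and $\pi_1=\bone_{2|\calE|}-\pi_0$ in \eqref{eqn:pi0}, treats the $L$ repetitions as the i.i.d.\ draws (with sample splitting, Algorithm~\ref{alg:est_eta}), and applies Theorem~\ref{thm:jo} directly, so no triangle counting or bespoke tensor concentration is needed. A triangle-based third moment only factorizes into the required rank-two form if the latent faithful/adversarial label is shared across the three edges of the triangle, a conditional-independence structure that the per-sample mixture \eqref{eqn:mixture_mode} does not obviously supply and that you do not verify; items (i)--(iii) that you flag as ``the main technical work'' are precisely what the paper's Lemmas~\ref{lem:fidelity} and~\ref{lem:scale} settle, and in a way (via $\|\pi_0\|=\|\pi_1\|$) that your triangle construction would not inherit automatically.
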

\begin{proof}
 See Section~\ref{sec:ProofofTheorem2} for the key ideas in the proof.
\end{proof}
This theorem implies that the sample complexity satisfies
\begin{equation}
S_{\Delta_K} \lesssim  \frac{ n \log^2 n }{ \left( 2\eta - 1 \right)^{{4}} \Delta_K^{{4}} }.\label{eq:MinSampleComplexity_unknown}
\end{equation}
This bound is worse than~\eqref{eq:MinSampleComplexity}---the inverse dependence on $(2 \eta-1)^2 \Delta_{K}^2$ is now an inverse dependence on $(2 \eta -1)^4 \Delta_K^4$. This is because our algorithm involves estimating $\eta$, incurring some loss. Whether this loss is fundamentally unavoidable  (i.e., whether the algorithm is order-wise optimal or not) 
is open. See detailed discussions in Section \ref{sec:conclusion}. Moreover, since  the estimation of $\eta$ is based on tensor decompositions with polynomial-time complexity, our algorithm for the $\eta$-unknown case is also, in principle, computationally efficient.
Note that minimax lower bound in~\eqref{eq:MinSampleComplexity}   also serves as a lower bound in the $\eta$-unknown scenario.
%A key distinction relative to the $\eta$-known case lies in the addition of the $\eta$-estimation procedure.
%Finally, as in the $\eta$-known scenario, one can readily extend this result to other settings.
% and adaptive sampling.

%\begin{remark}[Comparison to the known $\eta$ scenario]
%Why NOT the same scaling? Provide an intuition behind the result.
%\end{remark}
%
%\begin{remark}
%The preconstant factor might be different.
%\end{remark}

%\blue{(Vincent: The following paragraph and Theorem 3 might not be needed)}
%We start with a minimax lower bound result that applies to the known $\eta$ setting. Clearly, this result also holds for the case where $\eta$ is unknown since  there is some additional side-information to the decoder.
%
%\begin{theorem}[Minimax Lower Bound] \label{thm:lb} Fix $\epsilon \in (0,1/2)$. If
%\begin{equation}
%L\le c\frac{ (1-\epsilon )  \log n   }{ (2\eta-1)^2 n \pobs\Delta_K^2 }, \label{eqn:minimax_statement}
%\end{equation}
%holds for some constant $c>0$ (dependent only on $w_{\min}$ and $w_{\max}$ which do not scale with $n$), then for any sequence of ranking schemes (decoders) $\psi_n$, there exists a sequence of preference vectors $\bw^{(n)} \in\bbR_+^n$ with separation $\Delta_K$ (defined in \eqref{eqn:Delta_K}) such that
%\begin{equation}
%\liminf_{n\to\infty} P_{\rme} (\psi_n)\ge \epsilon.
%\end{equation}
%\end{theorem} 

\section{Algorithm and Achievability Proof of Theorem~\ref{thm:etaknown}}
\label{sec:ProofofTheorem1}

\begin{figure}[t]
\begin{center}
{\epsfig{figure=./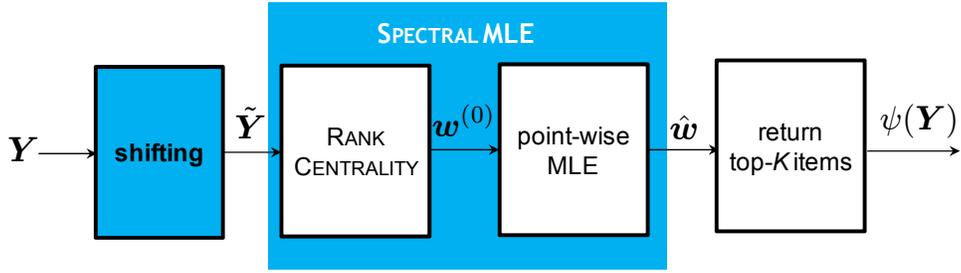, angle=0, width=0.7\textwidth}}
\end{center}
\vspace*{-0.1in}
\caption{
Ranking algorithm for the $\eta$-known scenario:
(1) shifting the empirical mean of pairwise measurements to get $\tilde{Y}_{ij} = \frac{ Y_{ij} - (1 -\eta) }{ 2 \eta -1 }$, which converges to $\frac{w_i}{w_i + w_j}$ as $L \rightarrow \infty$;
(2) performing SpectralMLE~\cite{chen-suh:topKranking} seeded by $\tilde{\boldsymbol{Y} }$ to obtain a score estimate $\hat{\boldsymbol{w}}$;
(3) return a ranking based on the estimate $\hat{\boldsymbol{w}}$. Our analysis reveals that the $\ell_{\infty}$ norm bound w.r.t. $\hat{\boldsymbol{w}}$ satisfies $\|  \hat{ \boldsymbol{w} } - \boldsymbol{w} \|_{\infty} \lesssim \frac{1}{2 \eta -1} \sqrt{ \frac{ \log n }{ npL} }$, which in turn ensures $P_e \rightarrow 0 $ under $\Delta_{K} \succsim \frac{1}{2 \eta -1} \sqrt{ \frac{ \log n }{ npL} }$.
}
\label{fig:etaknown}
\vspace*{-0.1in}
\end{figure}

%\begin{figure}[t]
%\begin{center}
%\centerline{\includegraphics[width=1.0\columnwidth]{./figure/fig_etaknown-eps-converted-to.pdf}}
%%{\epsfig{figure=./figure/fig_etaknown-eps-converted-to.eps, angle=0, width=0.7\textwidth}}
%\end{center}
%\vspace*{-0.1in}
%\caption{
%Ranking algorithm for the known $\eta$ scenario:
%(1) shifting the empirical mean of pairwise measurements to get $\tilde{Y}_{ij} = \frac{ Y_{ij} - (1 -\eta) }{ 2 \eta -1 }$, which converges to $\frac{w_i}{w_i + w_j}$ as $L \rightarrow \infty$;
%(2) performing SpectralMLE~\cite{chen-suh:topKranking} seeded by $\tilde{\boldsymbol{Y} }$ to obtain a score estimate $\hat{\boldsymbol{w}}$;
%(3) return a ranking based on the estimate $\hat{\boldsymbol{w}}$. Our analysis reveals that the $\ell_{\infty}$ norm bound w.r.t. $\hat{\boldsymbol{w}}$ satisfies $\|  \hat{ \boldsymbol{w} } - \boldsymbol{w} \|_{\infty} \lesssim \frac{1}{2 \eta -1} \sqrt{ \frac{ \log n }{ npL} }$, which in turn ensures $P_e \rightarrow 0 $ under $\Delta_{K} \succsim \frac{1}{2 \eta -1} \sqrt{ \frac{ \log n }{ npL} }$.
%}
%\label{fig:etaknown}
%\vspace*{-0.1in}
%\end{figure}

%\subsection{Proof of Achievability} \textcolor{red}{VT---This is a standalone subsection??}

\subsection{Algorithm Description}\label{sec:algo}

Inspired by the consistency between the preference scores $\bw$ and ranking under the BTL model, our scheme also adopts a two-step approach where $\bw$ is first estimated and then the top-$K$ set is returned. %Under the two-step approach,   it suffices to develop an algorithm for  estimating the score vector.% and hence we focus on developing the estimation algorithm only.

Recently a top-$K$ ranking algorithm \emph{SpectralMLE}~\cite{chen-suh:topKranking} has been developed for %as a score estimation method for
the faithful scenario and it  is shown to have  order-wise optimal sample complexity. The algorithm yields a small $\ell_{\infty}$ loss of the score vector $\bw$ which  ensures a small point-wise estimate error.
Establishing a key relationship between the $\ell_{\infty}$ norm error and top-$K$ ranking accuracy, Chen and Suh~\cite{chen-suh:topKranking} then identify an order-wise tight bound on the $\ell_{\infty}$ norm error required for top-$K$ ranking, thereby characterizing the sample complexity. Our ranking algorithm builds on SpectralMLE, which proceeds in two stages: (1) an appropriate initialization that concentrates around the ground truth in an $\ell_2$ sense, which can be obtained via spectral methods~\cite{Negahban2012,Dwork2001,brin1998anatomy}; (2) a sequence of $T$ iterative updates sharpening the estimates in a point-wise manner using MLE.%, which consists in computing point-wise MLE solutions.  % which guarantees high top-$K$ ranking accuracy.% To see how our algorithm works, let us start by reviewing the operation of SpectralMLE.

%\iffalse
%{\bf Review of Spectral MLE~\cite{chen-suh:topKranking}}: It proceeds in two stages: (1) an appropriate initialization that concentrates around the ground truth in an $\ell_2$ sense, which can be obtained via spectral methods~\cite{Negahban2012,Dwork2001,brin1998anatomy}; (2) a sequence of iterative updates sharpening the estimates in a point-wise manner, which consists in computing point-wise MLE solutions.
%The first stage  ensures a reasonably small $\ell_2$ loss as spectral ranking (suggested to be employed) is well known to discover a score vector that incurs minimal $\ell_2$ loss. To enable high ranking accuracy, however, it is required to obtain an estimate that is faithful in a component-wise sense. This motivates the second stage which intends to minimize coordinate-wise estimate error via point-wise maximum likelihood estimation (MLE). Specifically the second stage cyclically iterates through each component, one at a time, maximizing the log-likelihood function with respect to that component. It then replaces the preceding estimate with the new point-wise MLE only when they are far apart. Here the replacement  threshold that determines the level of being far part is carefully chosen such that it is gradually decreased as iteration increases, and the choice is based on the following two quantities: maximum $\ell_\infty$ loss incurred in the first stage; and a desirable $\ell_\infty$ loss that we intend to achieve at the end of the second stage.\fi

We observe that \emph{RankCentrality}~\cite{Negahban2012} can be employed as a spectral method in the first stage. In fact, RankCentrality exploits the fact that the empirical mean $Y_{ij}$ converges to the relative score $\frac{w_i}{w_i + w_j}$ as $L\to\infty$. This motivates the use of the empirical mean for constructing the transition probability from   $j$ to   $i$ of a Markov chain. %, as the construction yields a stationary distribution that coincides with $\bw$ as  $L\to\infty$.
Note that the detailed balance equation $\pi_i \frac{w_j}{w_i + w_j} = \pi_j \frac{w_i}{w_i + w_j}$ that holds  as  $L\to\infty$ will enforce that the stationary distribution of the Markov chain is identical to $\bw$ up to some constant scaling. Hence, the stationary distribution is expected to serve as a reasonably good global score estimate. However, in our problem setting where $\eta$ is not necessarily $1$, the empirical mean does not converge to the relative score, instead it behaves as
\begin{align}
Y_{ij} \,\, \stackrel{L \rightarrow \infty}{\longrightarrow }\,\,\eta \frac{w_i}{w_i + w_j} + (1-\eta) \frac{w_j}{w_i + w_j} .
\end{align}
Note, however, that the limit is linear in the desired relative score and   $\eta$, %: $Y_{ij} \rightarrow (2 \eta-1) \frac{w_i}{w_i + w_j} + 1- \eta$,
 implying that knowledge of $\eta$ leads to the relative score.  A natural idea then arises. We construct a shifted version of the empirical mean:
\begin{align}
\tilde{Y}_{ij} : = \frac{ Y_{ij} - (1- \eta) }{ 2\eta -1}  \,\,\stackrel{L \rightarrow \infty}{\longrightarrow } \,\, \frac{w_i}{w_i + w_j}, \label{eqn:def_tildeY}
\end{align}
and take this as an input to \emph{RankCentrality}. This   then forms a Markov chain that yields a stationary distribution that is proportional to $\bw$ as $L\to\infty$ and hence a good estimate of the ground-truth score vector when $L$ is large.
 This serves as a good initial estimate to the second stage of \emph{SpectralMLE} as it guarantees a   small point-wise error.% , thus achieving the  sample complexity  claimed in Theorem~\ref{thm:etaknown}.

A formal and more detailed description of the procedure is summarized in Algorithm~\ref{Algorithm:SpectralMLE}. For completeness, we also include the procedure of \emph{RankCentrality} in Algorithm~\ref{Algorithm:RC}. Here we emphasize two distinctions w.r.t.\ the second stage of \emph{SpectralMLE}. First, the computation of the pointwise MLE w.r.t.\  say, item $i$, requires   knowledge of $\eta$:
\begin{align}
\label{eq:likelihoodfunction}
{\cal L}  \big( \tau, \boldsymbol{w}_{ \backslash i}^{(t)}; \boldsymbol{Y}_i \big)     =     
 \prod_{j: (i,j) \in {\cal E} } \Bigg[\bigg (  \eta \frac{ \tau }{ \tau  +  w_j^{(t)} }  +  (1 - \eta) \frac{ w_j^{(t)} }{ \tau  +  w_j^{(t)} }  \bigg)^{Y_{ij}} \bigg ( \eta \frac{ w_j^{(t)}  }{ \tau + w_j^{(t)} } + (1-\eta) \frac{ \tau }{ \tau + w_j^{(t)} }  \bigg)^{1- Y_{ij}}\Bigg].
\end{align}
Here, $  {\cal L}( \tau, \boldsymbol{w}_{ \backslash i}^{(t)}; \boldsymbol{Y}_i )$ is the  profile  likelihood of the preference score vector $[w_1^{(t)}, \cdots, w_{i-1}^{(t)}, \tau, w_{i+1}^{(t)}, \cdots, w_n^{(t)}]$ where $\boldsymbol{w}^{(t)}$ indicates the preference score estimate in the $t$-th iteration, $\boldsymbol{w}_{ \backslash i}^{(t)}$ denotes the score estimate excluding the $i$-th component, and $\boldsymbol{Y}_i$ is the data available at node $i$.
 The second difference is the use of  a different   threshold $\xi_t$ which incorporates the effect of $\eta$:
\begin{align}
\label{eq:xit}
   \xi_t  :=   \frac{c}{ 2 \eta  -  1 } \left \{ \sqrt{ \frac{ \log n }{ npL} }  +  \frac{1}{2^t}  \left(   \sqrt{ \frac{ \log n }{p L} }  -  \sqrt{ \frac{ \log n }{ npL} } \right) \right \},
\end{align}
where $c>0$ is a constant.
This threshold is used to decide whether $w_{i}^{(t+1)}$ should be  set to be the pointwise MLE $w_{i}^{{\sf mle}}$ in \eqref{eqn:ptwise_mle} (if $|w_{i}^{{\sf mle}}-w_{i}^{(t )}|>\xi_t$) or remains as $w_{i}^{(t)}$ (otherwise).
The design of $\xi_t$ is based on (1) the $\ell_\infty$ loss incurred in the first stage; and (2) a desirable $\ell_\infty$ loss that we intend to achieve at the end of the second stage. Since these two values are different, $\xi_t$ needs to be adapted accordingly. % We find that the above choice leads to a desired $\ell_{\infty}$ loss that we will identify shortly.
  Notice that the computation of $\xi_t$ requires   knowledge of $\eta$. The two modifications in \eqref{eq:likelihoodfunction} and \eqref{eq:xit} result in a more complicated analysis vis-\`a-vis Chen and Suh~\cite{chen-suh:topKranking}.

\begin{algorithm*}[t]
%\caption{Modified Spectral MLE~\cite{chen-suh:topKranking} \label{alg:Nonconvex}}
\caption{Adversarial top-$K$ ranking for the $\eta$-known scenario \label{alg:Nonconvex}}
\label{Algorithm:SpectralMLE}%
\begin{tabular}{>{\raggedright}p{1\textwidth}}
\textbf{Input}: The average comparison outcome $Y_{ij}$ for all
$(i,j)\in\mathcal{E}$; the score range $\left[w_{\min},w_{\max}\right]$.\vspace{0.7em}\tabularnewline
\textbf{Partition $\mathcal{E}$} randomly into two sets $\mathcal{E}^{\mathrm{init}}$
and $\mathcal{E}^{\mathrm{iter}}$ each containing $\frac{1}{2}\left|\mathcal{E}\right|$
edges. Denote by $\boldsymbol{Y}_{i}^{\mathrm{init}}$ (resp. $\boldsymbol{Y}_{i}^{\mathrm{iter}}$)
the components of $\boldsymbol{Y}_{i}$ obtained over $\mathcal{E}^{\mathrm{init}}$
(resp. $\mathcal{E}^{\mathrm{iter}}$). \vspace{0.7em}\tabularnewline
\textbf{Compute} the shifted version of the average comparison output: $\tilde{Y}_{ij} = \frac{ Y_{ij} - (1 - \eta) }{ 2 \eta -1 }$. Denote by $\tilde{\boldsymbol{Y}}_{i}^{\mathrm{init}}$ the components of $\tilde{\boldsymbol{Y}}_{i}$ obtained over $\mathcal{E}^{\mathrm{init}}$ \vspace{0.7em}\tabularnewline
\textbf{Initialize} $\boldsymbol{w}^{(0)}$ to be the estimate computed
by \emph{Rank Centrality} on $\tilde{\boldsymbol{Y}}_{i}^{\mathrm{init}}$
($1\leq i\leq n$).\vspace{0.7em}\tabularnewline
\textbf{Successive Refinement: for $t=0:T$ do}\tabularnewline
%\multirow{3}{1\textwidth}{$\quad$1) Compute the coordinate-wise MLE \usebox{\MLE} }\tabularnewline
$\quad$1) Compute the coordinate-wise MLE \tabularnewline
$\qquad \qquad \qquad w_i^{\sf mle} \leftarrow \arg \max_{\tau} {\cal L} \left( \tau, \boldsymbol{w}_{ \backslash i}^{(t)}; \boldsymbol{Y}_i^{\sf iter}\right)$%\tabularnewline
\tabularnewline\hspace{.25in} where ${\calL}$ is the likelihood function defined in~\eqref{eq:likelihoodfunction}.\vspace{0.7em}\tabularnewline
$\quad$2)  For each $1\leq i\leq n$, set \tabularnewline
 $ \qquad \qquad \qquad    w_{i}^{(t+1)} \leftarrow
\left\{
  \begin{array}{ll}
   w_{i}^{\sf mle}, & \hbox{$|w_i^{\sf mle} - w_i^{(t)} | > \xi_t $;} \\
    w_{i}^{(t)}, & \hbox{else,}
  \end{array}
\right.
$ \tabularnewline
$\quad \;\;$ where $\xi_t$ is the replacement threshold defined in~\eqref{eq:xit}.
%\begin{equation}
%\xi_t :=c \frac{1}{ 2 \eta -1 } \left \{ \sqrt{ \frac{ \log n }{ npL} } + \frac{1}{2^t} \left(  \sqrt{ \frac{ \log n }{p L} } - \sqrt{ \frac{ \log n }{ npL} } \right) \right \} \label{eq:xi_mle}
%\end{equation}
 \tabularnewline
%$\quad$2) For each $1\leq i\leq n$, set \usebox{\CheckGap} \vspace{0.7em}\tabularnewline
\textbf{Output }the indices of the \textbf{$K$} largest components
of\textbf{ }$\boldsymbol{w}^{(T)}$.\tabularnewline
\end{tabular}
\end{algorithm*}

\begin{algorithm}[t]
\caption{Rank Centrality~\cite{Negahban2012}}
\label{Algorithm:RC}
\begin{tabular}{l}
\textbf{Input}: The shifted average comparison outcome $\tilde{Y}_{ij}$ for all
$(i,j)\in\mathcal{E}^{\mathrm{iter}}$.\vspace{0.7em}\tabularnewline
\textbf{Compute} the transition matrix $\hat{P}=[\hat{p}_{ij}]_{1\leq i,j\leq n}$ such that for $(i,j)\in\mathcal{E}^{\mathrm{iter}}$ \tabularnewline
$\qquad\qquad\qquad \hat{p}_{ij}=\begin{cases}
\frac{\tilde{Y}_{ji}}{d_{\max}},\quad & \text{if } i \neq j;\\
1-\frac{1}{d_{\max}}\sum_{k:(i,k)\in\mathcal{E}^{\mathrm{iter}}} \tilde{Y}_{ki},\quad & \text{if }i=j.
\end{cases}$ \vspace{0.3em}\tabularnewline
where $d_{\max}$ is the maximum out-degrees of vertices in $\mathcal{E}^{\mathrm{iter}}$. \vspace{0.7em}\tabularnewline
\textbf{Output} the stationary distribution of
$\hat{P}$.\tabularnewline
\end{tabular}
\end{algorithm}

\subsection{Achievability Proof of Theorem \ref{thm:etaknown}}\label{sec:prf_thm1}

Let $\hat{\boldsymbol{w}}$ be the final estimate $\boldsymbol{w}^{(T)}$ in the second stage. We carefully    analyze the $\ell_{\infty}$ loss of the $\bw$ vector, showing that under the conditions in Theorem~\ref{thm:etaknown} %(i.e., $L = O ( \poly  (n))$ and $Lnp \geq c_0 \frac{ \log n}{ (2 \eta - 1)^2 }$),
\begin{align}
\label{eq:linfty_norm_bound}
    \| \hat{\boldsymbol{w}} - \boldsymbol{w} \|_{\infty} \leq
    \frac{c_1}{2 \eta -1 } \sqrt{ \frac{ \log n }{ npL} }.
\end{align}
holds with probability exceeding $1- c_2 n^{-c_3}$.  This bound together with the following observation completes the proof. Observe that if $w_{K} - w_{K+1} \geq  \frac{c_4}{ 2\eta - 1}  \sqrt{ \frac{ \log n }{ n pL  }}$, then for a top-$K$ item $1 \leq i \leq K$ and a non-top-$K$ item $j \geq K+1$,
\begin{align}
\hat{w}_i - \hat{w}_j &\geq w_i - w_j - | w_i - \hat{w}_i | - | w_j - \hat{w}_j |  \label{eqn:arg1}\\
& \geq w_K - w_{K+1} - 2 \| \hat{\boldsymbol{w}} -  \boldsymbol{w} \|_{\infty} >0.\label{eqn:arg2}
\end{align}
This implies that our ranking algorithm outputs the top-$K$ ranked items as desired. Hence, as long as $ w_{K} - w_{K+1} \succsim  \frac{1}{ 2\eta - 1}  \sqrt{ \frac{ \log n }{ n pL  } }$ holds (coinciding with the claimed bound in Theorem~\ref{thm:etaknown}), we can guarantee perfect top-$K$ ranking, which completes the proof of Theorem~\ref{thm:etaknown}.

The remaining part is the proof of~\eqref{eq:linfty_norm_bound}. The proof builds upon the analysis made in~\cite{chen-suh:topKranking}, which demonstrates the relationship between $\frac{  \| \boldsymbol{w}^{(0)} - \boldsymbol{w} \|}{ \| \boldsymbol{w} \|}$ and $\| \boldsymbol{w}^{(T)} - \boldsymbol{w} \|_{\infty}$. We establish a new relationship for the arbitrary $\eta$ case, formally stated in the following lemma.  We will then use this to prove~\eqref{eq:linfty_norm_bound}.

\begin{lemma}
\label{lemma:l2vslinfty}
Fix $\delta,\xi>0$.  Consider $\hat{\boldsymbol{w}}^{\rm ub}$ such that it is independent of ${\cal G}$ and satisfies
\begin{align}
   \frac{\| \hat{\boldsymbol{w}}^{\rm ub} - \boldsymbol{w} \| }{ \| \boldsymbol{w} \| } \leq \delta\quad\mbox{and}\quad   \| \hat{\boldsymbol{w}}^{\rm ub} - \boldsymbol{w} \|_{\infty} \leq \xi  .\label{eqn:two_conditions}
\end{align}
Consider an estimate of the score vector $\hat{\boldsymbol{w}}$ such that $| \hat{w}_i - w_i | \leq | \hat{w}_i^{\rm ub} - w_i |$ for all $i\in [n]$. Let
\begin{equation}
 w_{i}^{\sf mle} :=  \argmax_{ \tau } {\cal L} (\tau,
\hat{\boldsymbol{w}}_{\backslash i}; \boldsymbol{Y}_{i} ). \label{eqn:ptwise_mle}
\end{equation}
Then, the pointwise error
\begin{align}
\label{eq:pointmlebound}
    |w_{i}^{\sf mle}  -   w_i | \leq c_0 \max
\left \{ \delta  +  \frac{ \log n}{ n p} \cdot \xi, \frac{c_1}{ 2 \eta -1} \sqrt{ \frac{ \log n}{ n p L} } \right \}
\end{align}
holds with probability at least $1 - c_2 n^{-c_3}$. %Here $c_0,c_1,c_2,c_3$ are some universal constants.
\end{lemma}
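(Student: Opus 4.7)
The proof plan adapts the pointwise MLE analysis of~\cite{chen-suh:topKranking} to the mixture likelihood, tracking carefully where the factor $2\eta-1$ enters. Writing $p(\tau,w) := \eta\tfrac{\tau}{\tau+w} + (1-\eta)\tfrac{w}{\tau+w}$ and $q := 1-p$, a direct differentiation gives the score
\[
g_i(\tau) := \frac{\partial}{\partial \tau} \log \calL(\tau, \hat{\boldsymbol{w}}_{\backslash i}; \boldsymbol{Y}_i) = (2\eta-1)\sum_{j:(i,j)\in\calE}\frac{\hat w_j}{(\tau+\hat w_j)^2}\left[\frac{Y_{ij}}{p(\tau,\hat w_j)}-\frac{1-Y_{ij}}{q(\tau,\hat w_j)}\right].
\]
Since $w_i^{\sf mle}$ is a zero of $g_i$, the mean value theorem yields $|w_i^{\sf mle}-w_i| \leq |g_i(w_i)|/\inf_\tau |g_i'(\tau)|$, where the infimum is over the segment between $w_i^{\sf mle}$ and $w_i$, which an a priori boundedness argument places in a compact subset of $(0,\infty)$. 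The strategy is then to (i)~lower bound the denominator via Fisher information concentration, (ii)~upper bound the numerator by splitting it into a stochastic piece and a deterministic perturbation piece, and (iii)~combine.

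For step (i), I would show that $\mathbb{E}[-g_i'(\tau)] \asymp (2\eta-1)^2 d_i$ uniformly in $\tau$ in the relevant range---the $(2\eta-1)^2$ arises because both the outer prefactor of $g_i$ and the $\tau$-derivatives of $\log p$, $\log q$ each carry a factor of $2\eta-1$. A Bernstein-type concentration combined with the Erd\H{o}s--R\'enyi degree bound $d_i \asymp np$ then yields $|g_i'(\tau)| \gtrsim (2\eta-1)^2 np$ with probability $\geq 1-O(n^{-c_3})$. For step (ii), decompose
\[
g_i(w_i;\hat{\boldsymbol{w}},\boldsymbol{Y}_i) = g_i(w_i;\boldsymbol{w},\boldsymbol{Y}_i) + \left[g_i(w_i;\hat{\boldsymbol{w}},\boldsymbol{Y}_i)-g_i(w_i;\boldsymbol{w},\boldsymbol{Y}_i)\right].
\]
The first term is a sum of independent mean-zero summands with per-summand variance of order $(2\eta-1)^2/L$, so Bernstein's inequality gives $|g_i(w_i;\boldsymbol{w},\boldsymbol{Y}_i)| \lesssim (2\eta-1)\sqrt{np\log n/L}$ with high probability; dividing by the Fisher lower bound from step (i) reproduces the second term $\tfrac{c_1}{2\eta-1}\sqrt{\log n/(npL)}$ of~\eqref{eq:pointmlebound}.

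The main obstacle is controlling the deterministic perturbation $g_i(w_i;\hat{\boldsymbol{w}},\boldsymbol{Y}_i)-g_i(w_i;\boldsymbol{w},\boldsymbol{Y}_i)$. Writing it as $\int_0^1 \langle \nabla_{\boldsymbol{w}_{\backslash i}} g_i(w_i; \boldsymbol{w}+t(\hat{\boldsymbol{w}}-\boldsymbol{w}), \boldsymbol{Y}_i),\, \hat{\boldsymbol{w}}_{\backslash i}-\boldsymbol{w}_{\backslash i}\rangle\, dt$, each partial derivative is of order $(2\eta-1)^2$ times a factor involving $Y_{ij}$ and the intermediate scores. A naive application of $\|\hat{\boldsymbol{w}}-\boldsymbol{w}\|_\infty \leq \xi$ to every summand overshoots by a factor of $np$ after dividing by the Fisher lower bound, so one must split the neighbors $j$ of $i$ into a \emph{bulk} set whose perturbations are close to the $\ell_2$ average and a small \emph{tail} set of cardinality $O(\log n)$ on which the perturbations can be large and are captured only by $\ell_\infty$. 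The bulk contribution is bounded via Cauchy--Schwarz against the hypothesis $\|\hat{\boldsymbol{w}}-\boldsymbol{w}\| \leq \delta\|\boldsymbol{w}\|$ and yields the $\delta$ contribution; the tail contribution, combined with a Bernstein-type bound that exploits the stipulated independence of $\hat{\boldsymbol{w}}^{\rm ub}$ from $\calG$ (conditioning on $\hat{\boldsymbol{w}}$ renders the remaining randomness in $\boldsymbol{Y}_i$ and the edge pattern tractable), produces the $\tfrac{\log n}{np}\,\xi$ contribution. Carrying this splitting argument through the $\eta$-dependent likelihood is the technically delicate step, but the remaining algebra parallels~\cite{chen-suh:topKranking}. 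Combining the two numerator bounds and dividing by the Fisher lower bound yields~\eqref{eq:pointmlebound}.
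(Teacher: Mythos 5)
Your route is genuinely different from the paper's: you work with the score equation $g_i(w_i^{\sf mle})=0$ and the mean value theorem, dividing a bound on $|g_i(w_i)|$ by a uniform lower bound on $|g_i'|$, whereas the paper never differentiates the likelihood at all --- it shows directly (Pinsker's inequality for the expected gap, Bernstein concentration, and a Lipschitz bound $|g_\eta(t)|\lesssim(2\eta-1)^2|\tau-w_i|\,|t-w_j|$ for the surrogate-vs-true gap) that every $\tau$ with $|\tau-w_i|\gtrsim\max\{\delta+\frac{\log n}{np}\xi,\ \frac{1}{2\eta-1}\sqrt{\frac{\log n}{npL}}\}$ satisfies $\hat{\ell}(\tau)<\hat{\ell}(w_i)$, so the maximizer cannot lie there. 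The difference matters because your step (i) contains a genuine gap: the claim that $\mathbb{E}[-g_i'(\tau)]\asymp(2\eta-1)^2 d_i$ \emph{uniformly} over the segment joining $w_i$ and $w_i^{\sf mle}$. The mixture log-likelihood is not concave in $\tau$ (already for $\eta=1$ the term $\log\frac{w_j}{\tau+w_j}$ is convex in $\tau$), and at $\tau\neq w_i$ the quantities $Y_{ij}-p(\tau,\hat w_j)$ are no longer centered: their conditional mean is of order $(2\eta-1)\bigl(|w_i-\tau|+|\hat w_j-w_j|\bigr)$, so the curvature is only bounded below by $c(2\eta-1)^2 d_i\bigl(1-C|w_i-\tau|\bigr)$ with constants depending on $w_{\min},w_{\max}$. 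Since a priori $|w_i^{\sf mle}-w_i|$ may be of constant order (ruling this out is precisely the content of the lemma), the uniform curvature bound on the whole segment is not available without first localizing the MLE, which is circular as stated; you would need an auxiliary crude localization or a restrictive constant analysis, neither of which appears in the proposal. A related smaller issue is that a boundary maximizer of $\calL$ over $[w_{\min},w_{\max}]$ need not satisfy $g_i=0$; this is fixable once strict concavity is secured, but it belongs to the same missing step. The paper's value-comparison argument avoids all of this because Pinsker's bound on the expected likelihood difference is valid for every $\tau$.

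The remaining ingredients are essentially sound and in fact parallel the paper: your stochastic term $(2\eta-1)\sqrt{np\log n/L}$, divided by $(2\eta-1)^2np$, gives the $\frac{c_1}{2\eta-1}\sqrt{\frac{\log n}{npL}}$ branch, and your appeal to the independence of $\hat{\boldsymbol{w}}^{\rm ub}$ from $\calG$ to control $\sum_{j:(i,j)\in\calE}|\hat w_j-w_j|$ is exactly how the paper obtains the $\delta+\frac{\log n}{np}\xi$ branch. However, your bulk/tail split with a tail of cardinality $O(\log n)$ is not automatic: the number of neighbors of $i$ carrying a large perturbation concentrates around $p$ times the total number of such coordinates, which need not be $O(\log n)$, so the threshold defining the tail must be chosen with care. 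The clean fix is the paper's one-shot Bernstein bound on $\sum_{j:(i,j)\in\calE}|\hat w_j^{\rm ub}-w_j|$ over the randomness of $\calE$ (mean at most $np\delta$, variance at most $np\delta^2$, max term $\xi$), which yields $np\delta+\xi\log n$ without any splitting.
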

\begin{proof}
The relationship in the faithful scenario $\eta=1$, which was proved in~\cite{chen-suh:topKranking}, means that the point-wise MLE $w_i^{\sf mle}$ is close to the ground truth $w_i$ in a component-wise manner, once an initial estimate $\hat{\boldsymbol{w}}$ is accurate enough. Unlike the faithful scenario, in our setting, we have (in general) noisier measurements $\boldsymbol{Y}_{i}$ due to the effect of $\eta$. Nonetheless  this lemma reveals that the relationship for the case of $\eta=1$ is almost the same as that for an arbitrary $\eta$ case only with a slight modification. This implies that a small point-wise loss is still guaranteed as long as we start from a reasonably good estimate. Here the only difference in the relationship is that the multiplication term of $\frac{1}{ 2 \eta -1}$ additionally applies in the upper bound of~\eqref{eq:pointmlebound}. See Appendix~\ref{app:Proofoflemmal2vslinfty} for the proof.
\end{proof}

Obviously the accuracy of the point-wise MLE reflected in the $\ell_{\infty}$  error depends crucially on an initial error $ \| \boldsymbol{w}^{(0)} - \boldsymbol{w} \|$. In fact, Lemma~\ref{lemma:l2vslinfty} leads to the claimed bound~\eqref{eq:linfty_norm_bound} once the initial estimation error is properly chosen as follows: 
\begin{equation}
\frac{  \| \boldsymbol{w}^{(0)} - \boldsymbol{w} \|}{ \| \boldsymbol{w} \|} \lesssim \frac{1}{ 2\eta-1} \sqrt{ \frac{\log n}{npL}}.
\end{equation}
 Here we demonstrate that the desired initial estimation error can indeed be achieved in our problem setting, formally stated in Lemma~\ref{lemma:l2-norm-bound-etaknown} (see below). On the other hand, adapting the analysis in~\cite{chen-suh:topKranking}, one can verify that with the replacement threshold $\xi_t$ defined in~\eqref{eq:xit}, the $\ell_2$ loss is monotonically decreasing in an order-wise sense, i.e., 
\begin{equation}
 \frac{  \| \boldsymbol{w}^{(t)} - \boldsymbol{w} \|}{ \| \boldsymbol{w} \|} \lesssim
\frac{  \| \boldsymbol{w}^{(0)} - \boldsymbol{w} \|}{ \| \boldsymbol{w} \|}. \label{eqn:mono_dec}
 \end{equation} 

We are now ready to prove~\eqref{eq:linfty_norm_bound} when  $L = O ( \poly  (n))$ and 
\begin{equation}
\frac{  \| \boldsymbol{w}^{(t)} - \boldsymbol{w} \|}{ \| \boldsymbol{w} \|} \asymp \delta \asymp \frac{1}{ 2\eta-1} \sqrt{ \frac{\log n}{npL}}. \label{eqn:choice_del}
\end{equation}
Lemma~\ref{lemma:l2vslinfty} asserts that in this regime, the point-wise MLE $\boldsymbol{w}^{\sf mle}$ is expected to satisfy
\begin{align}
\| \boldsymbol{w}^{\sf mle} -  \boldsymbol{w} \|_{\infty}  \lesssim  \frac{  \| \boldsymbol{w}^{(t)} - \boldsymbol{w} \|}{ \| \boldsymbol{w} \|}  +  \frac{ \log n }{ np} \| \boldsymbol{w}^{(t)}  -  \boldsymbol{w} \|_{\infty}.
\end{align}
Using the analysis in~\cite{chen-suh:topKranking}, one can show that the choice of $\xi_t$ in~\eqref{eq:xit}   enables us to detect outliers (where an estimation error is large) and drag down the corresponding point-wise error, thereby ensuring that $ \| \boldsymbol{w}^{(t+1)} - \boldsymbol{w} \|_{\infty} \asymp  \| \boldsymbol{w}^{\sf mle} - \boldsymbol{w} \|_{\infty}$. This together with the fact that
\begin{equation}
\frac{ \| \boldsymbol{w}^{(t)} - \boldsymbol{w} \|}{ \| \boldsymbol{w} \| } \lesssim \frac{ \| \boldsymbol{w}^{(0)} - \boldsymbol{w} \|}{ \| \boldsymbol{w} \| } \lesssim  \frac{1}{ 2\eta-1} \sqrt{ \frac{\log n}{npL}}
\end{equation} (see \eqref{eqn:choice_del} above and Lemma~\ref{lemma:l2-norm-bound-etaknown}) gives
%\begin{align}
%\label{eq:recursion}
%\| \boldsymbol{w}^{(t+1)} - \boldsymbol{w} \|_{\infty} \lesssim \frac{  \| \boldsymbol{w}^{(t)} - \boldsymbol{w} \|}{ \| \boldsymbol{w} \|} + \frac{ \log n }{ np} \| \boldsymbol{w}^{(t)} - \boldsymbol{w} \|_{\infty}.
%\end{align}
%Applying this to~\eqref{eq:recursion} gives
\begin{align}
\label{eq:recursion2}
\| \boldsymbol{w}^{(t+1)}  -  \boldsymbol{w} \|_{\infty}
%& \lesssim \frac{  \| \boldsymbol{w}^{(0)} - \boldsymbol{w} \|}{ \| \boldsymbol{w} \|}  + \frac{ \log n }{ np} \| \boldsymbol{w}^{(t)} - \boldsymbol{w} \|_{\infty} \\
  \lesssim  \frac{1}{ 2\eta  -  1} \sqrt{\frac{ \log n}{ npL}}  +  \frac{ \log n }{ np} \| \boldsymbol{w}^{(t)}  -  \boldsymbol{w} \|_{\infty}.
\end{align}
A straightforward computation with this recursion yields~\eqref{eq:linfty_norm_bound}
%\begin{align}
%\| \boldsymbol{w}^{(T)} - \boldsymbol{w} \|_{\infty} \lesssim \frac{1}{ 2\eta -1} \sqrt{\frac{ \log n}{ npL}} \label{eqn:33}
%\end{align}
if $\frac{\log n}{np}$ is sufficiently small (e.g., $p > \frac{ 2 \log n}{n}$) and $T$, the number of iterations in the second stage of \emph{SpectralMLE}, is sufficiently large (e.g., $T = O (\log n)$). %This completes the proof of~\eqref{eq:linfty_norm_bound}.

\begin{lemma}
\label{lemma:l2-norm-bound-etaknown}Let $L   =O( \poly(n))$ and $L n p \geq \frac{c_0}{ (2 \eta -1)^2} \log n$. Let $\boldsymbol{w}^{(0)}$ be an initial estimate: an output of RankCentrality~\cite{Negahban2012} when seeded by  $\tilde{ \boldsymbol{Y} } := \{ \tilde{Y}_{ij} \}_{ (i,j) \in {\cal E} }$. Then,
\begin{align}
  \frac{  \| \boldsymbol{w} - { \boldsymbol{w}}^{(0)} \| }{ \| \boldsymbol{w} \| }
\leq \frac{c_1}{ 2\eta - 1}  \sqrt{ \frac{ \log n }{ n p L  } }\label{eqn:bound_l2_error0}
\end{align}
holds with probability exceeding $1 - c_2 n^{-c_3}$. %Here $c_0,c_1,c_2,c_3>0$ are some universal constants.
\end{lemma}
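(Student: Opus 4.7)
The plan is to reduce Lemma~\ref{lemma:l2-norm-bound-etaknown} to a direct application of the \emph{RankCentrality} $\ell_2$ error bound of Negahban \etal~\cite{Negahban2012}, after showing that the shifted statistics $\tilde{Y}_{ij}$ play the role of the empirical pairwise-comparison frequencies in the standard (faithful) BTL model, up to a variance inflation by a factor of $(2\eta-1)^{-2}$. First, a direct computation using~\eqref{eqn:mixture_mode} shows that
\[
\bbE[\tilde{Y}_{ij}] \,=\, \frac{\bbE[Y_{ij}]-(1-\eta)}{2\eta-1} \,=\, \frac{w_i}{w_i+w_j},
\]
so the transition matrix $\hat{P}$ built from $\tilde{\bY}$ in Algorithm~\ref{Algorithm:RC} is, in expectation given the comparison graph $\calG$, exactly the reversible BTL transition matrix whose stationary distribution is proportional to $\bw$. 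Hence the RankCentrality consistency result will yield the desired bound provided we can control the fluctuations of $\hat{P}$ around its expectation.

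Next I would bound the concentration of each $\tilde{Y}_{ij}$ around its mean. Since $Y_{ij}^{(\ell)}\in\{0,1\}$, Hoeffding's inequality combined with a union bound over all $O(n^2)$ edges yields
\[
\max_{(i,j)\in\calE} \bigl|Y_{ij}-\bbE Y_{ij}\bigr| \,\lesssim\, \sqrt{\log n/L}
\]
with probability exceeding $1-n^{-c}$ for some $c>0$, and dividing by $(2\eta-1)$ gives $\max_{(i,j)\in\calE}\bigl|\tilde{Y}_{ij}-w_i/(w_i+w_j)\bigr|\lesssim (2\eta-1)^{-1}\sqrt{\log n/L}$. Under the hypothesis $Lnp\ge c_0\log n/(2\eta-1)^2$ together with $p\gtrsim \log n/n$, this per-edge deviation is at most a small constant, so in particular $\tilde{Y}_{ij}\in[0,1]$ with high probability, and a harmless truncation to $[0,1]$ (if one wishes to make $\hat{P}$ stochastic deterministically) does not change the bound.

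Finally, I would invoke the RankCentrality $\ell_2$ analysis in a black-box fashion. That argument bounds $\|\bw^{(0)}-\bw\|/\|\bw\|$ in terms of $\|\hat{P}-P\|$ divided by the spectral gap of the expected chain $P$; the per-edge deviations get damped by a factor of $\sqrt{np}$ through the $d_{\max}^{-1}$ row-normalization and the ER graph concentration, and the spectral gap of $P$ is inherited from the faithful setup (it depends only on $\bw$ and on $\calG$, both unchanged in our setting). Plugging in the inflated per-edge bound from the previous step yields
\[
\frac{\|\bw-\bw^{(0)}\|}{\|\bw\|} \,\lesssim\, \frac{1}{2\eta-1}\sqrt{\frac{\log n}{npL}},
\]
as claimed. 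The main technical obstacle is not conceptual but bookkeeping: one has to track the $(2\eta-1)^{-1}$ factor carefully through the Negahban--Oh--Shah chain of inequalities (in particular through the matrix Bernstein step that produces $\|\hat P - P\|$) and verify that the truncation of $\tilde{Y}_{ij}$ to $[0,1]$ is benign, both of which are controlled uniformly by the sample-size hypothesis $Lnp\gtrsim \log n/(2\eta-1)^2$.
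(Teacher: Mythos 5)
Your route is essentially the paper's: the paper also observes that $\bbE[\tilde{Y}_{ij}\mid\calG]=\frac{w_i}{w_i+w_j}$ so that $P=\bbE[\hat P]$ is the same reversible BTL chain as in the faithful case, then invokes the RankCentrality deviation bound (Lemma 2 of Negahban \etal), which controls $\|\bw^{(0)}-\bw\|/\|\bw\|$ by $\|\DDelta\|/(1-\rho)$, and finally proves $\|\DDelta\|\lesssim \frac{1}{2\eta-1}\sqrt{\log n/(npL)}$ by Hoeffding (diagonal entries) plus Tropp's matrix concentration applied to the dilated off-diagonal part, with $\rho<1$ verified exactly under $Lnp\gtrsim \log n/(2\eta-1)^2$. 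So conceptually there is no difference; the $(2\eta-1)^{-1}$ inflation enters through the sub-Gaussian range of $\tilde{Y}_{ij}-\frac{w_i}{w_i+w_j}$, just as you anticipate.

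One intermediate claim in your write-up is wrong, though it is not essential to the argument: you assert that under $Lnp\ge c_0\log n/(2\eta-1)^2$ (with $p\gtrsim\log n/n$) the per-edge deviation $\frac{1}{2\eta-1}\sqrt{\log n/L}$ is at most a small constant, hence $\tilde{Y}_{ij}\in[0,1]$ w.h.p.\ and a truncation is harmless. The hypothesis lower-bounds $LnP$ jointly, not $L$ alone; e.g.\ with $np\asymp\log n$ one may have $L=O(1)$, in which case $\tilde{Y}_{ij}=\frac{Y_{ij}-(1-\eta)}{2\eta-1}$ takes values like $-\frac{1-\eta}{2\eta-1}$ and $\frac{\eta}{2\eta-1}$, far outside $[0,1]$ when $\eta$ is near $1/2$, and your deviation bound $\sqrt{np/c_0}$ is not small. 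The paper sidesteps this entirely: it never truncates and never needs $\hat P$ to have entries in $[0,1]$; it only needs centered, bounded (range $O(\frac{1}{2\eta-1})$) summands for Hoeffding and the resulting sub-Gaussian moment bounds $\bbE[|\DDelta_{ij}|^p]$ fed into Tropp's inequality. If you keep the truncation step, you would need a different justification (or drop it, as the paper does) for the lemma to hold under the stated hypotheses.
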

\begin{proof}
Here we provide only a sketch of the proof, leaving details to Appendix~\ref{app:ProofofLemmaBoundDelta}. The proof builds upon the analysis structured by Lemma 2 in Negahban~\etal~\cite{Negahban2012}, which bounds the deviation of the Markov chain w.r.t.\ the transition matrix $\hat{P}$  after $t$ steps:
\begin{align}
\label{eq:deviationofMC}
   \frac{  \| \hat{p}_t - { \boldsymbol{w}} \| }{ \| \boldsymbol{w} \| }
\leq \rho^t   \frac{  \| \hat{p}_0 - { \boldsymbol{w}} \| }{ \| \boldsymbol{w} \| }  \sqrt{ \frac{w_{\max}}{w_{\min}} } + \frac{1}{ 1 - \rho} \| \DDelta \|  \sqrt{ \frac{w_{\max}}{w_{\min}} }
\end{align}
where $\hat{p}_t$ denotes the distribution w.r.t.\ $\hat{P}$ at time $t$ seeded by an arbitrary initial distribution $\hat{p}_0$, the matrix
$
\DDelta := \hat{P} - P,%\label{eqn:Def_Delta}
$
indicates the fluctuation of the transition probability matrix\footnote{The notation $\DDelta=\hatP-P$, a matrix, should not be confused with the scalar normalized score separation $\Delta_K$, defined in \eqref{eqn:Delta_K}. } around its mean $P:= \mathbb{E} [\hat{P}]$, and $\rho:=\lambda_{\max} + \| \DDelta \| \sqrt{ \frac{w_{\max}}{w_{\min}} }$.
Here $\lambda_{\max} = \max \{ \lambda_2, - \lambda_n \}$ and $\lambda_i$ indicates the $i$-th eigenvalue of $P$.

Unlike the faithful scenario $\eta=1$,  in the arbitrary $\eta$ case,  the bound on $\| \DDelta \|$ depends on $\eta$:
\begin{align}
\label{eq:Delta_bound}
\| \DDelta \| \lesssim    \frac{1}{ 2\eta -1} \sqrt{ \frac{ \log n}{ npL } },
\end{align}
which will be proved in Lemma~\ref{app:ProofofLemmaBoundDelta} by using various concentration bounds (e.g., Hoeffding and Tropp~\cite{Tropp2011}). Adapting the analysis in~\cite{Negahban2012}, one can easily verify that $\rho < 1$ under one of  the conditions in Theorem~\ref{thm:etaknown} that $L n p \succsim \frac{ \log n}{ (2 \eta -1)^2 }$. Applying the bound on $\| \DDelta \|$ and $\rho <1 $ to~\eqref{eq:deviationofMC} gives the claimed bound,  which completes the proof.
\end{proof}

\section{Converse Proof of Theorem~\ref{thm:etaknown}}
\label{sec:converse_etaknown}

 As in Chen and Suh's work~\cite{chen-suh:topKranking}, by Fano's inequality, we see that it suffices for us to upper bound the mutual information between a set of appropriately chosen %(\tb{to elaborate, Vincent: As you indicated, this needs to be elaborated further.})
rankings $\calM$ of cardinality $M := \min\{K, n-K\}+1$.  More specifically, let $\sigma:[n]\to [n]$ represent a permutation over $[n]$. We also denote by $\sigma(i)$ and $\sigma([K])$ the
corresponding index of the $i$-th ranked item and the index set of all top-$K$ items, respectively. We subsequently impose a uniform prior over $\calM$ as follows: If $K<n/2$ then 
\begin{equation}
\Pr( \sigma( [K ] ) = \calS ) = \frac{1}{M} \quad\mbox{for}\quad \calS = \{2,\ldots, K\}\cup\{i\},\quad i = 1,K+1,\ldots, n \label{eqn:Klessn}
\end{equation}
and if $K\ge n/2$, then 
\begin{equation}
\Pr( \sigma( [K ] ) = \calS ) = \frac{1}{M} \quad\mbox{for}\quad \calS = \{1,\ldots, K+1\}\setminus\{i\},\quad i = 1,\ldots , K+1.\label{eqn:Kgen}
\end{equation}
In words, each alternative hypothesis is generated by swapping {\em only two} indices of the hypothesis (ranking) obeying $\sigma([K]) =[K]$. Clearly, the original minimax error probability is lower bounded by the corresponding error probability of this reduced ensemble.

Let the set of observations for the edge $(i,j) \in\calE$ be denoted as $\vec{Y}_{ij}:= \{Y_{ij}^{(\ell)} : \ell\in [L] \}$. We also find it convenient to introduce  an erased version of the observations $\bZ=\{\vec{Z}_{ij}: i,j\in [n]\}$ which is related to the true observations  $\bY:= \{\vec{Y}_{ij}: (i,j)\in \calE\}$ as follows,
\begin{equation}
\vec{Z}_{ij} = \left\{  \begin{array}{cc}
\vec{Y}_{ij}  & (i,j) \in\calE \\
\rme  & (i,j) \notin\calE
\end{array} \right. .
\end{equation}
Here $\rme$ is an {\em erasure} symbol.
Let $\sigma$, a chance variable, be a uniformly distributed ranking in $\calM$ (the ensemble of rankings created in~\eqref{eqn:Klessn}--\eqref{eqn:Kgen}). Let $P_{ \vec{Y}_{ij} |\sigma_j}$ be the distribution of the observations given that the ranking is $\sigma_j\in\calM$ where $j \in [M]$ and a similar notation is used for when   $\vec{Y}_{ij}$ is replaced by $\vec{Z}_{ij}$. Now, by the convexity of the relative entropy and the fact that the rankings are uniform, the mutual information can be bounded as
\begin{align}
I(\sigma;\bZ) & \le\frac{1}{M^2}\sum_{\sigma_1 ,\sigma_2\in\calM} D\left( P_{\bZ|\sigma_1} \big\| P_{\bZ|\sigma_2}  \right) \\
&  = \frac{1}{M^2}\sum_{\sigma_1 ,\sigma_2\in\calM} \sum_{i\ne j} D\left( P_{ \vec{Z}_{ij} |\sigma_1} \big\| P_{\vec{Z}_{ij} |\sigma_2}  \right) \\
&  = \frac{\pobs}{M^2}\sum_{\sigma_1 ,\sigma_2\in\calM} \sum_{i\ne j} D\left( P_{ \vec{Y}_{ij} |\sigma_1} \big\| P_{\vec{Y}_{ij} |\sigma_2}  \right) \\
&  = \frac{\pobs}{M^2}\sum_{\sigma_1 ,\sigma_2\in\calM} \sum_{i\ne j} \sum_{\ell=1}^L D\left( P_{ {Y}_{ij}^{(\ell)} |\sigma_1} \big\| P_{ {Y}_{ij}^{(\ell)} |\sigma_2}  \right). \label{eqn:bd_mi}
\end{align}
Assume that under ranking $\sigma_1$, the score vector is $\bw := (w_1,\ldots, w_n)$ and under ranking $\sigma_2$, the score vector is $\bw':=(w_{\pi(1)},\ldots, w_{\pi(n)})$  for some fixed permutation $\pi :[n]\to[n]$.  By using the statistical model described in Section~\ref{sec:model1}, we know that
\begin{equation}
 D\left( P_{ {Y}_{ij}^{(\ell)} |\sigma_1} \big\| P_{ {Y}_{ij}^{(\ell)} |\sigma_2}  \right) = D\left( \eta \frac{w_i}{w_i + w_j} + (1-\eta)\frac{w_j}{w_i + w_j}  \Big\| \eta\frac{w_{\pi(i)}}{w_{\pi(i)}+ w_{\pi(j)}} + (1-\eta) \frac{w_{\pi(j)}}{w_{\pi(i)}+ w_{\pi(j)}} \right) \label{eqn:bin_div}
\end{equation}
where $D(\alpha\| \beta): = \alpha\log\frac{\alpha}{\beta} + (1-\alpha)\log\frac{1-\alpha}{1-\beta}$ is the binary relative entropy. For brevity, write
\begin{equation}
a:=\frac{w_i}{w_i+w_j},\quad\mbox{and}\quad b:=\frac{w_{\pi(i)}}{w_{\pi(i)}+w_{\pi(j)}}.
\end{equation}
%Now, note that
%\begin{equation}
%\eta a+(1-\eta)(1-a) = \eta' a + (1-\eta')\frac{1}{2}\label{eqn:shift}
%\end{equation}
%where $\eta' := 2\eta-1$ (\blue{Vincent: As you may see, $\eta'$ is redundant}).  
Furthermore, we note that the chi-squared divergence is an upper bound for the  relative entropy between two distributions $P=\{P_i\}_{i\in\calX}$ and $Q=\{Q_i\}_{i\in\calX}$ on the same (countable) alphabet $\calX$ (see e.g.~\cite[Lemma 6.3]{Csi06}), i.e.,
\begin{equation}
D( P \| Q) \le\chi^2 (P \| Q) :=\sum_{i\in\calX} \frac{(P_i-Q_i)^2}{Q_i}.\label{eqn:chi_bd}
\end{equation}
We also use the notation $\chi^2( \alpha\|\beta)$ to denote the  binary  chi-squared divergence similarly to the binary relative entropy.
%In other words, the chi-squared divergence is an upper bound to the relative entropy. 
Now, we may bound \eqref{eqn:bin_div} using the  following computation
\begin{align}
&D\left( \eta a+(1-\eta)(1-a)  \big\| \eta b+(1-\eta)(1-b)  \right) \nn\\*
%&= D\left( \eta' a+(1-\eta')\frac{1}{2}  \Big\|  \eta' b+(1-\eta')\frac{1}{2}    \right) \label{eqn:apply_shift} \\
&\le \chi^2 \left( \eta a+(1-\eta)(1-a)   \big\| \eta b+(1-\eta)(1-b )   \right)  \label{eqn:apply_chi_bd}\\
&=\frac{ ( 2\eta-1 )^2 (a-b)^2 }{  \big( (2\eta-1)b+(1-\eta)\big)\big( \eta-(2\eta-1)b\big) } \label{eqn:bd_div}
\end{align}
%where   \eqref{eqn:apply_chi_bd} uses the equality in \eqref{eqn:shift}  and the bound in \eqref{eqn:chi_bd}. 
 Now
\begin{equation}
|a-b|\le \frac{w_K}{w_K+w_{K+1}}- \frac{w_{K+1}}{w_K+w_{K+1}}\le\frac{w_{\max}}{2w_{\min}}\Delta_K. \label{eqn:bd_a_minus_b}
\end{equation}
Hence, if we consider the case where $\eta=(1/2)^+$ (which is the regime of interest), uniting \eqref{eqn:bd_div} and  \eqref{eqn:bd_a_minus_b} we obtain
\begin{equation}
D\left( \eta a+(1-\eta)(1-a)  \big\| \eta b+(1-\eta)(1-b)  \right)\lesssim(2\eta-1)^2 \Delta_K^2.
\end{equation}
By construction of the hypotheses in \eqref{eqn:Klessn}--\eqref{eqn:Kgen}, conditional on any two distinct rankings $\sigma_1,\sigma_2\in\calM$, the distributions of $\vec{Y}_{ij}$  (namely $P_{ \vec{Y}_{ij} |\sigma_1}$ and $P_{ \vec{Y}_{ij} |\sigma_2}$) are
different over at most $2n$ locations    so
\begin{equation}
 \sum_{i\ne j} \sum_{l=1}^L D\left( P_{ {Y}_{ij}^{(\ell)} |\sigma_1} \big\| P_{ {Y}_{ij}^{(\ell)} |\sigma_2}  \right)\lesssim  n L(2\eta-1)^2 \Delta_K^2  .
\end{equation}
Thus, plugging this into the bound on the mutual information in  \eqref{eqn:bd_mi}, we obtain
\begin{equation}
I(\sigma;\bZ)  \lesssim  \pobs  n L(2\eta-1)^2 \Delta_K^2   . \label{eqn:bd_mi2}
\end{equation}
Plugging this into Fano's inequality, and using the fact that $M\le n/2$ (from $M=\min\{K,n-K\}+1$),  we obtain  
\begin{align}
P_\rme(\psi) & \ge 1-\frac{I(\sigma;\bZ) }{\log M}-\frac{1}{\log M}\\
& \ge 1-\frac{I(\sigma;\bZ) }{\log (n/2)}-\frac{1}{\log (n/2)}.
\end{align}
%If the implied constant in  \eqref{eqn:bd_mi2} is taken to be $1/c$ where $c$ is taken to be the  constant in \eqref{eq:MinSampleComplexity},  
 Thus, if $S = \binom{n}{2}pL\le \frac{c_2(1-\epsilon)\log n }{(2\eta-1)^2\Delta_K^2}$ for some small enough but positive $c_2$,  we see that
\begin{equation}
 P_\rme(\psi)  \ge  \epsilon  .
\end{equation}
Since this is independent of the decoder $\psi$, the converse part  is proved. 

\section{Algorithm and Proof of Theorem~\ref{thm:etaunknown}}
\label{sec:ProofofTheorem2}

\subsection{Algorithm Description} \label{sec:algo2}
The proof of Theorem \ref{thm:etaunknown} follows by combining the results of Jain and Oh~\cite{JO14} with the analysis for the case when $\eta$ is known in Theorem \ref{thm:etaknown}. Jain and Oh were interested in disambiguating a mixture distribution from samples. This corresponds to our model in \eqref{eqn:mixture_mode}. They showed using tensor decomposition methods  that it is possible to find a globally optimal solution for the mixture weight $\eta$   using a computationally efficient   algorithm. They also provided an $\ell_2$ bound on the error of the distributions but  as mentioned, we are more interested in controlling the $\ell_\infty$ error so we estimate $\bw$ separately. The use of the $\ell_2$ bound in \cite{JO14} leads to a worse sample complexity for top-$K$ ranking.

Thus, in the first step, we will use the method in \cite{JO14} to estimate $\eta$ given the data samples (pairwise comparisons) $\bY$. The estimate is denoted as $\hat{\eta}$.   It turns out that one can specialize the result in \cite{JO14} with suitably parametrized ``distribution vectors''
\begin{align}
  \pi_0  := \begin{bmatrix}
   \hdots &  \displaystyle\frac{w_i}{w_i+w_j} &  \displaystyle\frac{w_j }{w_i+w_j} &   \displaystyle \frac{w_{i'}}{w_{i'}+w_{j'}} & \displaystyle \frac{w_{j'}}{w_{i'}+w_{j'}} &\hdots
\end{bmatrix}^T     \label{eqn:pi0}%\quad\mbox{and}\\
%\pi_1   &:= \begin{bmatrix}
%\displaystyle\frac{w_j}{w_i+w_j} &  \displaystyle\frac{w_i }{w_i+w_j} & \displaystyle\frac{w_{j'}}{w_{i'}+w_{j'}}   & \displaystyle\frac{w_{i'}}{w_{i'}+w_{j'}} & \hdots
%\end{bmatrix}^T \in\bbR^{2|\calE|}
\end{align}
and $\pi_1 := \bone_{2|\calE|} - \pi_0\in\bbR^{2|\calE|}$ and where in \eqref{eqn:pi0},  $(i,j)$ runs through all values in $\calE$.
%Observe that
%\begin{equation}
%\pi_1 = \bone_{2|\calE|} - \pi_0\label{eqn:coupling}
%\end{equation}
%where $\bone_{2|\calE|}$ denotes the vector of all ones with length $2|\calE|$.
 Hence, we are in fact applying \cite{JO14}  to a more restrictive setting where the two probability distributions represented by $\pi_0$ and $\pi_1$  are ``coupled'' but this does not preclude the application of the results in \cite{JO14}. In fact, this assumption makes the calculation of relevant parameters (in Lemma \ref{lem:scale}) easier. The relevant second and third moments are
\begin{align}
M_2 &:= \eta\pi_0\otimes\pi_0+(1- \eta) \pi_1\otimes \pi_1  ,  \label{eqn:defM2}\\
M_3 &:= \eta \pi_0\otimes\pi_0 \otimes\pi_0+(1- \eta) \pi_1\otimes \pi_1\otimes \pi_1,\label{eqn:defM3}
\end{align}
where $\pi_j \otimes \pi_j \in \bbR^{(2|\calE|)\times (2|\calE|)}$ is the outer product   and $\pi_j \otimes \pi_j \otimes \pi_j\in \bbR^{(2|\calE|)\times (2 |\calE|)\times (2 |\calE|)}$ is the $3$-fold tensor outer product. If one has the {\em exact} $M_2$ and $M_3$, we can obtain the mixture weight $\eta$ {\em exactly}.  The intuition as to why tensor   methods are applicable to problems involving latent variables has been well-documented (e.g.~\cite{AGHKT}). Essentially, the second- and third-moments contained in $M_2$ and $M_3$   provide sufficient statistics for identifying and hence estimating {\em all} the parameters of an appropriately-defined  model with latent variables (whereas second-order information contained in $M_2$ is, in general, not sufficient for reconstructing the parameters).   Thus,  the problem boils down to analyzing the precision of $\eta$ when we only have access to {\em empirical} versions of $M_2$ and $M_3$  formed from pairwise comparisons in $\calG$. As shown in Lemma \ref{lem:fidelity} to follow, there is a tradeoff between the sample size per edge $L$ and the quality of the estimate of $\eta$. Hence, this causes a degradation to the overall sample complexity reflected in Theorem \ref{thm:etaunknown}.

\begin{figure}[t]
\begin{center}
{\epsfig{figure=./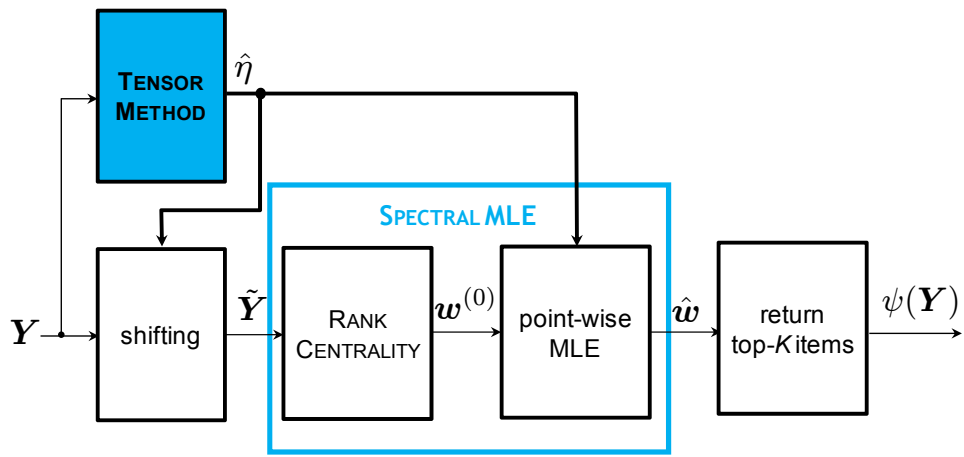, angle=0, width=0.7\textwidth}}
\end{center}
\vspace*{-0.1in}
\caption{
Ranking algorithm for the unknown $\eta$ scenario. The key distinction relative to the known $\eta$ case is that we estimate $\eta$ based on the tensor decomposition method \cite{JO14,AGHKT} and the estimate $\hat{\eta}$ is employed for shifting $\boldsymbol{Y}$ and performing the point-wise MLE. This method allows us to get $\|  \hat{ \boldsymbol{w} } - \boldsymbol{w} \|_{\infty} \lesssim \frac{1}{ 2 \eta -1 } \sqrt[4]{ \frac{ \log^2 n }{ npL} }$, which ensures that $P_e \rightarrow 0 $ under $\Delta_{K} \succsim \frac{1}{  2 \eta -1 } \sqrt[4]{ \frac{ \log^2 n }{ npL} }$.
}
\label{fig:etaunknown}
\vspace*{-0.1in}
\end{figure}

\begin{algorithm*}[t]
\caption{Estimating mixing coefficient $\eta$  \cite{JO14}}\label{alg:est_eta}
\begin{tabular}{>{\raggedright}p{1\textwidth}}
{\bf Input}: The collection of observed pairwise comparisons $\bY$\vspace{0.7em}\tabularnewline
{\bf Split} $\bY$ evenly into two   subsets  of samples $\bY^{(1)}$ and $\bY^{(2)}$\vspace{0.7em}\tabularnewline
{\bf Estimate} the second-order moment matrix $M_2$ in~\eqref{eqn:defM2} based on $\bY^{(1)}$ using Algorithm 2 (MatrixAltMin) in \cite{JO14}\vspace{0.7em}\tabularnewline
{\bf Estimate} a third-order statistic $G$ (defined in \cite[Theorem 1]{JO14}) based on $(M_2,M_3,\bY^{(2)})$ using Algorithm 3 (TensorLS) in \cite{JO14}\vspace{0.7em}\tabularnewline
{\bf Compute} the first eigenvalue $\lambda_1$ of $G$ using the {\em robust power method} in \cite{AGHKT}\vspace{0.7em}\tabularnewline
{\bf Return} the estimated mixing coefficient $\hat{\eta} = \lambda_1^{-2}$
\end{tabular}
\end{algorithm*}

%Then we will replace $\eta$ in the algorithm for the known-$\eta$ case with its estimate $\hat{\eta}$. The complete algorithm  is shown in Algorithms~\ref{alg:eta_unknown} and \ref{alg:est_eta}.

In the second step, we plug the estimate $\heta$ into the  algorithm for the $\eta$-known case  by shifting the observations $\bY$ similarly to \eqref{eqn:def_tildeY} but with $\heta$ instead of $\eta$. See Fig.~\ref{fig:etaunknown}. However, here there are a couple of important distinctions relative to the case where $\eta$ is known exactly. First, the likelihood function $\calL(\cdot)$ in \eqref{eq:likelihoodfunction} needs to be modified since it is a function of $\eta$  in which now we only have its estimate $\heta$. Second,  since the guarantee on the $\ell_\infty$ loss of the preference score vector $\bw$ is different (and in fact worse), we need to design the threshold $\xi_t$ differently from \eqref{eq:xit}. We call the modified threshold $\hat{\xi}_t$, to be defined precisely in \eqref{eqn:hatxi_t}.

\subsection{Proof of Theorem \ref{thm:etaunknown}}
As in Section \ref{sec:prf_thm1}, the crux is to analyze the $\ell_\infty$ loss of the $\bw$ vector. We show that
\begin{equation}
\| \hat{\bw}-\bw\|_\infty\le  \frac{c_0}{  2\eta - 1 }  \sqrt[4]{ \frac{ \log^2 n }{ n p L  } } \label{eqn:linfty_unknown}
\end{equation}
holds with probability   $\ge 1-c_1 n^{-c_2}$. % for some universal constants $c_0,c_1,c_2 >0$.
To guarantee accurate top-$K$ ranking, we then follow    the same argument as in~\eqref{eqn:arg1}--\eqref{eqn:arg2}. We   lower bound $\| \hat{\bw}-\bw\|_\infty$ in \eqref{eqn:linfty_unknown} by $\Delta_K$ and solve for $L$. Thus, it suffices to show~\eqref{eqn:linfty_unknown} under the conditions of Theorem~\ref{thm:etaunknown}.

The proof of \eqref{eqn:linfty_unknown} follows from several lemmata, two of which we present in this section. These are the analogues of Lemmas \ref{lemma:l2vslinfty} and \ref{lemma:l2-norm-bound-etaknown} for the $\eta$-known case.  Once we have these two lemmata, the strategy to proving \eqref{eqn:linfty_unknown} is almost the same as that in the $\eta$-known setting in Section \ref{sec:prf_thm1} so we omit the details.

The first lemma concerns the relationship between the normalized $\ell_2$ error and the $\ell_\infty$ error when we do not have access to the true  mixture weight $\eta$, but only an estimate of it given via Algorithm \ref{alg:est_eta}.

%Let $ \hat{{\cal L}}$ be  the surrogate likelihood function   when $\eta$ is replaced by $\heta$.

\begin{lemma}
\label{lemma:l2vslinfty_est}
Consider $\hat{\boldsymbol{w}}^{\rm ub}$ such that it is independent of ${\cal G}$ and satisfies \eqref{eqn:two_conditions}.
%\begin{align}
%   \frac{\| \hat{\boldsymbol{w}}^{\rm ub} - \boldsymbol{w} \| }{ \| \boldsymbol{w} \| } \leq \delta, \quad\mbox{and}\quad  \| \hat{\boldsymbol{w}}^{\rm ub} - \boldsymbol{w} \|_{\infty} \leq \xi.
%\end{align}
Consider $\hat{\boldsymbol{w}}$ such that $| \hat{w}_i - w_i | \leq | \hat{w}_i^{\rm ub} - w_i |$ for all  $i\in [n]$. Now define
\begin{align}
  w_{i}^{\sf mle} :=  \argmax_{ \tau } \hat{{\cal L}} (\tau,
\hat{\boldsymbol{w}}_{\backslash i}; \boldsymbol{Y}_{i} ), \label{eqn:max_lik_est}
\end{align}
where $ \hat{{\cal L}}(\cdot )$ is the surrogate likelihood (cf. \eqref{eq:likelihoodfunction}) constructed with $\heta$ in place of $\eta$.
Then, for all $i$, the same pointwise MLE bound in~\eqref{eq:pointmlebound} %holds with probability
%\begin{align}
%\label{eq:pointmlebound_est}
%  |w_{i}^{\sf mle} -  w_i | \lesssim \max
%\left \{ \delta + \frac{ \log n}{ n p} \cdot \xi, \frac{1}{ 2 \eta -1} \sqrt{ \frac{ \log n}{ n p L} } \right \}
%\end{align}
holds with probability $\ge 1- c_0 n^{-c_1}$.
\end{lemma}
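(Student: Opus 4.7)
The plan is to reduce this to Lemma~\ref{lemma:l2vslinfty} by treating the surrogate likelihood $\hat{\cal L}$ as a small perturbation of the true likelihood $\cal L$, so that the pointwise MLE based on $\hat{\cal L}$ is close to the one based on $\cal L$. First, I would invoke the tensor-decomposition step (Algorithm~\ref{alg:est_eta}) and the fidelity bound on $\hat\eta$ established via \cite{JO14, AGHKT} to control the estimation error $|\hat\eta-\eta|$ with high probability. Under the sample-size condition $L n p \gtrsim \frac{\log^2 n}{(2\eta-1)^4}$, one obtains something on the order of $|\hat\eta-\eta|\lesssim \frac{1}{2\eta-1}\sqrt{\frac{\log n}{npL}}$, which is a negligible perturbation of $\eta$ on the scale that matters for the pointwise score estimator. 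Independence of $\hat\eta$ from the data used in the refinement stage can be secured by the usual sample-splitting trick that is already employed in Algorithm~\ref{Algorithm:SpectralMLE}.

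Next, I would analyze the score equation $\partial_\tau \log \hat{\cal L}(\tau,\hat{\boldsymbol{w}}_{\backslash i};\boldsymbol{Y}_i)=0$ and compare it to the true score equation $\partial_\tau \log {\cal L}(\tau,\hat{\boldsymbol{w}}_{\backslash i};\boldsymbol{Y}_i)=0$. The profile log-likelihood is smooth and jointly Lipschitz in $(\tau,\eta)$ on the compact parameter set $[w_{\min},w_{\max}]\times[1/2+c,1]$, so the implicit function theorem (combined with a uniform strong-concavity bound on $\tau\mapsto \log{\cal L}$ near the true $w_i$, which is exactly what is established in the proof of Lemma~\ref{lemma:l2vslinfty}) yields the perturbation inequality
\begin{align}
|w_i^{\sf mle,\hat\eta}-w_i^{\sf mle,\eta}|\;\lesssim\;\frac{|\hat\eta-\eta|}{2\eta-1}.
\end{align}
Combined with the fidelity bound on $\hat\eta$, this extra term is absorbed into the second branch of the maximum on the right-hand side of~\eqref{eq:pointmlebound}.

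Finally, I would invoke Lemma~\ref{lemma:l2vslinfty} directly on the true-likelihood MLE $w_i^{\sf mle,\eta}$ to get the bound~\eqref{eq:pointmlebound} for it, and then combine with the triangle inequality $|w_i^{\sf mle,\hat\eta}-w_i|\le |w_i^{\sf mle,\hat\eta}-w_i^{\sf mle,\eta}|+|w_i^{\sf mle,\eta}-w_i|$ to obtain the stated pointwise bound with high probability. Taking a union bound over $i\in[n]$ and over the event that $\hat\eta$ is close to $\eta$ gives the probability guarantee $1-c_0 n^{-c_1}$.

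The main obstacle is the first perturbation step: the profile likelihood $\hat{\cal L}(\tau,\hat{\boldsymbol{w}}_{\backslash i};\boldsymbol{Y}_i)$ depends on $\hat\eta$ in a way that is not globally Lipschitz (the denominators involve $2\hat\eta-1$, which is close to $1/2$ in the hard regime), so one must confine the analysis to the event $\{|\hat\eta-\eta|\le (2\eta-1)/4\}$ so that $2\hat\eta-1$ stays bounded away from zero. Verifying that the tensor-decomposition estimator from Algorithm~\ref{alg:est_eta} places us on this event with high probability under the sample-size assumption of Theorem~\ref{thm:etaunknown} is the technically delicate point, and it is precisely what forces the $(2\eta-1)^4$ scaling in the sufficient condition on $Lnp$.
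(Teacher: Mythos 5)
Your reduction---bound $|\heta-\eta|$, show the surrogate argmax is within $O\bigl(\frac{|\heta-\eta|}{2\eta-1}\bigr)$ of the true-likelihood argmax, then apply Lemma~\ref{lemma:l2vslinfty} plus the triangle inequality---has a genuine quantitative gap at its first step. The fidelity guarantee that the tensor-decomposition step actually provides (Lemma~\ref{lem:fidelity}, via Theorem~\ref{thm:jo} and Lemma~\ref{lem:scale}) depends only on the per-edge sample size: $L\succsim \eps^{-2}\log n$ yields $|\heta-\eta|\le\eps$, so the best accuracy available is $|\heta-\eta|\asymp\sqrt{\log n/L}$, with no improvement from the factor $np$. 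Your claimed bound $|\heta-\eta|\lesssim\frac{1}{2\eta-1}\sqrt{\frac{\log n}{npL}}$ is therefore not justified, and it is exactly what your argument needs: for the triangle inequality to reproduce ``the same pointwise MLE bound'' as in \eqref{eq:pointmlebound}, the perturbation term $\frac{|\heta-\eta|}{2\eta-1}$ must be dominated by the second branch $\frac{1}{2\eta-1}\sqrt{\frac{\log n}{npL}}$, i.e.\ you need $|\heta-\eta|\lesssim\sqrt{\frac{\log n}{npL}}$, which is smaller by a factor $\sqrt{np}$ than anything Lemma~\ref{lem:fidelity} can deliver under the hypotheses of Theorem~\ref{thm:etaunknown}. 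It also cannot in general be absorbed into the first branch $\delta+\frac{\log n}{np}\xi$ with the $\delta$ used in the unknown-$\eta$ analysis. A secondary problem: you invoke ``uniform strong concavity of $\tau\mapsto\log\calL$ near $w_i$'' as being ``exactly what is established in the proof of Lemma~\ref{lemma:l2vslinfty}.'' It is not: that proof is a likelihood-comparison argument (it shows $\hat\ell(w_i)>\hat\ell(\tau)$ for all $\tau$ sufficiently far from $w_i$), never a score-equation or curvature analysis, and concavity of the profile likelihood in the $w$-parametrization of this mixture model is not established anywhere in the paper and is not obvious; so the implicit-function-theorem step would need its own proof.

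The paper avoids needing any fine additive control of $\heta$ at this stage. It reruns the entire Lemma~\ref{lemma:l2vslinfty} argument directly on the surrogate likelihood $\hat\calL$: the mismatch between the data mean (which involves the true $\eta$, cf.\ \eqref{eqn:expect_Y}) and the surrogate weights is handled by \emph{uniform multiplicative, constant-factor} approximations---$\eta w_i+(1-\eta)w_j\approx\heta w_i+(1-\heta)w_j$ (Lemma~\ref{lem:est_ratio}) and $(2\heta-1)\approx(2\eta-1)$ (Lemma~\ref{lem:est_eta1})---which only require $L\succsim\frac{\log n}{(2\eta-1)^2}$, comfortably implied by Theorem~\ref{thm:etaunknown}. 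That is why Lemma~\ref{lemma:l2vslinfty_est} can assert the \emph{same} bound \eqref{eq:pointmlebound} as the known-$\eta$ case, while the price for estimating $\eta$ surfaces only later, in the initialization bound of Lemma~\ref{lem:l2_error_unknown}. If you want to salvage your perturbation route, you would have to either prove a sharper fidelity bound for $\heta$ that improves with $np$ (not available from \cite{JO14} as used here) or settle for a weaker conclusion than the lemma states.
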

\begin{proof}
The proof  parallels that of Lemma \ref{lemma:l2vslinfty} but is more technical. We analyze the fidelity of the estimate $\heta$ relative to $\eta$ as a function of  $L$ (Lemma~\ref{lem:fidelity}). This requires the specialization of Jain and Oh~\cite{JO14} to our setting. By proving several  continuity statements, we show that the estimated normalized log-likelihood (NLL) $\frac{1}{L}\log\hat{\calL}(\cdot) $ is {\em uniformly} close to the true NLL $\frac{1}{L}\log\calL(\cdot)$ w.h.p. This leads us to prove  \eqref{eq:pointmlebound}, which is the {\em same} as the $\eta$-known case.  The details are deferred to Appendix~\ref{app:lemma:l2vslinfty_est}.
\end{proof}

Similarly to the case where $\eta$ is known, we need to subsequently control the initial error $\| \bw^{(0)}-\bw\|$. For the $\eta$-known case, this is done in Lemma \ref{lemma:l2-norm-bound-etaknown} so the following lemma is an analogue of Lemma \ref{lemma:l2-norm-bound-etaknown}.

\begin{lemma}\label{lem:l2_error_unknown}
Assume the conditions of Theorem \ref{thm:etaunknown} hold.  Let $\boldsymbol{w}^{(0)}$ be an initial estimate, i.e., an output of RankCentrality  when seeded by $\tilde{\boldsymbol{Y}}$ which consists of the shifted observations with $\heta$ in place of  $\eta$ (cf.\ \eqref{eqn:def_tildeY}). Then,
\begin{align}
  \frac{  \| \boldsymbol{w} - { \boldsymbol{w}}^{(0)} \| }{ \| \boldsymbol{w} \| }
\leq \frac{c_0}{  2\eta - 1 }  \sqrt[4]{ \frac{ \log^2 n }{ n p L  } } \label{eqn:bound_l2_error}
\end{align}
holds with probability $\ge 1- c_1 n^{-c_2}$. %Here $c_0,c_1,c_2>0$ are some universal constants.
\end{lemma}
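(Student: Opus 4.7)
The plan is to adapt the strategy of Lemma~\ref{lemma:l2-norm-bound-etaknown} and track the additional perturbation introduced by using $\hat\eta$ in place of $\eta$. Let $P_\eta$ denote the ``ideal'' RankCentrality transition matrix whose stationary distribution is $\bw$ (the one built from the true relative scores $w_i/(w_i+w_j)$), and let $\tilde P$ denote the transition matrix built from the oracle-shifted samples $\tilde Y_{ij} = (Y_{ij} - (1-\eta))/(2\eta-1)$. The key quantity in Negahban et al.'s framework~\eqref{eq:deviationofMC} is the spectral norm $\|\hat P - P_\eta\|$, so I would decompose $\hat P - P_\eta = (\hat P - \tilde P) + (\tilde P - P_\eta)$ and bound the two pieces separately.

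For the ``statistical-fluctuation'' term $\tilde P - P_\eta$, Lemma~\ref{lemma:l2-norm-bound-etaknown} (specifically~\eqref{eq:Delta_bound}) directly gives $\|\tilde P - P_\eta\| \lesssim \frac{1}{2\eta-1}\sqrt{\log n/(npL)}$ with the stated probability. For the ``plug-in'' term $\hat P - \tilde P$, a direct computation shows the map $t \mapsto (Y_{ij}-(1-t))/(2t-1)$ has derivative $(1-2Y_{ij})/(2t-1)^2$, and hence is $O(1/(2\eta-1)^2)$-Lipschitz on a fixed neighborhood of $\eta$. Therefore $|\hat{\tilde Y}_{ij} - \tilde Y_{ij}| \lesssim |\hat\eta-\eta|/(2\eta-1)^2$ uniformly in $(i,j)\in\calE$. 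A standard Chernoff bound ensures $d_{\max} \asymp np$ w.h.p.\ for the Erd\H{o}s--R\'enyi graph, so each entry of $\hat P - \tilde P$ is bounded by $|\hat\eta-\eta|/(np(2\eta-1)^2)$, and both its row sums and column sums are $O(|\hat\eta-\eta|/(2\eta-1)^2)$. Via the elementary bound $\|M\|_2 \le \sqrt{\|M\|_1\|M\|_\infty}$, I obtain $\|\hat P - \tilde P\| \lesssim |\hat\eta-\eta|/(2\eta-1)^2$.

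The next ingredient is a fidelity guarantee for $\hat\eta$. I would invoke the forthcoming Lemma~\ref{lem:fidelity}, which (as indicated in the discussion following Algorithm~\ref{alg:est_eta}) specializes the tensor-decomposition analysis of Jain--Oh~\cite{JO14} and Anandkumar et al.~\cite{AGHKT} to the coupled BTL distribution vectors $\pi_0,\pi_1$ of~\eqref{eqn:pi0} and the associated moments~\eqref{eqn:defM2}--\eqref{eqn:defM3}. For the overall bound to match the claimed $\sqrt[4]{\cdot}$ rate, the required scaling is $|\hat\eta-\eta| \lesssim (2\eta-1)\sqrt[4]{\log^2 n/(npL)}$, which holds in the regime $Lnp \gtrsim \log^2 n/(2\eta-1)^4$ assumed in Theorem~\ref{thm:etaunknown}. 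Under this guarantee, $\|\hat P - \tilde P\| \lesssim \frac{1}{2\eta-1}\sqrt[4]{\log^2 n/(npL)}$, which dominates the statistical-fluctuation term, so $\|\hat P - P_\eta\| \lesssim \frac{1}{2\eta-1}\sqrt[4]{\log^2 n/(npL)}$. Substituting into~\eqref{eq:deviationofMC}, after verifying as in~\cite{Negahban2012} that the spectral-gap hypothesis $\rho<1$ still holds under the present assumptions (it does, since the same $Lnp$ condition is used), yields the claimed bound~\eqref{eqn:bound_l2_error}.

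The main obstacle is the third step: translating the tensor-decomposition error bounds of~\cite{JO14,AGHKT} into a $(2\eta-1)$- and $L$-dependent estimate of the right order for $|\hat\eta-\eta|$. This requires computing the relevant structural parameters of $M_2$ and $M_3$ under the coupling $\pi_1 = \bone_{2|\calE|} - \pi_0$---condition number of $M_2$, separation of its eigenvalues, and robustness of the power iteration in~\cite{AGHKT}---and verifying that they scale favorably in $n$ and $2\eta-1$. By contrast, the perturbation bookkeeping in the first two paragraphs is an essentially routine extension of the $\eta$-known proof, and the final reduction to~\eqref{eq:deviationofMC} is mechanical.
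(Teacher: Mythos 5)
Your proposal is correct and essentially mirrors the paper's own proof: both arguments split the error into the $\eta$-known statistical fluctuation (bounded exactly as in Lemma~\ref{lemma:l2-norm-bound-etaknown}) plus a plug-in perturbation controlled by the $O(1/(2\eta-1)^2)$ Lipschitz constant of the shift map $t\mapsto (Y_{ij}-(1-t))/(2t-1)$ near $\eta$, then invoke Lemma~\ref{lem:fidelity} with the balanced target accuracy $|\heta-\eta| \asymp (2\eta-1)\sqrt[4]{\log^2 n/(npL)}$ so that the plug-in term dominates at the claimed fourth-root rate, and finally conclude through \eqref{eq:deviationofMC} with $\rho<1$. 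The only deviation---bounding the perturbation's spectral norm deterministically on the fidelity event via entrywise bounds and $\|M\|\le\sqrt{\|M\|_1\|M\|_\infty}$, rather than folding the terms $\eps_L'$ and $\eps_L''$ into shifted Hoeffding/Tropp thresholds as the paper does in Appendix~\ref{app:prf_upsilon_unknown}---is a cosmetic variation of the same argument, not a different route.
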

\begin{proof}
See Section \ref{sec:prf_lem:l2_error_unknown} for a sketch of the proof and  Appendix~\ref{app:prf_upsilon_unknown} for a detailed calculation of an upper bound on the spectral norm of the fluctuation matrix, which is a key ingredient of the proof of Lemma \ref{lem:l2_error_unknown}.
\end{proof}
We remark that  \eqref{eqn:bound_l2_error} is worse than its $\eta$-known counterpart in \eqref{eqn:bound_l2_error0}. %In particular, the inverse dependence on $2\eta-1$ is now an inverse dependence on $(2\eta-1)^2$.
In particular, there is now a fourth root inverse dependence on $L$ (compared to a square root inverse dependence), which means we potentially need many more observations to drive the normalized $\ell_2$ error $ \frac{  \| \boldsymbol{w} - { \boldsymbol{w}}^{(0)} \| }{ \| \boldsymbol{w} \| }$ down to the same level.
This loss is present  because there is a penalty incurred in estimating $\eta$ via the tensor  decomposition approach, especially when $\eta$ is close to $1/2$. In the analysis, we need to control the Lipschitz constants of functions such as $t\mapsto \frac{1}{ 2t-1 }$ and $t\mapsto\frac{1-t}{2t-1}$ (see e.g.~\eqref{eqn:def_tildeY}). Such functions behave badly near $1/2$. In particular, the gradient diverges as $t\downarrow 1/2$.
%We have optimized  \eqref{eqn:bound_l2_error}
We have endeavored to optimize   \eqref{eqn:bound_l2_error}  so that it is as tight as possible, at least using the proposed methods.
%The rest of this section is partitioned as follows. In Subsection~\ref{sec:summary_jo}, we summarize the main result of \cite{JO14}. In Subsection~\ref{sec:spec}, we specialize the result to the setting of the current paper and state a lemma that presents the scaling of some parameters in \cite{JO14}  specialized to our problem. In Subsection \ref{sec:sample_comp}, we summarize the sample complexity result in estimating $\eta$, i.e., we bound the probability that the estimate $\hat{\eta}$ differs from the true $\eta$ by more than some constant $\eps$. In Subsection \ref{sec:application}, we apply the aforementioned results to bound the probabilities that the estimated values of $\Upsilon_{ii}$ and $\Upsilon_{ij}$ (cf.\ Eqn.~\eqref{eq:deviationofMC}) differ from their nominal values by more than a certain threshold $t>0$. In  Subsection~\ref{sec:complete} we combine these bounds with the known $\eta$ analysis to complete the proof of Theorem~\ref{thm:etaunknown}.

Using Lemmas~\ref{lemma:l2vslinfty_est} and~\ref{lem:l2_error_unknown} and invoking a similar argument as in the $\eta$-known scenario, we can now to prove~\eqref{eqn:linfty_unknown}. One key distinction here lies in the choice of the  threshold:
\begin{equation}
 \hat{\xi}_t  :=  \frac{c}{ 2 \hat{\eta} -1 } \left \{  \sqrt[4]{ \frac{ \log^2 n }{ npL} }  +  \frac{1}{2^t} \left(  \sqrt[4]{ \frac{ n \log^2 n }{p L} }  -  \sqrt[4]{ \frac{ \log^2 n }{ npL} }  \right)  \right \}    .
\label{eqn:hatxi_t}
\end{equation}
The rationale behind this choice, which is different from~\eqref{eq:xit}, is that it drives the initial $\ell_{\infty}$ loss (associated to the initial $\ell_2$ loss in Lemma~\ref{lem:l2_error_unknown}) to approach the desired $\ell_{\infty}$ loss   in~\eqref{eqn:linfty_unknown}. Taking this choice, which we optimized, and adapting the analysis in~\cite{chen-suh:topKranking} with Lemma~\ref{lemma:l2vslinfty_est}, one can verify that the $\ell_{\infty}$ loss is monotonically decreasing in an order-wise sense: $\frac{  \| \boldsymbol{w}^{(t)} - \boldsymbol{w} \|}{ \| \boldsymbol{w} \|} \lesssim
\frac{  \| \boldsymbol{w}^{(0)} - \boldsymbol{w} \|}{ \| \boldsymbol{w} \|}$ similarly to \eqref{eqn:mono_dec}. By applying Lemma~\ref{lemma:l2vslinfty_est} to the regime where $L = O (   \poly   (n))$ and 
\begin{equation}
\frac{  \| \boldsymbol{w}^{(t)} - \boldsymbol{w} \|}{ \| \boldsymbol{w} \|} \asymp \delta \asymp \frac{1}{ 2\eta-1} \sqrt[4]{ \frac{\log^2 n}{npL}},
\end{equation}
 we get 
\begin{align}
\| \boldsymbol{w}^{\sf mle}  -  \boldsymbol{w} \|_{\infty}  \lesssim  \frac{  \| \boldsymbol{w}^{(t)}  -  \boldsymbol{w} \|}{ \| \boldsymbol{w} \|}  +  \frac{ \log n }{ np} \| \boldsymbol{w}^{(t)}  -  \boldsymbol{w} \|_{\infty}.
\end{align}
As in the $\eta$-known setting, one can show that the replacement threshold $\hat{\xi}_t$ leads to $\| \boldsymbol{w}^{\sf mle} - \boldsymbol{w} \|_{\infty} \asymp \| \boldsymbol{w}^{(t)} - \boldsymbol{w} \|_{\infty}$. This together with Lemma~\ref{lem:l2_error_unknown} gives
%\begin{align}
%\label{eq:recursion}
%\| \boldsymbol{w}^{(t+1)} - \boldsymbol{w} \|_{\infty} \lesssim \frac{  \| \boldsymbol{w}^{(t)} - \boldsymbol{w} \|}{ \| \boldsymbol{w} \|} + \frac{ \log n }{ np} \| \boldsymbol{w}^{(t)} - \boldsymbol{w} \|_{\infty}.
%\end{align}
%Applying this to~\eqref{eq:recursion} gives
\begin{align}
%\label{eq:recursion2}
\| \boldsymbol{w}^{(t+1)}  -  \boldsymbol{w} \|_{\infty}
%&  \lesssim  \frac{  \| \boldsymbol{w}^{(0)} - \boldsymbol{w} \|}{ \| \boldsymbol{w} \|}  + \frac{ \log n }{ np} \| \boldsymbol{w}^{(t)} - \boldsymbol{w} \|_{\infty} \\
 \lesssim \frac{1}{ 2\eta  -  1 } \sqrt[4]{\frac{ \log^2 n}{ npL}}  +  \frac{ \log n }{ np} \| \boldsymbol{w}^{(t)}  -  \boldsymbol{w} \|_{\infty}.
\end{align}
A straightforward computation with this recursion yields the claimed bound as long as $\frac{\log n}{np}$ is sufficiently small (e.g., $p > \frac{ 2 \log n}{n}$) and $T$ is sufficiently large (e.g., $T = O (\log n)$). This completes the proof of~\eqref{eqn:linfty_unknown}.

\subsection{Proof Sketch of Lemma \ref{lem:l2_error_unknown}}\label{sec:prf_lem:l2_error_unknown}
The proof of Lemma \ref{lem:l2_error_unknown} relies on the fidelity of the estimate $\heta$ as a function of $L$ when we use the tensor decomposition approach  by Jain and Oh~\cite{JO14} on the problem at hand.% This is formally stated in the following lemma.

\begin{lemma}[Fidelity of $\eta$ estimate] \label{lem:fidelity}
If the   number of observations per observed node pair $L$  satisfies
\begin{equation}
L\succsim\frac{1}{\eps^2}\log\frac{n}{\delta}, \label{eqn:lower_bd_L} , 
\end{equation}
then the estimate $\heta$ is $\eps$-close to the true value $\eta$ with probability exceeding $1-\delta$.
\end{lemma}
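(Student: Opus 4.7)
\medskip

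\noindent\textbf{Proof plan for Lemma \ref{lem:fidelity}.}

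The plan is to invoke the Jain--Oh tensor decomposition guarantee \cite{JO14} (which itself builds on the robust tensor power method of Anandkumar \etal~\cite{AGHKT}) specialized to the particular pair of ``distribution vectors'' $(\pi_0,\pi_1)$ defined in \eqref{eqn:pi0}, and then to translate the resulting bound on the estimated third-order statistic $G$ into a bound on $\hat\eta = \lambda_1^{-2}$. First I would record, as a preliminary lemma, the \emph{problem-specific parameters} appearing in \cite[Thm.~1]{JO14}: namely, bounds on $\|\pi_0\|_2, \|\pi_1\|_2$, the minimum singular value $\sigma_{\min}(M_2)$, and the incoherence of $\pi_0,\pi_1$. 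These are all controlled by the constants $w_{\min},w_{\max}$ and the fact that every entry of $\pi_0,\pi_1$ lies in the interval $[w_{\min}/(w_{\min}+w_{\max}),\,w_{\max}/(w_{\min}+w_{\max})]$; so all of these quantities are $\Theta$-order constants (independent of $n$) with explicit $\eta$-dependence through factors of $(2\eta-1)$.

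Next I would analyze the three estimation steps of Algorithm~\ref{alg:est_eta} in sequence. \emph{Step 1 (MatrixAltMin on $\bY^{(1)}$).} Each entry of the empirical second-moment matrix $\widehat M_2$ is an average of $\Theta(L)$ independent Bernoulli observations, so a Hoeffding bound gives entrywise error $O(\sqrt{L^{-1}\log(n/\delta)})$ with probability $\ge 1-\delta/3$; a union bound over the $O(|\calE|^2)$ entries absorbs the $\log |\calE|\asymp\log n$ factor. Plugging this into the guarantee of Algorithm~2 of \cite{JO14} yields $\|\widehat M_2 - M_2\|\le\veps_2$ with $\veps_2=O(\sqrt{L^{-1}\log(n/\delta)})$, up to the constant factor absorbed from the incoherence/conditioning parameters above. \emph{Step 2 (TensorLS on $\bY^{(2)}$).} An analogous concentration argument on the empirical third-moment tensor, combined with \cite[Thm.~1]{JO14}, produces $\|\widehat G - G^\star\|_{\mathrm{op}}\le\veps_3$ with $\veps_3$ of the same order $O(\sqrt{L^{-1}\log(n/\delta)})$. \emph{Step 3 (robust power method).} Applying the eigenvalue perturbation bound for the robust power method from \cite{AGHKT} gives $|\hat\lambda_1-\lambda_1^\star|\lesssim \veps_3$.

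Finally I would propagate this into a bound on $\hat\eta$. From the construction of $G$ in \cite{JO14}, one has $\lambda_1^\star = \eta^{-1/2}$ (up to the canonical normalization implicit in the tensor power method), hence $\eta=(\lambda_1^\star)^{-2}$ and $\hat\eta=\hat\lambda_1^{-2}$. Since the map $\lambda\mapsto\lambda^{-2}$ is Lipschitz on any interval bounded away from $0$ — and $\lambda_1^\star=\eta^{-1/2}\le\sqrt 2$ is bounded above and away from $0$ when $\eta\in(1/2,1]$ is an $\Omega(1)$-neighborhood of its true value — we obtain $|\hat\eta-\eta|\lesssim|\hat\lambda_1-\lambda_1^\star|\lesssim\veps_3$. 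Demanding the right-hand side $\le\veps$ and equating to the sample-size requirement $L\gtrsim\veps^{-2}\log(n/\delta)$ completes the proof after a union bound over the $O(1)$ high-probability events.

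\medskip

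\noindent\textbf{Expected main obstacle.} The genuinely delicate part is \emph{not} the concentration of empirical moments (standard Bernstein/Hoeffding) but rather verifying that the regularity conditions of \cite[Thm.~1]{JO14} (incoherence, conditioning of $M_2$, separation of the two mixture components) hold in our \emph{coupled} setting where $\pi_1=\bone-\pi_0$ rather than being an independent distribution. In particular, the separation parameter reduces to $\|\pi_0-\pi_1\|_2=\|2\pi_0-\bone\|_2$, which can degrade when entries of $\pi_0$ are close to $1/2$ (equivalently, when scores $w_i,w_j$ are close). I would handle this by carefully tracking how these parameters enter the Jain--Oh bound and showing that, under the $(w_{\min},w_{\max})$ assumption, they remain $\Theta(1)$, so that all the constants hidden in the $\succsim$ in \eqref{eqn:lower_bd_L} remain dimension-free (with only the expected explicit $(2\eta-1)$-factors, which are harmless for the qualitative statement of the lemma and are later absorbed into the sample-complexity expression of Theorem \ref{thm:etaunknown}).
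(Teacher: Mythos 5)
Your overall route --- specialize the Jain--Oh guarantee to the coupled pair $(\pi_0,\,\pi_1=\bone_{2|\calE|}-\pi_0)$ and then map the recovered top eigenvalue to $\heta=\lambda_1^{-2}$ --- is the same as the paper's, and you correctly identify that the delicate part is verifying the regularity parameters in this coupled setting. The genuine gap sits exactly there: you assert that $\|\pi_0\|_2$, $\sigma_{\min}(M_2)$ and the incoherence are $\Theta(1)$ constants independent of $n$. They are not. The vector $\pi_0$ has dimension $2|\calE|$ with entries bounded away from $0$ and $1$, so $\|\pi_0\|_2^2=\Theta(|\calE|)$, and the paper's Lemma~\ref{lem:scale} shows $\sigma_1(M_2),\sigma_2(M_2)=\Theta(|\calE|)=\Theta(n^2p)$, while $\mu(M_2)=\Theta(1)$ only because of the $\sqrt{2/|\calE|}$ normalization built into the block-incoherence definition \eqref{eqn:block_inc}. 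This is not a cosmetic point: the Jain--Oh sample requirement (Theorem~\ref{thm:jo}) is $L\succsim \frac{\mu(M_2)\sigma_1(M_2)^{6}|\calE|^3}{\min\{\eta,1-\eta\}\,\sigma_2(M_2)^{9}}\cdot\frac{\log(n/\delta)}{\eps^2}$, and the clean bound $L\succsim\eps^{-2}\log(n/\delta)$ claimed in the lemma emerges only because the $\Theta(|\calE|)$ scalings of the singular values cancel the explicit $|\calE|^3$ factor. If the singular values really were dimension-free constants, as you claim, the same theorem would demand $L\succsim |\calE|^3\,\eps^{-2}\log(n/\delta)\asymp n^6p^3\,\eps^{-2}\log(n/\delta)$, which does not yield the statement you are proving. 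The missing essential step is precisely the scaling computation of Lemma~\ref{lem:scale} (e.g., taking eigenvectors of $M_2$ of the form $\pi_0+b\,\pi_1$, exploiting $\|\pi_0\|^2=\|\pi_1\|^2$ and $\langle\pi_0,\pi_1\rangle=\Theta(|\calE|)$), followed by the cancellation check when substituting into \eqref{eqn:cardZ}--\eqref{eqn:lower_bdL}.

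A secondary concern: you propose to re-derive moment concentration (entrywise Hoeffding for the empirical $M_2$, an analogous bound for the third moment, then eigenvalue perturbation for the power method) instead of invoking the end-to-end Jain--Oh guarantee as a black box, which is what the paper does. That is legitimate in principle, but the error propagation through MatrixAltMin and TensorLS is controlled by exactly the conditioning and incoherence quantities whose scalings you mis-stated, so the claimed $O\big(\sqrt{L^{-1}\log(n/\delta)}\big)$ operator-norm bounds in your Steps 1--2 are currently unsubstantiated; the paper's proof avoids this entirely by reducing everything to the parameter scalings above.
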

\begin{proof}
The complete proof using Theorem \ref{thm:jo} and Lemma \ref{lem:scale} is provided in Section  \ref{prf:lem:fidelity}.
\end{proof}
We take $\delta=n^{-c_0}$ (for some constant $c_0>0$) in the sequel so \eqref{eqn:lower_bd_L} reduces to $L\succsim \frac{1}{\eps^2}\log n$.  A major contribution  in the present paper is to find a ``sweet spot'' for $\eps$; if it is chosen too small, $\|\hat{\bw}-\bw\|_\infty$ is reduced (improving the estimation error) but $L$ increases (worsening the overall sample complexity). Conversely, if $\eps$ is chosen to be too large, the requirement on $L$ in \eqref{eqn:lower_bd_L} is relaxed, but  $\|\hat{\bw}-\bw\|_\infty$ increases and hence, the overall sample complexity grows (worsens) eventually.  The estimate in \eqref{eqn:lower_bd_L} is  reminiscent of a Chernoff-Hoeffding bound estimate of the sample size per edge $L$ required to ensure that the   average of i.i.d.\ random variables is $\eps$-close to its mean with probability  $\ge 1-\delta$. However, the justification is more involved and requires  specializing Theorem \ref{thm:jo}  (to follow) to our setting.

Now, we denote the difference matrix $\DDelta:=\hatP-P$ in which $\hat{\eta}$ is used in place of $\eta$ as $\hat{\DDelta}$.  Now using Lemma~\ref{lem:fidelity}, several continuity arguments, and some concentration inequalities, we are able to  establish that
\begin{equation}
\|\hat{\DDelta}\| \lesssim \frac{1}{ 2\eta-1 } \sqrt[4]{ \frac{\log^2 n}{npL}}  \label{eqn:bd_upsilon}
\end{equation}
with probability $\ge 1-c_1 n^{-c_2}$.
%\blue{I'm not sure how to get \eqref{eqn:bd_upsilon} under the conditions of Theorem \ref{thm:etaunknown} and $\Delta_K \succsim\frac{1}{(2\eta-1)^2}\sqrt{\frac{\log^2 n}{npL} }$. please help  in Appendix \ref{app:prf_upsilon_unknown}.}
The   inequality  \eqref{eqn:bd_upsilon} is proved in Appendix \ref{app:prf_upsilon_unknown}.
  Now similarly to the proof of Lemma~\ref{lemma:l2vslinfty}, $\rho<1$  under the conditions of Theorem~\ref{thm:etaunknown}. Applying  the bound on the spectral norm of $\|\hat{\DDelta}\|$ in~\eqref{eqn:bd_upsilon} to~\eqref{eq:deviationofMC} (which continues to hold in the $\eta$-unknown setting) completes the proof of Lemma \ref{lem:l2_error_unknown}.

\subsection{Proof of Lemma \ref{lem:fidelity} }\label{prf:lem:fidelity}
To prove Lemma \ref{lem:fidelity}, we  specialize the non-asymptotic bound  on the recovery of parameters in  a mixture model in~\cite{JO14} to our setting; cf.~\eqref{eqn:pi0}.  Before stating this, we introduce a few notations. Let the singular value decomposition of $M_2$, defined in \eqref{eqn:defM2}, be written as $M_2 = U \Sigma V^T$ where $\Sigma=\mathrm{diag}(\sigma_1(M_2),\sigma_2(M_2))$ and  $U\in\bbR^{(2 |\calE|) \times 2}$ the matrix consisting of the left-singular vectors, is further decomposed as%. The matrix consisting of the left-singular vectors $U  \in\bbR^{(2 |\calE|) \times 2}$ can be written as
\begin{equation}
U = \begin{bmatrix}  ((U^{(1) } )^T &  (U^{(2) }  )^T&\ldots &  (U^{( |\calE| ) }  )^T\end{bmatrix}^T .
%U^{(1) } \\ U^{(2) }  \\\vdots\\ U^{(|\calE|) }.
%\end{bmatrix}
\label{eqn:defUk}
\end{equation}
Each submatrix $U^{(k) } \in \bbR^{2\times 2}$ where $k$ denotes a node pair.   We say that $M_2$ is {\em $\tilde{\mu}$-block-incoherent}  if the operator norms for all $|\calE|$ blocks of $U$, namely $U^{(k) }  $, are upper bounded as
\begin{equation}
\| U^{(k) } \|_2 \le\tilde{\mu} \sqrt{ \frac{2}{|\calE|}} ,\qquad  \forall\, k \in \calE . \label{eqn:block_inc}
 \end{equation}
 For  $M_2$, the smallest block-incoherent constant $\tilde{\mu}$   is known as the {\em block-incoherence of $M_2$}. We denote this as $\mu(M_2):=\inf\{ \tilde{\mu}: M_2 \mbox{ is }\tilde{\mu}\mbox{-block-incoherent}\}$. %The $i$-th singular value of $M_2$ is $\sigma_i(M_2)$.  %:=\inf\{ \tilde{\mu}: M_2 \mbox{ is }\tilde{\mu}\mbox{-block-incoherent}\}$. % As usual, the $i$-th largest singular value of the matrix $M_2 $ is denoted as $\sigma_i(M_2)$ so $\Sigma=\mathrm{diag}( [\sigma_1(M_2),\sigma_2(M_2)   ])$.

\begin{theorem}[Jain and Oh~\cite{JO14}] \label{thm:jo}
Fix any $\eps,\delta>0$.
There exists a polynomial-time algorithm in $|\calE|$,   $\frac{1}{\eps}$ and $\log\frac{1}{\delta}$  (Algorithm 1 in \cite{JO14})  such that if
\begin{equation}
|\calE| \succsim \frac{ \sigma_1(M_2)^{4.5}  \mu(M_2)   }{\sigma_2(M_2)^{4.5}   } \label{eqn:cardZ}
\end{equation}
 and for a large enough (per-edge) sample size  $L$   satisfying
\begin{equation}
 L \succsim \frac{\mu(M_2)\sigma_1(M_2)^{6}  |\calE|^3 }{\min\{\eta,1-\eta\} \sigma_2(M_2)^{9} } \cdot\frac{\log (n/\delta)}{\eps^2} , \label{eqn:lower_bdL}
\end{equation}
the estimate of the mixture weight $\hat{\eta}$ is $\eps$-close to the true mixture weight $\eta$
%\begin{equation}
%|\hat{\eta}-\eta|\le\eps \label{eqn:eta_est}
%\end{equation}
with probability exceeding $1-\delta$.
\end{theorem}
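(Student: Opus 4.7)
The plan is to establish the sample complexity bound by analyzing the three stages of Algorithm~\ref{alg:est_eta} in sequence and then combining the error propagations. First I would recast the problem in the mixture-model notation of \cite{JO14}: the two coupled distribution vectors $\pi_0$ and $\pi_1$ over $2|\calE|$ coordinates define a rank-$2$ second moment $M_2$ and a rank-$2$ third moment $M_3$ whose decompositions encode $\eta$. Crucially, a single round of pairwise comparisons on an edge $(i,j)\in\calE$ gives an unbiased sample of the corresponding entries of $\pi$, so the edgewise outer products form an unbiased estimator of $M_2$, and (after sample splitting to kill off cross-terms) the threefold outer products estimate $M_3$. Because only entries indexed by $(i,j)\in\calE$ can be observed, the first stage has to solve a partial-observation, rank-$2$ recovery problem rather than a fully observed one, which is precisely where MatrixAltMin enters.

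In the first stage, I would analyze MatrixAltMin (Algorithm~2 of \cite{JO14}), an alternating least-squares procedure for completing a low-rank matrix from a subset of noisy entries. Entrywise concentration at rate $1/\sqrt{L}$ follows from Hoeffding's inequality, while the completion guarantee requires that $M_2$ be $\mu(M_2)$-block-incoherent so that its singular vectors spread energy uniformly across the $|\calE|$ blocks. The combination produces a spectral-norm bound on $\|\hat{M}_2-M_2\|$ that is polynomial in $\mu(M_2)$, $\sigma_1(M_2)$, $\sigma_2(M_2)^{-1}$, and $1/\sqrt{L}$; the lower bound \eqref{eqn:cardZ} on $|\calE|$ is exactly what is required for the alternating iterations to contract linearly, and the factor $|\calE|^3$ in \eqref{eqn:lower_bdL} absorbs the dependence on problem size under partial observation.

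In the second and third stages I would track how the error in $\hat{M}_2$ propagates through the whitening map and through the robust tensor power iteration of \cite{AGHKT}. The whitening step transforms $M_3$ into a symmetric $2\times 2\times 2$ tensor $G$ whose eigenvalues are $\eta^{-1/2}$ and $(1-\eta)^{-1/2}$, so the identification $\hat\eta=\lambda_1^{-2}$ is a direct consequence of tensor eigendecomposition. The perturbation bound on $G$ has two pieces, of orders $\|\hat{M}_3-M_3\|/\sigma_2(M_2)^{3/2}$ and $\|\hat{M}_2-M_2\|\cdot\sigma_1(M_3)/\sigma_2(M_2)^{5/2}$; the exponent $9$ on $\sigma_2(M_2)$ in \eqref{eqn:lower_bdL} arises by squaring to translate the eigenvalue-of-$G$ error into an $\eta$-error, combined with the $3/2$-power introduced by the whitening inverse square root. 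The robust power method then converts a spectral-norm perturbation on $G$ into an additive perturbation on its leading eigenvalue with a $\min\{\eta,1-\eta\}^{-1/2}$ prefactor coming from the tensor eigenvalue gap, and one final Lipschitz step (since $\lambda\mapsto\lambda^{-2}$ is locally Lipschitz with constant $\asymp\eta^{-3/2}$) converts this into the claimed $\eps$-bound on $\hat\eta$.

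The main obstacle will be the bookkeeping of exponents across the three stages: each stage individually is standard (matrix Bernstein for moment concentration, the alternating-minimization analysis of \cite{JO14} for MatrixAltMin, the tensor perturbation analysis of \cite{AGHKT} for the robust power method), but it is delicate to verify that the effective condition number $\sigma_1(M_2)/\sigma_2(M_2)$ enters each stage with the correct power, so that the final exponents $\sigma_1(M_2)^{6}/\sigma_2(M_2)^{9}$ in \eqref{eqn:lower_bdL} are tight. A subsidiary technical point is the sample-splitting: the halves of $\bY$ used to form $\hat{M}_2$ and $\hat{M}_3$ must be disjoint so that the whitening matrix constructed from $\hat{M}_2$ is independent of $\hat{M}_3$, both for the concentration step and for the contraction analysis of TensorLS. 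Polynomial runtime in $|\calE|$, $1/\eps$ and $\log(1/\delta)$ follows because each subroutine is polynomial and the outer loops are iterated only $O(\log(1/\eps))$ times.
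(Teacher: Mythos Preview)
Your proposal sketches a plausible route through the Jain--Oh analysis, but you are proving the wrong thing relative to this paper. Theorem~\ref{thm:jo} is not proved in the paper at all: it is \emph{quoted} as a black-box result from~\cite{JO14}, and the paper's only ``proof'' consists of the attribution in the theorem header together with the surrounding sentence that the bound is being specialized from~\cite{JO14} to the present setting. The paper then immediately invokes Lemma~\ref{lem:scale} (computing $\sigma_i(M_2)=\Theta(|\calE|)$ and $\mu(M_2)=\Theta(1)$) to collapse \eqref{eqn:cardZ}--\eqref{eqn:lower_bdL} into the clean statement of Lemma~\ref{lem:fidelity}. So there is nothing here to compare your argument against: you have drafted an outline of what the proof in~\cite{JO14} presumably looks like, whereas the paper simply cites that reference and moves on.

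If your intent is to reproduce the Jain--Oh argument itself, your three-stage outline (MatrixAltMin for $M_2$, TensorLS/whitening for $G$, robust power method for the eigenvalue) is broadly faithful to Algorithm~\ref{alg:est_eta} and to the structure of~\cite{JO14,AGHKT}. But be aware that the exponent bookkeeping you flag as the ``main obstacle'' is genuinely the entire content of that proof, and your sketch does not pin it down: for instance, you attribute the $|\calE|^3$ factor to ``dependence on problem size under partial observation'' and the exponent $9$ on $\sigma_2(M_2)$ to a squaring-plus-whitening argument, but neither derivation is carried out, and small slips here would change the claimed bounds. For the purposes of \emph{this} paper, none of that is required---the correct response is simply to cite~\cite{JO14} and verify that the hypotheses (rank-$2$ mixture, block structure over $|\calE|$ edges) match.
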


It remains to  estimate  the scalings of $\sigma_1(M_2), \sigma_2(M_2)$ and $\mu(M_2)$. These require   calculations based on $\pi_0,\pi_1$ and $M_2$  and are summarized in the following crucial lemma.
\begin{lemma}  \label{lem:scale}
For a fixed  sequence of  graphs with $|\calE|$ edges, %the singular values and block-incoherence parameter scale as follows:
\begin{align}
\sigma_i(M_2) & = \Theta(|\calE|) ,\quad  i = 1,2, \label{eqn:sigma_res1}\\
\mu (M_2) &=\Theta(1). \label{eqn:mu_M2}
\end{align}
\end{lemma}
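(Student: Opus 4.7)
} The plan is to exploit the fact that $M_2$ has rank at most $2$ and to reduce every spectral question about it to the analysis of a $2\times 2$ Gram matrix. Write $A:=[\pi_0\,\,\pi_1]\in\bbR^{(2|\calE|)\times 2}$ and $\Lambda:=\mathrm{diag}(\eta,1-\eta)$ so that $M_2 = A\Lambda A^T$. The nonzero eigenvalues of $M_2$ then coincide with those of the $2\times 2$ matrix $\Lambda^{1/2} G \Lambda^{1/2}$, where $G:=A^TA$. Since $M_2$ is PSD, these eigenvalues are its two nontrivial singular values.

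The first step is to compute the entries of $G$. Writing $a_{ij}:=w_i/(w_i+w_j)$ and $b_{ij}:=w_j/(w_i+w_j)=1-a_{ij}$, one sees immediately by the symmetry of the definition of $\pi_1$ relative to $\pi_0$ that $\|\pi_0\|^2=\|\pi_1\|^2=:S$, and that $\langle \pi_0,\pi_1\rangle=:c$ satisfies the two crucial algebraic identities
\begin{align}
S+c &= \sum_{(i,j)\in\calE}(a_{ij}+b_{ij})^2 \;=\; |\calE|,\\
S-c &= \sum_{(i,j)\in\calE}(a_{ij}-b_{ij})^2 \;=\; \sum_{(i,j)\in\calE}\Big(\tfrac{w_i-w_j}{w_i+w_j}\Big)^2.
\end{align}
The bounded score range $w_i\in[w_{\min},w_{\max}]$ together with a mild nondegeneracy (e.g.\ a constant fraction of edges in $\calE$ connect items of distinguishable scores, which under the ER graph is ensured by $\Delta_K$ and the score separation hypothesis) yields $S-c=\Theta(|\calE|)$, so in particular $S,c=\Theta(|\calE|)$.

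With these estimates, the second step is purely computational. The matrix $\Lambda^{1/2}G\Lambda^{1/2}$ has trace $\eta S+(1-\eta)S = S$ and determinant $\eta(1-\eta)(S^2-c^2)=\eta(1-\eta)|\calE|(S-c)$. Since $\eta$ is bounded strictly inside $(1/2,1)$, the product $\eta(1-\eta)$ is $\Theta(1)$, and both the trace and determinant are $\Theta(|\calE|)$ and $\Theta(|\calE|^2)$ respectively. The two eigenvalues of a $2\times 2$ PSD matrix with trace $T$ and determinant $D$ satisfy $\lambda_\pm = (T\pm\sqrt{T^2-4D})/2$, and one checks at once that both $\lambda_+$ and $\lambda_-$ are $\Theta(|\calE|)$. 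This establishes \eqref{eqn:sigma_res1}.

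For the incoherence bound \eqref{eqn:mu_M2}, I will use that the column space of the left-singular matrix $U$ equals $\mathrm{span}(\pi_0,\pi_1)$. Any column $u$ of $U$ can therefore be written as $u=\alpha\pi_0+\beta\pi_1$, and the unit-norm condition $\|u\|^2=1$ is $(\alpha,\beta)\,G\,(\alpha,\beta)^T=1$. Because the smallest eigenvalue of $G$ is $\Theta(|\calE|)$ (by the previous step applied to $\Lambda=I$, or by the same Gram-matrix calculation), we get $\alpha^2+\beta^2=O(|\calE|^{-1})$. The $k$-th $2\times 1$ block of $u$ is $\alpha(a_k,b_k)^T+\beta(b_k,a_k)^T$, whose Euclidean norm is at most $|\alpha|+|\beta|=O(|\calE|^{-1/2})$. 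Consequently the $2\times 2$ block $U^{(k)}$ has operator norm $O(|\calE|^{-1/2})$, i.e.\ of the form $\tilde{\mu}\sqrt{2/|\calE|}$ with $\tilde{\mu}=O(1)$; hence $\mu(M_2)=O(1)$. A matching $\Omega(1)$ lower bound follows from averaging: $\sum_k\|U^{(k)}\|_F^2=\mathrm{rank}(M_2)=2$, so some block $k^*$ has $\|U^{(k^*)}\|_F^2\geq 2/|\calE|$, forcing $\mu(M_2)=\Omega(1)$.

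The main obstacle is the quantitative lower bound $S-c=\Omega(|\calE|)$, which is what prevents $M_2$ from degenerating to a rank-$1$ matrix (the degeneracy occurring precisely when $\pi_0=\pi_1$, i.e.\ when all scores are equal on the edges of $\calE$). This is the only place where the score-separation assumption truly enters; everything else is algebra driven by the identity $a_{ij}+b_{ij}=1$.
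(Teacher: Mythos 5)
Your proof is correct and follows essentially the same route as the paper's: both exploit the rank-two structure of $M_2$, the identity $\|\pi_0\|^2=\|\pi_1\|^2$, and the explicit values of $\|\pi_0\|^2$ and $\langle\pi_0,\pi_1\rangle$ to pin down the two nonzero eigenvalues, and both read off the incoherence from the fact that the singular vectors lie in $\mathrm{span}(\pi_0,\pi_1)$ with coefficients of size $\Theta(|\calE|^{-1/2})$ after normalization. The difference is organizational but worth noting: the paper posits an eigenvector ansatz $v=\pi_0+b\pi_1$, solves a quadratic for $b^*$, and concludes $\sigma=\Theta(\eta|\calE|)$ from $b^*=\Theta(1)$, which silently ignores possible cancellation in $\eta(\|\pi_0\|^2+b^*\langle\pi_0,\pi_1\rangle)$ for the smaller root; your reduction to the $2\times 2$ matrix $\Lambda^{1/2}G\Lambda^{1/2}$ with trace $S$ and determinant $\eta(1-\eta)(S^2-c^2)$ makes the issue explicit and shows exactly what is needed for $\sigma_2(M_2)=\Theta(|\calE|)$, namely $S-c=\sum_{(i,j)\in\calE}\big(\tfrac{w_i-w_j}{w_i+w_j}\big)^2=\Theta(|\calE|)$ and $\eta(1-\eta)=\Theta(1)$. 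These conditions are genuinely necessary (with identical scores, or with $\eta=1$, $M_2$ is rank one and $\sigma_2=0$, even though the paper's model nominally allows $\eta\in(1/2,1]$), so your flagged "nondegeneracy" is not an artifact of your method but a hypothesis the paper's own argument implicitly uses without stating. Your incoherence argument is likewise a cleaner version of the paper's (bounding $\alpha^2+\beta^2$ via $\lambda_{\min}(G)$ rather than asserting elementwise bounds on $v$), and your averaging argument for the matching $\Omega(1)$ lower bound on $\mu(M_2)$ is a small addition the paper does not supply.
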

\begin{proof}
The proof of this lemma can be found in Appendix~\ref{app:prf_scalings}.  It hinges on the fact that $\|\pi_0\|^2 = \|\pi_1\|^2$, as the populations have ``permuted''  preference scores.%'' vectors; cf.\ \eqref{eqn:pi0}. % as the adversarial subpopulation has the same parameters as the faithful one.
%This is where the assumption that the preference scores for the two populations are coupled is essential.
\end{proof}
Now the proof of Lemma \ref{lem:fidelity} is immediate upon substituting \eqref{eqn:sigma_res1} into \eqref{eqn:cardZ}--\eqref{eqn:lower_bdL}. We then notice that $|\calE| = \Theta(n^2p) = \omega(1)$ with high probability   so \eqref{eqn:cardZ} is readily satisfied. Also $\frac{\mu(M_2)\sigma_1(M_2)^{6}  |\calE|^3 }{\min\{\eta,1-\eta\}\sigma_2(M_2)^{9} } =\Theta(1)$ so we recover~\eqref{eqn:lower_bd_L} as desired.

\section{Experimental Results}
\label{sec:experiments}

For the case where $\eta$ is known, a number of   experiments on synthetic data were conducted to validate Theorem \ref{thm:etaknown}. We first state parameter settings common to all experiments. The total number of items is $n=1000$ and the number of ranked items $K=10$. In the pointwise MLE step in Algorithm \ref{Algorithm:SpectralMLE}, we set the number of iterations $T=\lceil\log n\rceil$ and $c=1$ in the formula for the threshold $\xi_t$ in \eqref{eq:xit}. The observation probability of each edge of the Erd\H{o}s-R\'{e}nyi graph is $p = \frac{6\log n}{n}$. The latent scores are uniformly generated from the dynamic range $[0.5,1]$. Each (empirical) success rate is averaged over $1000$ Monte Carlo trials.

\begin{figure}
\subfloat[]{\includegraphics[width=0.45\columnwidth]{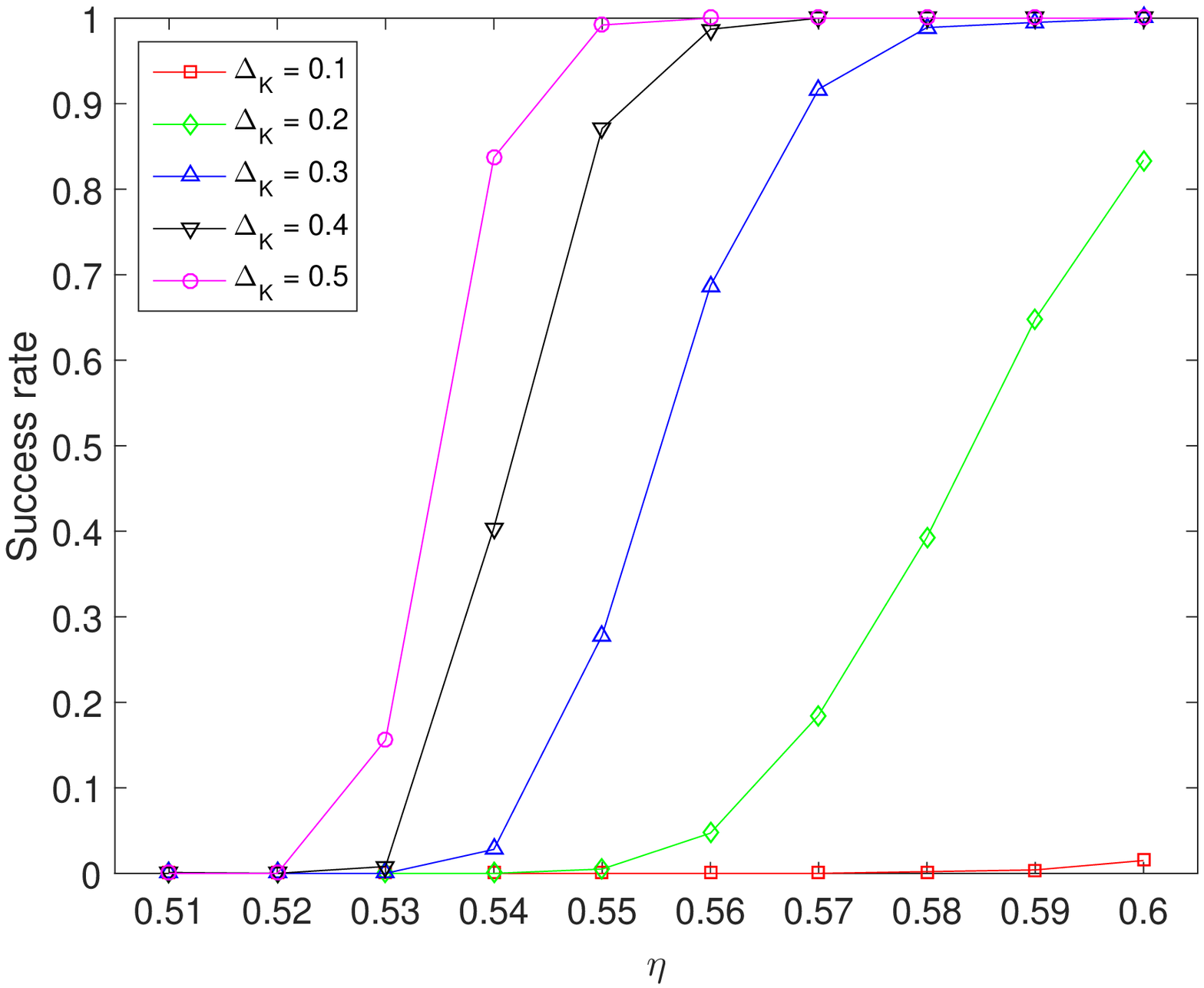}} \hspace{.1in}
\subfloat[]{\includegraphics[width=0.45\columnwidth]{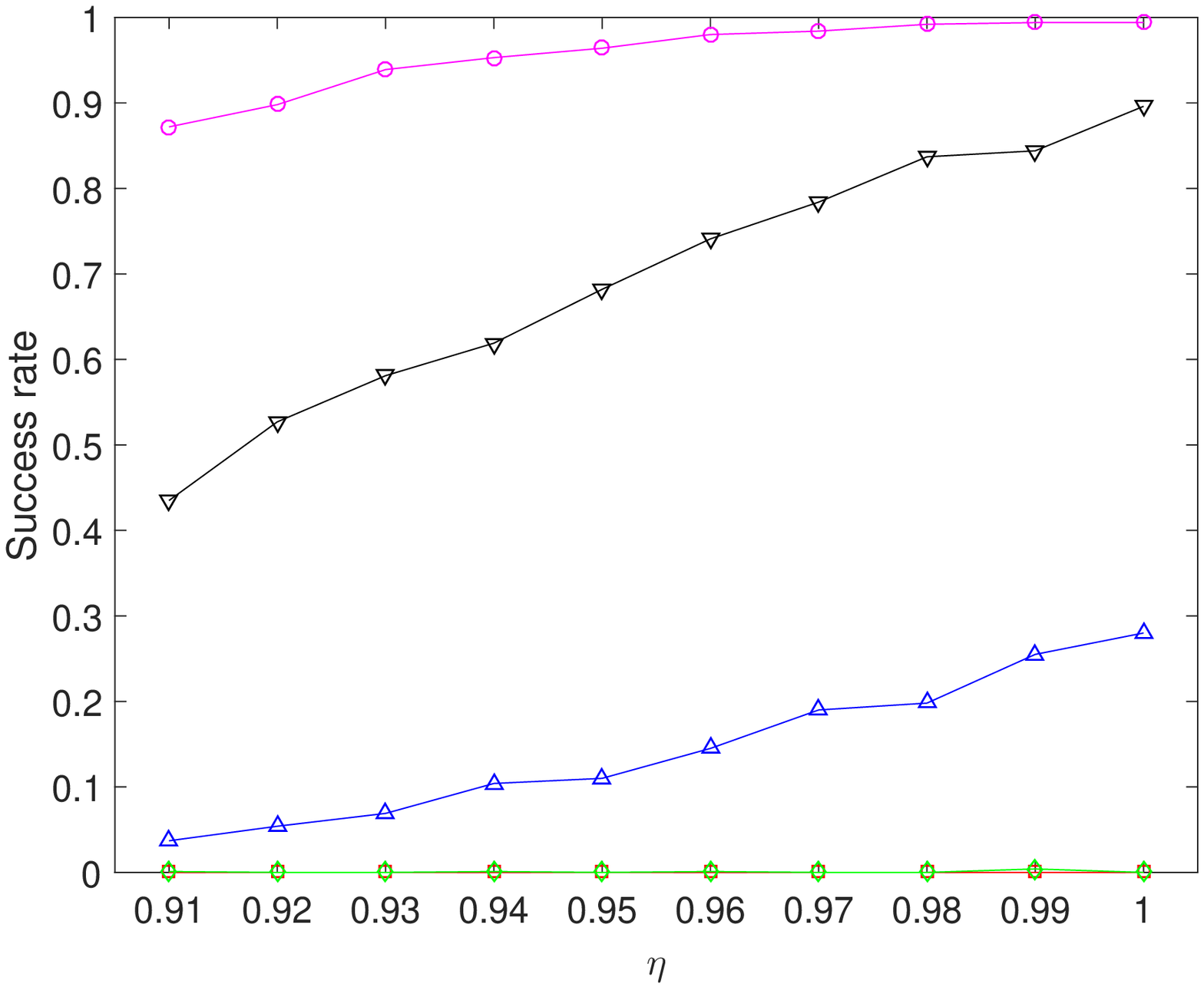}}\\
\centering
\caption{Success rates across $\eta$ for (a) $\eta$ close to $1/2$ and  (b) $\eta$ close to $1$.}\label{fig:succ_eta}
\end{figure}

We first examine the relations between success rates and $\eta$ for various values of the normalized separation of the scores $\Delta_K\in\{0.1,0.2,\ldots,0.5\}$. Here we consider  two different scenarios, one being  such that $\eta$ is close to $1/2$ and the other being such that  $\eta$  is close to $1$. We set the number of samples per edge, $L=1000$ for the first case and $L=10$ for the second. This is because when $\eta$ is small, more data samples are needed to achieve non-negligible success rates. The results for these two scenarios are shown in Figs.~\ref{fig:succ_eta}(a) and \ref{fig:succ_eta}(b) respectively. For both cases, when $L$ is fixed, we observe as $\eta$ increases, the success rates increase accordingly. However, the effect of $\eta$ on success rates is more prominent when $\eta$ is close to $1/2$. This is in accordance to  \eqref{eq:MinSampleComplexity} in Theorem \ref{thm:etaknown} since $1/(2\eta-1)^2$  has sharp decrease (as $\eta$ increases) near $1/2$ and a gentler decrease near $1$. Also, success rates increase when $\Delta_K$ increases. This again corroborates \eqref{eq:MinSampleComplexity} %Theorem \ref{thm:etaknown}
which says that the sample complexity is proportional to $1/\Delta_K^2$.

\begin{figure}
\center
\includegraphics[width=0.5\textwidth]{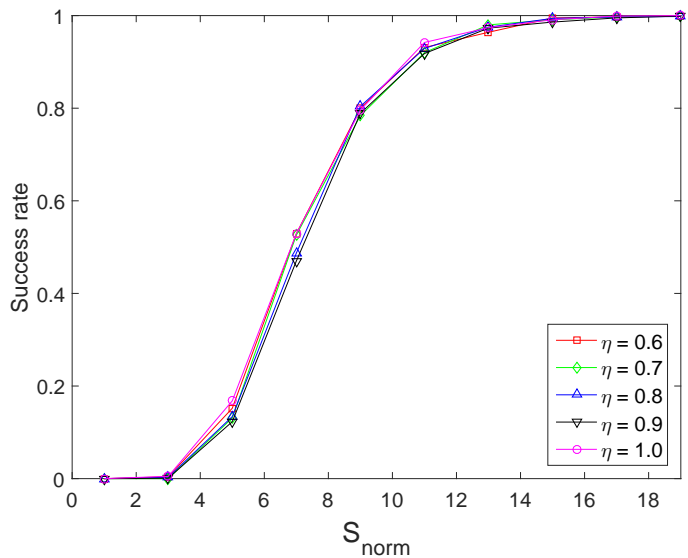}
\caption{Success rates across normalized sample size $S_{\mathrm{norm}}$.}
\label{fig:succ_L}
\end{figure}

Next we examine the relations between success rates and normalized sample size
\begin{equation}
S_{\mathrm{norm}} := \frac{S_{\Delta_K}}{ (n\log n)/ [(2\eta-1)^2\Delta_K^2]},
\end{equation}
for $\eta\in\{0.6,0.7,\ldots,1\}$. We fix $\Delta_K=0.4$ in this case. The results are shown in Fig.~\ref{fig:succ_L}. We observe the relations between success rates and $S_{\mathrm{norm}}$ are almost the same for all $\eta$'s so the implied constant factor in $\asymp$ notation in~\eqref{eq:MinSampleComplexity} depends very weakly on $\eta$ (if at all).  %In addition, we also observe a ``phase transition'' phenomenon around $S_{\mathrm{norm}}=6$.

Finally we numerically  examine the relation  between the   sample complexity   and $\eta$. %Similar to previous part, we fiz $\Delta_K$
We fix  $\Delta_K=0.4$ and focus on the regime where $\eta$ is close to $1/2$. %since this is the most interesting regime.
For each $\eta$, we use the bisection method to approximately find the minimum sample size per edge $\hatL$ that achieves a high success rate $q_{\mathrm{th}}=0.99$. % We find an approximation of $L_{\min}$, $\hatL_{\min}$
Specifically, the bisection procedure terminates when the empirical success rate $\hatq$ corresponding to $\hatL$ satisfies $|\hatq-q_{\mathrm{th}}|<\epsilon$, where $\epsilon$ is set to $5\times 10^{-3}$. We repeat such a  procedure $10$ times to get an average result $\hatL_{\mathrm{ave}}$. We also compute the resulting  standard deviation and observe that it is small across the $10$ independent runs.  Define the expected minimum total sample size
\begin{equation}
\hatS := \binom{n}{2}p\hatL_{\mathrm{ave}}.
\end{equation}
 To illustrate the  explicit dependence of $\hatS$ on $\eta$, we further normalize $\hatS$ to
\begin{equation}
 \hatS_{\mathrm{norm}} := \frac{\hatS}{(n\log n)/\Delta_K^2},
 \end{equation}
thus  isolating the dependence of  minimum total sample size  on $\eta$ only. We then fit a curve $C/(2\eta-1)^2$  to $\hatS_{\mathrm{norm}}$, where $C$ is chosen to best fit the points by optimizing  a least-squares-like objective function. The empirical  results (mean and one standard deviation) together with the fitted curve are shown in Fig.~\ref{fig:L_min_eta}. We observe $\hatS_{\mathrm{norm}}$ depends on $\eta$ via $1/(2\eta-1)^2$ almost perfectly  up to a constant. This corroborates our theoretical result in \eqref{eq:MinSampleComplexity}, i.e.,  the reciprocal  dependence of the sample complexity  on $(2\eta-1)^2$.

\begin{figure}
\label{fig:validate_theory}
\center
\includegraphics[width=0.5\textwidth]{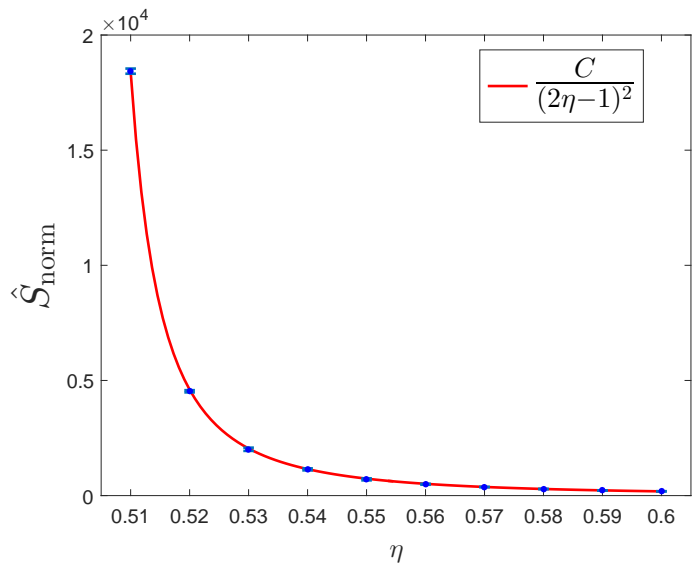}
\caption{Normalized empirical sample size $\hatS_{\mathrm{norm}}$ for $\eta$ close to $1/2$.}\label{fig:L_min_eta}
\end{figure} 

For the case where $\eta$ is not known, the storage costs turn out to be prohibitive even for a moderate number of items $n$. Hence, we leave the implementation of the algorithm for the $\eta$-unknown case to future work. It is likely that one may need to formulate the ranking problem in an online manner~\cite{weng11} or  resort to online methods for performing tensor decompositions~\cite{Cichocki_13,Ge_15,Huang_15}.
\section{Conclusion and Further Work} \label{sec:conclusion}
%\section{Discussions and Further Work}
In this paper, we have provided an analytical  framework for addressing the problem of recovering the  top-$K$ ranked items in an adversarial crowdsourced setting. We considered two scenarios. First, the proportion of adversaries $1-\eta$ is known and the second, more challenging scenario, is when this parameter is unknown.  For the first scenario, we adapted the \emph{SpectralMLE}~\cite{chen-suh:topKranking} and \emph{RankCentrality}~\cite{Negahban2012} algorithms to provide an order-wise optimal sample complexity bound for the total number of measurements for recovering the exact top-$K$ set. These results were verified numerically and the dependence of the sample complexity on the reciprocal of $(2\eta-1)^2$ was corroborated. For the second scenario, we adapted Jain and Oh's   global optimality result for disambiguating a mixture of discrete distributions~\cite{JO14} to first learn $\eta$. Subsequently, we plugged this (inexact) estimate into the known-$\eta$ algorithm and utilized  a sequence of  continuity arguments to obtain an upper bound on the sample complexity. This bound is   order-wise worse than the case where $\eta$ is known, showing that the error induced by the estimation  of the mixture parameter dominates the overall procedure.

A few natural questions result from our analyses.

\begin{enumerate}
\item Can we close the gap in the sample complexities between the $\eta$-known and $\eta$-unknown scenarios? This seems challenging given that (i) threshold $\hat{\xi}_t$ in \eqref{eqn:hatxi_t} must not be dependent on parameters that are assumed to be unknown such as the weight separation $\Delta_K$ and (ii) the fundamental difficulty of obtaining a globally optimal solution for the fraction of adversaries from samples that are drawn from a mixture distribution. Thus, we conjecture that if we adopt a two-step approach---first estimate $\eta$, then plug this estimate into the $\eta$-known algorithm---such a loss is unavoidable. This is because the fidelity of the estimate of $\eta$ in Lemma \ref{lem:fidelity} is natural (cf.\ Chernoff-Hoeffding bound) and does not seem to be order-wise improvable. Thus, we opine that a new class of algorithms, avoiding the explicit estimation of $\eta$, needs to be developed to improve the overall sample complexity performance. 
\item If closing the gap is difficult, can we hope to derive a converse or impossibility result, explicitly taking  into account the fact that $\eta$ is unknown? Our current converse result assumes that $\eta$ is known, which may be too optimistic for the unknown setting.
\item The tensor decomposition method~\cite{AGHKT,JO14}, while being polynomial time in its parameters, incurs high storage costs. Hence, in practice, it implementation to yield meaningful estimates of $\eta$ is challenging. There has been some recent progress on large-scale scalable tensor decomposition algorithms  in \cite{Cichocki_13,Ge_15,Huang_15}. In these works, the authors aim to avoid storing and manipulating large tensors directly. However,  since implementation is not the focus of the present work, we leave this to future work. % the efficient implementation of the estimation of  $\eta$ to the future.
% We endeavor to optimize the algorithms especially for large-scale problems as these  are still lacking for tensor methods. RENBO TO FILL...
\item  Recent work by Shah and Wainwright~\cite{shah15} has shown that simple counting methods for certain observation models (including the BTL model) achieve   order-wise optimal sample complexities. %Such counting methods are also robust to model mismatch.  
In the observation model considered therein,    for each pair of items $i$ and $j$, there is a random number of observations $R_{ij}$  that follows a binomial distribution with parameters $L\in\bbN$ and probability of success $p \in (0,1)$. %The authors also considered the mixture-BTL model in their experiments. 
Notice that  the observation model in \cite{shah15} differs from ours. % because in our model, with high probability,  approximately $\binom{n}{2} - \Theta(n^2p)=\binom{n}{2}-\Theta (n\log n)\approx\binom{n}{2}$ pairs of items do not have any pairwise measurements. In the model by Shah and Wainwright~\cite{shah15},  with high probability,  most   pairs of items have at least one pairwise measurement.
%Hence, in some sense,
%Indeed, the measurement model in~\cite{shah15} makes the  subsequent analyses somewhat simpler and so counting methods are promising there.
%Nevertheless, it is a fruitful research direction to see if the counting ideas in \cite{shah15} can be used to simplify our analyses.
\item Lastly, it would be interesting to consider other choice models (e.g., the Plackett-Luce model \cite{Plackett1975} studied in  \cite{hajek2014minimax} and \cite{maystre15})   as well as other comparison graphs not limited to the ER graph, as the comparison graph structure affects the sample complexity significantly, as suggested in~\cite[Theorem 1]{Negahban2012}.
\end{enumerate} 
%%%
\numberwithin{equation}{section}
\renewcommand{\theequation}{\thesection.\arabic{equation}}

\appendices

%\appendices

\section{Proof of Lemma~\ref{lemma:l2vslinfty}}
\label{app:Proofoflemmal2vslinfty}

For ease of presentation, we will henceforth assume that $w_{\max}=1$ since this simply amounts to a rescaling of all the preference scores. 

To prove the lemma, it suffices to show that if $\tau$ satisfies $   |\tau -  w_i | \succsim \max
\left \{ \delta + \frac{ \log n}{ n p} \cdot \xi,\frac{1}{ 2\eta -1}   \sqrt{ \frac{ \log n}{ n p L} } \right \}
$, then the corresponding likelihood function cannot be the point-wise MLE:
\begin{align}
\label{eq:Ltau_Lwi}
 {\cal L} (\tau,
\hat{\boldsymbol{w}}_{\backslash i}; \boldsymbol{y}_{i} ) <
   {\cal L} (w_i,
\hat{\boldsymbol{w}}_{\backslash i}; \boldsymbol{y}_{i} ).
\end{align}

We start by evaluating the likelihood function w.r.t. the ground-truth score vector:
\begin{align}
  \ell^* (\tau)  &:= \frac{1}{L} \log {\cal L} (\tau, \boldsymbol{w}_{\backslash i}; \boldsymbol{Y}_{i} ) \label{eqn:likelihood_fn0}\\
  & =   \sum_{j : (i,j) \in {\cal E}} \left \{
 Y_{ij} \log \left ( \eta \frac{ \tau }{ \tau + w_j } + (1- \eta) \frac{w_j}{ \tau + w_j} \right ) + ( 1 -  Y_{ij} ) \log \left( \eta \frac{ w_j }{ \tau + w_j } + (1- \eta) \frac{\tau}{ \tau + w_j} \right)
\right \}. 
\label{eqn:likelihood_fn}
\end{align}
The likelihood loss w.r.t.\ $w_i$ and $\tau$ is then computed as:
\begin{align}
   \ell^* (w_i) - \ell^* (\tau) =  \sum_{j : (i,j) \in {\cal E}} \left \{
 Y_{ij}  \log \left ( \frac{ \eta \frac{ w_i }{ w_i + w_j } + (1- \eta) \frac{w_j}{ w_i + w_j} }{    \eta \frac{ \tau }{ \tau + w_j } + (1- \eta) \frac{w_j}{ \tau + w_j}  } \right ) +
(  1 - Y_{ij} ) \log \left( \frac{  \eta \frac{ w_j }{ w_i + w_j } + (1- \eta) \frac{w_i}{ w_i + w_j} }{    \eta \frac{ w_j }{ \tau + w_j } + (1- \eta) \frac{\tau}{ \tau + w_j}   } \right)
\right \}.
\end{align}
Taking expectation w.r.t.\ $\boldsymbol{Y}_i$ conditional on ${\cal G}$, we get:
\begin{align}
 \mathbb{E} \left[   \ell^* (w_i) - \ell^* (\tau) | {\cal G} \right] &=   \sum_{j : (i,j) \in {\cal E}}
D \left( \eta \frac{ w_i }{ w_i + w_j } + (1- \eta) \frac{w_j}{ w_i + w_j} \biggr   \|  \eta \frac{ \tau }{ \tau + w_j } + (1- \eta) \frac{w_j}{ \tau + w_j} \right) \label{eq:meantrueloss}\\
& \overset{(a)}{\succsim}  n p ( 2 \eta -1)^2  | w_i - \tau |^2
\label{eq:meantrueloss1}
\end{align}
where $(a)$ follows from Pinsker's inequality ($D (p \| q) \geq 2 (p-q)^2$; see~\cite[Theorem 2.33]{yeung2008information} for example) and using the fact that $d_i \asymp np$ when $p > \frac{\log n}{n}$. Here $d_i$ indicates the degree of node $i$: the number of edges incident to node $i$. This suggests that the true point-wise MLE of $w_i$ strictly dominates that of $\tau$ in the \emph{mean} sense. We can actually demonstrate that this is the case beyond the mean sense with high probability, as long as $| w_i - \tau | \succsim \frac{1}{ 2\eta-1} \sqrt{  \frac{ \log n}{  npL} }$ (our hypothesis), which is asserted in the following lemma.

\begin{lemma}
\label{lemma:trueloss}
Suppose that  $| w_i - \tau | \succsim \frac{1}{ 2\eta-1} \sqrt{  \frac{ \log n}{  npL} }$. Then,
\begin{align}
\label{eq:truelossbound}
 \ell^* (w_i) - \ell^* (\tau)    \succsim n p ( 2 \eta -1)^2  | w_i - \tau |^2.
\end{align}
holds with probability approaching one.
\end{lemma}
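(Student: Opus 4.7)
\textbf{Proof proposal for Lemma~\ref{lemma:trueloss}.}
The plan is to establish \eqref{eq:truelossbound} via a concentration argument around the conditional mean bound already obtained in \eqref{eq:meantrueloss1}. First I will rewrite
\[
  L\bigl(\ell^*(w_i)-\ell^*(\tau)\bigr)=\sum_{\ell=1}^{L}\sum_{j:(i,j)\in\calE} X_{ij}^{(\ell)},
\]
where $X_{ij}^{(\ell)} := Y_{ij}^{(\ell)}(A_j-B_j)+B_j$ with
$A_j:=\log\!\bigl(p_{ij}/\tilde p_{ij}\bigr)$,
$B_j:=\log\!\bigl((1-p_{ij})/(1-\tilde p_{ij})\bigr)$, and
$p_{ij}:=\eta\frac{w_i}{w_i+w_j}+(1-\eta)\frac{w_j}{w_i+w_j}$,
$\tilde p_{ij}:=\eta\frac{\tau}{\tau+w_j}+(1-\eta)\frac{w_j}{\tau+w_j}$. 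Conditional on the graph $\calG$, these $L\, d_i$ summands are independent, and each $Y_{ij}^{(\ell)}$ is Bernoulli with parameter $p_{ij}$.

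Next I will control the per-edge fluctuation. A direct calculation gives
\[
  p_{ij}-\tilde p_{ij}=(2\eta-1)\!\left(\frac{w_i}{w_i+w_j}-\frac{\tau}{\tau+w_j}\right)=\Theta\bigl((2\eta-1)(w_i-\tau)\bigr),
\]
since $w_j,w_i,\tau\in[w_{\min},w_{\max}]$. Because $p_{ij}$ and $\tilde p_{ij}$ stay uniformly bounded away from $0$ and $1$ (a consequence of $\eta\in(1/2,1]$ and the bounded dynamic range), a first-order Taylor expansion of $\log$ yields $|A_j-B_j|\asymp |p_{ij}-\tilde p_{ij}|\asymp (2\eta-1)|w_i-\tau|$, and therefore each $X_{ij}^{(\ell)}$ is bounded and has variance of order $(2\eta-1)^2|w_i-\tau|^2$.

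The main step is then a Bernstein inequality applied to the centered sum
$Z:=L(\ell^*(w_i)-\ell^*(\tau))-L\,\mathbb{E}[\ell^*(w_i)-\ell^*(\tau)\mid\calG]$. With $d_i\asymp np$ (which holds w.h.p.\ by standard degree concentration in $\mathcal{G}_{n,p}$ when $p>\log n/n$, and which was also used in \eqref{eq:meantrueloss1}), the variance proxy is of order $Ld_i(2\eta-1)^2|w_i-\tau|^2\asymp Lnp(2\eta-1)^2|w_i-\tau|^2$, while the uniform bound on $|X_{ij}^{(\ell)}|$ is $O((2\eta-1)|w_i-\tau|)$. Bernstein's inequality then gives, with probability at least $1-n^{-c}$ for some $c>0$,
\[
  |Z|\lesssim \sqrt{Lnp(2\eta-1)^2|w_i-\tau|^2\,\log n}+(2\eta-1)|w_i-\tau|\log n.
\]
Dividing by $L$, the deviation of $\ell^*(w_i)-\ell^*(\tau)$ from its mean is at most a constant multiple of $\sqrt{np(2\eta-1)^2|w_i-\tau|^2\log n/L}+(2\eta-1)|w_i-\tau|\log n/L$.

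Finally I will check that, under the hypothesis $|w_i-\tau|\succsim \frac{1}{2\eta-1}\sqrt{\log n/(npL)}$, the mean term $np(2\eta-1)^2|w_i-\tau|^2$ from \eqref{eq:meantrueloss1} strictly dominates both deviation terms (the first square-root term is dominated precisely because the hypothesis is tight for this comparison, and the second additive term is dominated under the standing assumption $Lnp\gtrsim\log n/(2\eta-1)^2$). This yields \eqref{eq:truelossbound} with high probability. The only subtle point, which I expect to be the main obstacle, is to make the two-sided estimate $|A_j-B_j|\asymp(2\eta-1)|w_i-\tau|$ uniform in $(w_j,\tau)$ over the admissible compact range, so that the constants in Bernstein are truly universal and do not degrade as $\eta\downarrow 1/2$ (any $\eta$-dependence must be exposed explicitly through the factor $2\eta-1$ already present); a careful Taylor remainder bound together with the bounded dynamic range of $\bw$ handles this.
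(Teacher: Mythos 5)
Your proposal is correct and follows essentially the same route as the paper's own proof: it centers $\ell^*(w_i)-\ell^*(\tau)$ around the conditional-mean lower bound \eqref{eq:meantrueloss1}, bounds the conditional variance by $(2\eta-1)^2|w_i-\tau|^2\,np/L$ and the individual summands by $O((2\eta-1)|w_i-\tau|/L)$, applies Bernstein's inequality, and uses the hypothesis $|w_i-\tau|\succsim\frac{1}{2\eta-1}\sqrt{\log n/(npL)}$ (together with $Lnp\gtrsim\log n$) to show the mean dominates both deviation terms. The only differences are cosmetic bookkeeping: you expand the sum over individual $Y_{ij}^{(\ell)}$ rather than the aggregated $Y_{ij}$, and you obtain the bound $|A_j-B_j|\asymp(2\eta-1)|w_i-\tau|$ via a Taylor expansion where the paper uses $\log(\beta/\alpha)\le(\beta-\alpha)/\alpha$.
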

\begin{proof}
Using Bernstein's inequality formally stated in Lemma~\ref{lemma:BernsteinIneq} (see Appendix~\ref{app:BernsteinIneq}), one can obtain a lower bound on~$\ell^* (w_i) - \ell^* (\tau)$ in terms of  its expectation $\mathbb{E} \left[   \ell^* (w_i) - \ell^* (\tau) | {\cal G} \right]$, its variance $\var \left[   \ell^* (w_i) - \ell^* (\tau) | {\cal G} \right]$, and the maximum value of individual quantities that we sum over. One can then show that the variance and the maximum value are dominated by the expectation under our hypothesis, thus proving that the lower bound is the order of the desired bound as claimed. For completeness, we include the detailed proof at the end of this appendix; see Appendix~\ref{app:trueloss}.
\end{proof}

However, when running our algorithm, we do not have access to the ground truth scores $\boldsymbol{w}_{\backslash i}$. What we can actually compute is 
\begin{equation}
 \hat{\ell} (\tau ):= \frac{1}{L} \log {\cal L} (\tau, \hat{\boldsymbol{w}}_{\backslash i}; \boldsymbol{Y}_{i} )
\end{equation}
 instead of $ {\ell}^* (\tau)$. Fortunately, such surrogate likelihoods are sufficiently close to the true likelihoods, which we will show in the rest of the proof. From this, we will next demonstrate that~\eqref{eq:Ltau_Lwi} holds for sufficiently separated $\tau$ such that $   |\tau -  w_i | \succsim \max
\left \{ \delta + \frac{ \log n}{ n p} \cdot \xi,\frac{1}{ 2\eta -1}   \sqrt{ \frac{ \log n}{ n p L} } \right \}
$.

As seen from~\eqref{eq:trueloss_form2}, one can quantify the difference between $\hat{\ell}(w_i)$ and $\hat{\ell}(\tau)$ as
\begin{align}
\label{eq:surrogateloss}
{\hat{\ell} } (w_i) - {\hat{\ell} } (\tau)    = \sum_{j : (i,j) \in {\cal E}} \left \{
 Y_{ij} \log  \left \{ \frac{ ( \eta w_i + (1 - \eta) {\hat{w}_j } ) ( \eta {\hat{w}_j } + (1 - \eta) \tau ) }{ ( \eta \tau + (1 - \eta) {\hat{w}_j } ) ( \eta {\hat{w}_j } + (1 - \eta) w_i ) }  \right \}
+  \log \left( \frac{ (\tau  + {\hat{w}_j }) ( \eta {\hat{w}_j } + (1-\eta) w_i ) }{ (w_i + {\hat{w}_j }) ( \eta {\hat{w}_j } + (1-\eta) \tau ) } \right)
\right \}.
\end{align}

Using~\eqref{eq:surrogateloss}~and~\eqref{eq:trueloss_form2}, we can represent the gap between the surrogate loss and the true loss as
\begin{align}
& { \hat{\ell }} (w_i) - { \hat{\ell }} (\tau)  - ( \ell^* (w_i) - \ell^* (\tau))   \nn\\
&=\sum_{j : (i,j) \in {\cal E}} Y_{ij} \bigg[\left \{
 \log  \left \{ \frac{ ( \eta w_i + (1 - \eta) {\hat{w}_j } ) ( \eta {\hat{w}_j } + (1 - \eta) \tau ) }{ ( \eta \tau + (1 - \eta) {\hat{w}_j } ) ( \eta {\hat{w}_j } + (1 - \eta) w_i ) }  \right \}   - \log  \left \{ \frac{ ( \eta w_i + (1 - \eta) w_j ) ( \eta w_j + (1 - \eta) \tau ) }{ ( \eta \tau + (1 - \eta) w_j ) ( \eta w_j + (1 - \eta) w_i ) }  \right \}
\right \} \nn\\
&\qquad+
 \left \{
\log \left( \frac{ \tau  + {\hat{w}_j } }{ w_i + {\hat{w}_j } } \right) + \log \left( \frac{  \eta {\hat{w}_j } + (1-\eta) w_i  }{  \eta {\hat{w}_j } + (1-\eta) \tau  } \right) -
\log \left( \frac{ \tau  + w_j }{ w_i + w_j } \right)  -  \log \left( \frac{  \eta {w}_j  + (1-\eta) w_i  }{  \eta {w}_j  + (1-\eta) \tau  } \right)
\right \} \bigg]. 
\label{eqn:gap}
\end{align}
Using Bernstein's inequality under our hypothesis as we did in Lemma~\ref{lemma:trueloss}, one can verify that
% \begin{align}
%& { \hat{\ell }} (w_i) - { \hat{\ell }} (\tau)  - ( \ell^* (w_i) - \ell^* (\tau)) \lesssim \mathbb{E} \left[ { \hat{\ell }} (w_i) - { \hat{\ell }} (\tau)  - ( \ell^* (w_i) - \ell^* (\tau)) | {\cal G} \right ] \\
%&= \sum_{j : (i,j) \in {\cal E}} \frac{ \eta w_i + (1-\eta) w_j }{w_i + w_j} \left \{
% \log  \left \{ \frac{ ( \eta w_i + (1 - \eta) {\hat{w}_j } ) ( \eta {\hat{w}_j } + (1 - \eta) \tau ) }{ ( \eta \tau + (1 - \eta) {\hat{w}_j } ) ( \eta {\hat{w}_j } + (1 - \eta) w_i ) }  \right \}   - \log  \left \{ \frac{ ( \eta w_i + (1 - \eta) w_j ) ( \eta w_j + (1 - \eta) \tau ) }{ ( \eta \tau + (1 - \eta) w_j ) ( \eta w_j + (1 - \eta) w_i ) }  \right \}
%\right \} \\
%+&
%\sum_{j : (i,j) \in {\cal E}} \left \{
%\log \left( \frac{ \tau  + {\hat{w}_j } }{ w_i + {\hat{w}_j } } \right) + \log \left( \frac{  \eta {\hat{w}_j } + (1-\eta) w_i  }{  \eta {\hat{w}_j } + (1-\eta) \tau  } \right) -
%\log \left( \frac{ \tau  + w_j }{ w_i + w_j } \right)  -  \log \left( \frac{  \eta {w}_j  + (1-\eta) w_i  }{  \eta {w}_j  + (1-\eta) \tau  } \right)
%\right \}.
%\end{align}
\begin{align}
{ \hat{\ell }} (w_i) - { \hat{\ell }} (\tau)  - ( \ell^* (w_i) - \ell^* (\tau)) \lesssim \mathbb{E} \left[ { \hat{\ell }} (w_i) - { \hat{\ell }} (\tau)  - ( \ell^* (w_i) - \ell^* (\tau)) | {\cal G} \right ] = \sum_{j : (i,j) \in {\cal E}} g_\eta( \hat{w}_j)
\end{align}
where
\begin{align}
g_\eta(t): &= \frac{ \eta w_i + (1-\eta) w_j }{w_i + w_j} \bigg\{
 \log  \left \{ \frac{ ( \eta w_i + (1 - \eta) {t } ) ( \eta {t } + (1 - \eta) \tau ) }{ ( \eta \tau + (1 - \eta) {t } ) ( \eta {t } + (1 - \eta) w_i ) }  \right \}   \nn\\*
 &\qquad \qquad - \log  \left \{ \frac{ ( \eta w_i + (1 - \eta) w_j ) ( \eta w_j + (1 - \eta) \tau ) }{ ( \eta \tau + (1 - \eta) w_j ) ( \eta w_j + (1 - \eta) w_i ) }  \bigg \}
\right \} \nn\\
&\qquad + \log \left( \frac{ \tau  + {t } }{ w_i + {t } } \right) + \log \left( \frac{  \eta { t } + (1-\eta) w_i  }{  \eta { t } + (1-\eta) \tau  } \right) -
\log \left( \frac{ \tau  + w_j }{ w_i + w_j } \right)  -  \log \left( \frac{  \eta {w}_j  + (1-\eta) w_i  }{  \eta {w}_j  + (1-\eta) \tau  } \right). 
\label{eqn:def_g}
\end{align}
Here the function $g_\eta(t)$ obeys the following two properties: (i) $g_\eta(w_j)=0$ and (ii)
\begin{align}
\left | \frac{ \partial g_\eta(t) }{ \partial t} \right | &=
\frac{ (2 \eta -1) | \tau-w_i| }{  ( \eta t + (1-\eta) \tau ) ( \eta t + (1-\eta) w_i ) }  \nn\\
&\qquad \times \left | \frac{ \eta w_i + (1-\eta) w_j }{w_i + w_j} \frac{ \eta (1 -\eta) (t^2 - \tau w_i) }{ ( \eta w_i + (1- \eta) t ) ( \eta \tau + (1- \eta) t )  }  - \frac{ \eta t^2 - (1- \eta)w_i \tau }{ (\tau + t) (w_i + t)  } \right | \\
&\overset{(a)}{\lesssim} (2 \eta -1)^2 | \tau - w_i|
\end{align}
where $(a)$ follows from the fact that
\begin{align}
\left | \frac{ \eta w_i + (1-\eta) w_j }{w_i + w_j} \frac{ \eta (1 -\eta) (t^2 - \tau w_i) }{ ( \eta w_i + (1- \eta) t ) ( \eta \tau + (1- \eta) t )  }  - \frac{ \eta t^2 - (1- \eta)w_i \tau }{ (\tau + t) (w_i + t)  } \right | \lesssim (2 \eta -1).
\end{align}
Notice that the left-hand-side in the above is zero when $\eta = 1/2$. This together with the above two properties demonstrates that
\begin{align}
|g_\eta(t) | &\leq | g_\eta(w_j)| + | t-w_j| \cdot \sup_{ t \in [ w_{\min}, w_{\max} ] }  \left | \frac{ \partial g_\eta(t) }{ \partial t} \right | \\
& \lesssim ( 2\eta-1)^2 | \tau - w_i | | t- w_j |.
\end{align}
Applying this to the above gap between the surrogate loss and the true loss, we get:
\begin{align}
\left | { \hat{\ell }} (w_i) - { \hat{\ell }} (\tau)  - ( \ell^* (w_i) - \ell^* (\tau)) \right |  &\lesssim
\sum_{j : (i,j) \in {\cal E}}  ( 2\eta-1)^2 | \tau - w_i | | \hat{w}_j - w_j | 
\label{eq:gaptruesurrogate0}
\\
& \leq  ( 2\eta-1)^2 | \tau - w_i | \sum_{j : (i,j) \in {\cal E}}   | \hat{w}_j^{\rm ub} - w_j |\label{eq:gaptruesurrogate}
\end{align}
where the inequality arises from our hypothesis, namely that  $ |\hat{w}_j - w_j| \leq | \hat{w}_j^{\rm ub} - w_j |$ for all $ j\in [n]$.

We now move on to deriving an upper bound on~\eqref{eq:gaptruesurrogate}. From our assumptions on the initial estimate, we have
\begin{align}
 \| \hat{\boldsymbol{w}} - \boldsymbol{w} \|^2  \leq \| \hat{\boldsymbol{w}}^{\rm ub} - \boldsymbol{w} \|^2 \leq \| \boldsymbol{w} \|^2 \delta^2 \leq n  \delta^2.
\end{align}
Since ${\cal G}$ and $\hat{\boldsymbol{w}}^{\rm ub}$ are statistically independent, this inequality gives rise to:
\begin{align}
& \mathbb{E} \left[  \sum_{j: (i,j) \in {\cal E} } |{\hat{w}_j^{\rm ub}}-w_j| \right] = p \| \hat{\boldsymbol{w}}^{\rm ub} - \boldsymbol{w} \|_1 \leq p\sqrt{n} \| \hat{\boldsymbol{w}}^{\rm ub} - \boldsymbol{w} \| \leq n p \delta, \\
& \mathbb{E} \left[  \sum_{j: (i,j) \in {\cal E} } |{\hat{w}_j^{\rm ub}}-w_j|^2 \right] = p \| \hat{\boldsymbol{w}}^{\rm ub} - \boldsymbol{w} \|^2  \leq np \delta^2.
\end{align}
Recall our assumption that $\max_j | \hat{w}_j^{\rm ub} - w_j | \leq \xi$. Again using Bernstein inequality in Lemma~\ref{lemma:BernsteinIneq} for any fixed $\gamma \geq 3$, with probability at least $1 - 2 n^{-\gamma}$, one has
\begin{align}
\sum_{j: (i,j) \in {\cal E} } |{\hat{w}_j^{\rm ub}}-w_j| &\leq  \mathbb{E} \left[  \sum_{j: (i,j) \in {\cal E} } |{\hat{w}_j^{\rm ub}}-w_j| \right]  + \sqrt{2 \gamma \log n \cdot  \mathbb{E} \left[  \sum_{j: (i,j) \in {\cal E} } |{\hat{w}_j^{\rm ub}}-w_j|^2 \right] } + \frac{2 \gamma}{3} \xi \log n \\
& \leq n p \delta + \sqrt{2 \gamma np \log n    }  \delta+ \frac{2 \gamma}{3} \xi \log n \\
& \overset{(a)}{\leq} n p \delta + \sqrt{ \gamma     } np  \delta+ \frac{2 \gamma}{3} \xi \log n \\
& \overset{(b)}{\leq}  \gamma n p \delta +  \gamma \xi \log n
\end{align}
where $(a)$ follows from our choice on $p$ (we assume $p > \frac{ 2 \log n}{n}$) and $(b)$ follows from the fact that $1 + \sqrt{\gamma} \leq \gamma$ for $\gamma \geq 3$. This combined with~\eqref{eq:gaptruesurrogate} gives us
\begin{align}
\label{eq:truesurrogatelossgap}
\left | { \hat{\ell }} (w_i) - { \hat{\ell }} (\tau)  - ( \ell^* (w_i) - \ell^* (\tau)) \right |
\lesssim  ( 2 \eta -1 )^2  | \tau - w_i | np  \left(  \delta +   \frac{\log n}{np}  \xi \right).
\end{align}

We are now ready to control ${ \hat{\ell }} (w_i) - { \hat{\ell }} (\tau)$. Putting~\eqref{eq:truelossbound} and ~\eqref{eq:truesurrogatelossgap} together, with high probability approaching one, one has
\begin{align}
{ \hat{\ell }} (w_i) - { \hat{\ell }} (\tau) &\succsim \ell^* (w_i) - \ell^* (\tau) - ( 2 \eta -1 )^2  | \tau - w_i | np  \left(  \delta +   \frac{\log n}{np}  \xi \right) \\
&\succsim n p ( 2 \eta -1)^2  | w_i - \tau |^2 -  ( 2 \eta -1 )^2  | \tau - w_i | np  \left(  \delta +   \frac{\log n}{np}  \xi \right) \\
&\succsim 0
\end{align}
where the last step follows from our hypothesis: $  | w_i - \tau |
 \succsim    \delta +   \frac{\log n}{np}  \xi
$. This completes the proof of Lemma~\ref{lemma:l2vslinfty}.

%We divide the iterative stage into two phases: (i) $t \leq T_0$ and (ii) $t > T_0$, where $T_0$ is a threshold such that $\xi_t \succsim \frac{1}{2 \eta -1} \sqrt{ \frac{\log n }{npL} }$. As is seen from the definition of $\xi_t$ in~\eqref{eq:xit}, $T_0 \lesssim \log n$ holds as long as $L = O ({\sf poly} (n))$.
%
%\emph{Case (i): $t \leq T_0$}.
%\emph{Case (ii): $t > T_0$}

\subsection{Proof of Lemma~\ref{lemma:trueloss}}
\label{app:trueloss}

Another representation of the true loss is:
\begin{align}
 \ell^* (w_i) - \ell^* (\tau)  &  = \sum_{j : (i,j) \in {\cal E}} \Bigg\{
 Y_{ij} \log  \left \{ \frac{ ( \eta w_i + (1 - \eta) w_j ) ( \eta w_j + (1 - \eta) \tau ) }{ ( \eta \tau + (1 - \eta) w_j ) ( \eta w_j + (1 - \eta) w_i ) }  \right \}\nn\\*
&\qquad\qquad\qquad+  \log \left( \frac{ (\tau  + w_j) ( \eta w_j + (1-\eta) w_i ) }{ (w_i + w_j) ( \eta w_j + (1-\eta) \tau ) } \right)
\Bigg\}\label{eq:trueloss_form2}
\end{align}
This gives
\begin{align}
 \var \left[ \ell^* (w_i) - \ell^* (\tau)  | {\cal G} \right] &= \var \left[\sum_{j : (i,j) \in {\cal E}}  Y_{ij} \log  \left \{ \frac{ ( \eta w_i + (1 - \eta) w_j ) ( \eta w_j + (1 - \eta) \tau ) }{ ( \eta \tau + (1 - \eta) w_j ) ( \eta w_j + (1 - \eta) w_i ) }  \right \}  \right] \\
 & \overset{(a)}{\lesssim}  | w_i - \tau |^2 (2 \eta -1)^2 \sum_{j : (i,j) \in {\cal E}} \var [Y_{ij} ] \\
 &  =   | w_i - \tau |^2 (2 \eta -1)^2 \sum_{j : (i,j) \in {\cal E}} \frac{1}{L} \frac{ ( \eta w_i + (1-\eta) w_j) ( \eta w_j + (1-\eta) w_i) }{ (w_i + w_j)^2 } \\
 & \lesssim    | w_i - \tau |^2 (2 \eta -1)^2 \frac{np}{L}\label{eq:vartrueloss}
\end{align}
where $(a)$ follows from the fact that $\log \frac{ \beta }{\alpha} \leq \frac{ \beta - \alpha}{\alpha} \textrm{ for } \beta > \alpha > 0$.
Also note that the maximum value of individual quantities $\frac{1}{L}Y_{ij}^{(\ell)}$ that we sum over is given by
\begin{align}
\label{eq:maxtrueloss}
\frac{1}{L } Y_{ij}^{(\ell)} \left | \log  \left \{ \frac{ ( \eta w_i + (1 - \eta) w_j ) ( \eta w_j + (1 - \eta) \tau ) }{ ( \eta \tau + (1 - \eta) w_j ) ( \eta w_j + (1 - \eta) w_i ) }  \right \} \right |    \lesssim  \frac{| w_i - \tau | (2 \eta -1) }{L}.
\end{align}

Making use of Bernstein inequality together with~\eqref{eq:meantrueloss},~\eqref{eq:vartrueloss} and~\eqref{eq:maxtrueloss} suggests that: conditional on ${\cal G}$,
\begin{align}
 \ell^* (w_i) - \ell^* (\tau) &\geq  \mathbb{E} \left[ \ell^* (w_i) - \ell^* (\tau)  | {\cal G} \right]  - \sqrt{2 \gamma \log n \cdot   \var \left[ \ell^* (w_i) - \ell^* (\tau)  | {\cal G} \right] } - \frac{2 \gamma}{3} B \log n  \label{eqn:use_bern1}\\
 & \succsim n p ( 2 \eta -1)^2  | w_i - \tau |^2 - \sqrt{ 2 \gamma} \sqrt{ \frac{ np \log n }{L} } | w_i - \tau | ( 2 \eta -1 ) - \frac{2 \gamma}{3} \frac{| w_i - \tau | (2 \eta -1) }{L}  \log n \label{eqn:use_bern2}\\
 & \geq n p ( 2 \eta -1)^2  | w_i - \tau |^2 - \left( \sqrt{ 2 \gamma} + \frac{2 \gamma}{3} \right) \sqrt{ \frac{ np \log n }{L} } | w_i - \tau | ( 2 \eta -1 ) \label{eqn:use_bern3}\\
 & \overset{(a)}{\succsim} n p ( 2 \eta -1)^2  | w_i - \tau |^2\label{eqn:use_bern4}
\end{align}
holds with probability at least $1 - 2n^{-\gamma}$. Here $(a)$ follows from our hypothesis: $| w_i - \tau | \succsim \frac{1}{ 2\eta-1} \sqrt{  \frac{ \log n}{  npL} }$.

\section{Proof of Lemma~\ref{lemma:l2-norm-bound-etaknown}}
\label{app:ProofofLemmaBoundDelta}

As mentioned earlier, the proof builds  upon the analysis structured by Lemma~2 in~\cite{Negahban2012}, which bounds the deviation of the Markov chain w.r.t.\ the transition matrix $\hat{P}$ (defined in Algorithm~\ref{Algorithm:RC}) after $t$ steps:
\begin{align}
   \frac{  \| \hat{p}_t - { \boldsymbol{w}} \| }{ \| \boldsymbol{w} \| }
\leq \rho^t   \frac{  \| \hat{p}_0 - { \boldsymbol{w}} \| }{ \| \boldsymbol{w} \| }  \sqrt{ \frac{w_{\max}}{w_{\min}} } + \frac{1}{ 1 - \rho} \| \DDelta \|  \sqrt{ \frac{w_{\max}}{w_{\min}} }
\end{align}
where $\hat{p}_t$ denotes the distribution w.r.t. $\hat{P}$ at time $t$ seeded by an arbitrary initial distribution $\hat{p}_0$, the matrix $\DDelta:= \hat{P} - P$ indicates the fluctuation of the transition probability matrix around its mean $P:= \mathbb{E} [\hat{P}]$, and $\rho:=\lambda_{\max} + \| \DDelta \| \sqrt{ \frac{w_{\max}}{w_{\min}} }$.
Here $\lambda_{\max} = \max \{ \lambda_2, - \lambda_n \}$ and $\lambda_i$ indicates the $i$-th eigenvalue of $P$.

   For an arbitrary $\eta$ case, a bound on $\| \DDelta \|$ is: 
\begin{align}
\| \DDelta \| \lesssim \frac{1}{ 2\eta -1} \sqrt{ \frac{ \log n}{ npL } }
\end{align}
which will be proved in the sequel. On the other hand, adapting the analysis in~\cite{Negahban2012} (particularly see Lemma~4 in the reference), one can easily verify that $\rho < 1$ under our assumption that $L n p \succsim \frac{ \log n}{ (2 \eta -1)^2 }$. Applying the bound on $\| \DDelta \|$ and $\rho <1 $ to the above gives the claimed bound, which completes the proof.
 
Let us now prove the bound on $\| \DDelta \|$, which is a generalization of the proof in~\cite{Negahban2012}. Let $D$ be a diagonal matrix with $D_{ii}:= \DDelta_{ii}$. Let $\bar{\DDelta} := \DDelta - D$. Note that
\begin{align}
\| \DDelta \| \leq \| D \| + \| \bar{\DDelta} \| = \max_i | \DDelta_{ii} | + \| \bar{\DDelta} \|.
\end{align}
We will use Hoeffding inequality to bound $|\DDelta_{ii}|$. As for $\| \bar{\DDelta} \|$, we will focus on bounds of  $\mathbb{E} [|\DDelta_{ij}|^p]$, since Tropp inequality in~\cite{Tropp2011} turns out to relate the bound of $\mathbb{E} [|\DDelta_{ij}|^p]$ to that of $\| \bar{\DDelta} \|$, as pointed out in~\cite{Negahban2012}. Hence, here we provide derivations mainly for the bounds on $|\DDelta_{ii}|$ and $\mathbb{E} [|\DDelta_{ij}|^p]$. Later we will appeal to  a relationship between $\| \bar{\DDelta} \|$ and $\mathbb{E} [|\DDelta_{ij}|^p]$, formally stated in Lemma~\ref{lemma:TroppIneq} (see below), to prove the desired bound on $\| \bar{\DDelta} \|$.

\emph{Bounding $|\DDelta_{ii}|$}: Observe that
\begin{align}
L d_{\max } \DDelta_{ii} = -L d_{\max} \sum_{k \neq i} \DDelta_{ik} =  - \sum_{k \neq i}  \sum_{ \ell=1}^{L} \left (
\frac{ Y_{ki}^{(\ell)} - (1 - \eta) }{ 2 \eta -1 } - \frac{w_{k} }{w_i + w_k}  \right)
. \label{eqn:def_Delta_ii}
\end{align}
Let $X_{k \ell} := \frac{ Y_{ki}^{(\ell)} - (1 - \eta) }{ 2 \eta -1 } - \frac{w_{k} }{w_i + w_k}$. Then, we have $\mathbb{E}[X_{k \ell} ] = 0$ and $- \frac{ \eta +1 }{ 2 \eta -1} \leq X_{k \ell} \leq \frac{ \eta }{ 2 \eta -1}$. Using Hoeffding inequality, we obtain:
\begin{align}
 \Pr \left[ | L d_{\max } \DDelta_{ii} | \geq t \right ] \leq 2 \exp \left (  - \frac{2 (t \frac{2 \eta-1}{2 \eta+1})^2 }{  L d_{i} }  \right ) \leq
2 \exp \left (  - \frac{2 (t \frac{2 \eta-1}{2 \eta+1})^2 }{  L d_{\max} }  \right ).\label{eqn:apply_hoeff0}
\end{align}
Choosing  
%\begin{equation}
$t = c \sqrt{ L d_{\max} \log n} \Big(\frac{ 2\eta +1}{ 2 \eta -1}\Big)$ % \label{eqn:t_chernoff}
%\end{equation}
 for some $c>0$, one can make the tail bound arbitrarily close to zero in the limit of large $n$. Also $d_{\max } \asymp n p$ when $p > \frac{\log n}{n}$. Hence, with probability approaching one, one has $\|D \| \lesssim  \frac{1}{ 2\eta -1} \sqrt{ \frac{\log n}{ np L } }$.

 %\emph{Bounding $\mathbb{E} [|\Upsilon_{ij}|^p]$}:
% Observe that
%\begin{align}
% L d_{\max} \Upsilon_{ij}  =  \sum_{ \ell=1}^{L} \left ( \frac{ Y_{ji}^{(\ell)} - (1 - \eta) }{ 2 \eta -1 } - \frac{w_{j} }{w_i + w_j} \label{eqn:def_Delta_ij}
%\right).
%\end{align}
%Applying Hoeffding inequality into the term inside the summation, we get
%\begin{equation}
% \Pr \left[ | L d_{\max } \Upsilon_{ij} | \geq t \right ] \leq 2 \exp \left (  - \frac{2 (t \frac{2 \eta-1}{2 \eta+1})^2 }{  L }  \right ) \label{eqn:apply_hoeff1}
%\end{equation}
% yielding
%\begin{equation}
% \Pr \left[ | \Upsilon_{ij} | \geq t \right ] \leq 2 \exp \left (  - 2 t^2 \left(  \frac{2 \eta-1}{2 \eta+1}\right)^2 L d_{\max}^2   \right ).\label{eqn:apply_hoeff2}
% \end{equation}
%This implies that $\Upsilon_{ij}$ is a sub-Gaussian random variable. Hence, wet get:
%\begin{align}
%\label{eq:BoundDeltaijp}
% \mathbb{E} \left [ | \Upsilon_{ij} |^p \right] \leq \frac{p!}{2} \left( \frac{ 2 \eta + 1}{ 2 \eta -1 } \frac{1}{ \sqrt{ L d_{\max}^2 } } \right)^{p}.
%\end{align}
%It turns out that this together with Tropp inequality gives the desired bound: $\|\bar{\Upsilon} \| \lesssim  \frac{1}{ 2\eta -1} \sqrt{ \frac{\log n}{ np L } }$, which completes the proof. See Appendix~\ref{app:ProofofLemmaBoundDelta} for the detailed derivation.

\emph{Bounding $\| \bar{ \DDelta} \|$}: A careful inspection reveals that
\begin{align}
\bar{ \DDelta} = \sum_{i < j: (i,j) \in {\cal E} } (e_i e_j^T - e_j e_i^T ) (\hat{p}_{ij} - p_{ij})
\end{align}
where $e_i$ denotes the standard basis vector in which only the $i$-th entry is 1 while the others are zeros. Here with a slight abuse of notation, we use ${\cal E}$ to indicate ${\cal E}^{\sf init}$.
As mentioned earlier, we intend to make use of the concentration result by Tropp~\cite{Tropp2011} for sum of independent self-adjoint matrices. To this end, we apply the dilation idea in~\cite{Tropp2011} for symmetrization:
\begin{align}
Z_{ij} := A_{ij} \DDelta_{ij} :=
\left[
  \begin{array}{cc}
    0          & e_i e_j^T - e_j e_i^T  \\
   e_j e_i^T - e_i e_j^T & 0  \\
  \end{array}
\right] \DDelta_{ij}.
\end{align}
Note that
\begin{align}
\| \bar{ \DDelta} \| = \left \| \sum_{i < j: (i,j) \in {\cal E} } (e_i e_j^T - e_j e_i^T ) (\hat{p}_{ij} - p_{ij}) \right \| =
\left \| \sum_{i < j: (i,j) \in {\cal E} } A_{ij} \DDelta_{ij}  \right \| = \left \| \sum_{i < j : (i,j) \in {\cal E}} {Z}_{ij}  \right \|.
\end{align}
We now invoke Tropp's inequality formally stated in the following lemma.

\begin{lemma}
\label{lemma:TroppIneq}
Consider a sequence $Z_{ij}$ of independent random self-adjoint matrices. Assume that
\begin{align}
\mathbb{E} [ {Z}_{ij}] = 0 \quad \textrm{and} \quad \mathbb{E} [ {Z}_{ij}^p ] \preceq \frac{p !}{2} R^{p-2} \tilde{A}_{ij}^2, \quad p \geq 2.
\end{align}
Define $\sigma^2:= \left \| \sum_{i,j} \tilde{A}_{ij}^2 \right \|
$. Then, for all $t \geq 0$,
\begin{align}
\Pr
\bigg [   \Big \| \sum_{i,j} {Z}_{ij} \Big \| \geq t\bigg]
\leq \exp \left (  - \frac{ t^2/2}{ \sigma^2 + Rt }  \right ).
\end{align}
\end{lemma}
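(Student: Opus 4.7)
The plan is to follow the standard matrix Bernstein argument via the Laplace transform method, adapted to the moment (rather than bounded-norm) hypothesis. Since the statement is about the spectral norm of the sum of self-adjoint random matrices, it suffices to control $\lambda_{\max}(S)$ and $\lambda_{\max}(-S)$ separately, where $S:=\sum_{i,j}Z_{ij}$. By symmetry (the hypothesis is invariant under $Z_{ij}\mapsto -Z_{ij}$ up to odd moments, which we can handle by working with $\pm S$ and noting the bound on $\mathbb{E}[Z_{ij}^p]$ controls both tails), I would prove the one-sided bound and then union-bound.

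The first step is the Chernoff-style inequality for matrices: for any $\theta>0$,
\begin{equation}
\Pr\bigl[\lambda_{\max}(S)\ge t\bigr]\le e^{-\theta t}\,\mathbb{E}\bigl[\tr\,e^{\theta S}\bigr].
\end{equation}
The second, and main, step is to bound the matrix moment generating function. I would invoke Lieb's concavity theorem (in the form popularized by Tropp) to obtain
\begin{equation}
\mathbb{E}\bigl[\tr\,e^{\theta S}\bigr]\le \tr\exp\Bigl(\sum_{i,j}\log\mathbb{E}[e^{\theta Z_{ij}}]\Bigr).
\end{equation}
Then, using the Taylor expansion $\mathbb{E}[e^{\theta Z_{ij}}]=I+\sum_{p\ge 2}\frac{\theta^p}{p!}\mathbb{E}[Z_{ij}^p]$ together with the hypothesis $\mathbb{E}[Z_{ij}^p]\preceq \frac{p!}{2}R^{p-2}\tilde A_{ij}^2$, I would sum the geometric series (valid for $\theta R<1$) to get
\begin{equation}
\mathbb{E}[e^{\theta Z_{ij}}]\preceq I+\frac{\theta^2/2}{1-\theta R}\,\tilde A_{ij}^2,
\end{equation}
and therefore $\log\mathbb{E}[e^{\theta Z_{ij}}]\preceq \frac{\theta^2/2}{1-\theta R}\tilde A_{ij}^2$ by operator monotonicity of $\log(I+\cdot)$.

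With these ingredients assembled, $\sum_{i,j}\log\mathbb{E}[e^{\theta Z_{ij}}]\preceq \frac{\theta^2/2}{1-\theta R}\sum_{i,j}\tilde A_{ij}^2$, so taking operator norms and using $\tr\,e^{A}\le d\cdot e^{\lambda_{\max}(A)}$ (where $d$ is the ambient dimension; the $d$ factor can be absorbed into a constant or sharpened to a dimension-free bound in the trace-class setting) yields
\begin{equation}
\Pr\bigl[\lambda_{\max}(S)\ge t\bigr]\le d\cdot\exp\Bigl(-\theta t+\frac{\theta^2 \sigma^2/2}{1-\theta R}\Bigr).
\end{equation}
The final step is to optimize in $\theta\in(0,1/R)$; choosing $\theta=t/(\sigma^2+Rt)$ gives the Bernstein-type exponent $-t^2/2/(\sigma^2+Rt)$. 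Combining with the analogous bound for $-S$ and adjusting constants completes the proof. The principal technical obstacle is the passage from the scalar moment bound to the matrix MGF bound, since operator monotonicity of $\log$ and the validity of Lieb's concavity inequality in the required generality are the non-trivial ingredients; everything else is essentially a careful geometric-series calculation.
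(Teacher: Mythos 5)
The paper offers no proof of this lemma at all: it is imported as Tropp's matrix Bernstein inequality with a citation to Tropp (2011), and your proposal reconstructs precisely the argument behind that cited result --- the matrix Chernoff bound, Lieb's concavity theorem to subadditivize the matrix cumulant generating function, the geometric-series estimate $\mathbb{E}[e^{\theta Z_{ij}}]\preceq I+\frac{\theta^2/2}{1-\theta R}\,\tilde A_{ij}^2$ for $\theta R<1$, operator monotonicity of $\log(I+\cdot)$, and the choice $\theta=t/(\sigma^2+Rt)$, which indeed yields the exponent $-\tfrac{t^2/2}{\sigma^2+Rt}$. Two caveats are worth making explicit. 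First, your derivation correctly produces the ambient-dimension prefactor $d$; contrary to your aside, it cannot be ``absorbed into a constant'' (it is genuinely necessary, as Gaussian examples show), and in fact the paper itself relies on it downstream, where the tail bound for $\|\bar{\DDelta}\|$ carries the prefactor $2n$ --- so your statement with $d$ is the accurate one and the lemma as printed simply drops it. Second, passing from $\lambda_{\max}\bigl(\sum_{i,j}Z_{ij}\bigr)$ to the spectral norm requires the moment hypothesis to hold for $-Z_{ij}$ as well, since $\mathbb{E}[Z_{ij}^p]\preceq\frac{p!}{2}R^{p-2}\tilde A_{ij}^2$ does not by itself control $-\mathbb{E}[Z_{ij}^p]$ for odd $p$, plus a union bound costing a factor $2$; in the paper's application this is harmless because each $Z_{ij}$ is a dilation matrix times a scalar, so both signs satisfy the same bound, but your appeal to ``symmetry'' should be spelled out along these lines rather than left implicit.
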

To figure out what $\tilde{A}_{ij}$, $\sigma^2$ and $R$ are, we consider
\begin{align}
 \mathbb{E}[Z_{ij}^p ] & \overset{(a)}{\preceq} \mathbb{E} [  \Delta_{ij}^p ]   A_{ij}^2  \\
& \overset{(b)}{\leq}
\frac{p! }{2} \left ( \frac{ 2 \eta +1 }{ 2 \eta -1} \frac{1}{ \ \sqrt{ L d_{\max}^2 } } \right)^{p-2} \frac{ 2 \eta -1}{ 2\eta +1 } \frac{1 }{  L d_{\max}^2    }A_{ij}^2.
\end{align}
To see $(a)$, note that $A_{ij}^p$ is equal to $A_{ij}^2$ when $p$ is even; $A_{ij}$ otherwise. Also one can verify that the eigenvalues of $A_{ij}$ are either 1 or $-1$. Hence, $A_{ij}^p \preceq A_{ij}^2$. To see $(b)$, observe that 
 %  \emph{Bounding $\mathbb{E} [|\Upsilon_{ij}|^p]$}:
% Observe that
\begin{align}
 L d_{\max} \DDelta_{ij}  =  \sum_{ \ell=1}^{L} \left ( \frac{ Y_{ji}^{(\ell)} - (1 - \eta) }{ 2 \eta -1 } - \frac{w_{j} }{w_i + w_j} \label{eqn:def_Delta_ij}
\right).
\end{align}
Applying Hoeffding inequality into the term inside the summation, we get 
\begin{equation}
\Pr \left[ | L d_{\max } \DDelta_{ij} | \geq t \right ] \leq 2 \exp \left (  - \frac{2 (t \frac{2 \eta-1}{2 \eta+1})^2 }{  L }  \right ), \label{eqn:apply_hoeff1}
\end{equation}
 which yields 
\begin{equation}
 \Pr \left[ | \DDelta_{ij} | \geq t \right ] \leq 2 \exp \left (  - 2 t^2 \left(  \frac{2 \eta-1}{2 \eta+1}\right)^2 L d_{\max}^2   \right ).\label{eqn:apply_hoeff2}
 \end{equation}
This implies that $\Delta_{ij}$ is a sub-Gaussian random variable. Hence, wet get:
\begin{align}
\label{eq:BoundDeltaijp}
 \mathbb{E} \left [ | \DDelta_{ij} |^p \right] \leq \frac{p!}{2} \left( \frac{ 2 \eta + 1}{ 2 \eta -1 } \frac{1}{ \sqrt{ L d_{\max}^2 } } \right)^{p},
\end{align}
which yields $(b)$.
%It turns out that this together with Tropp inequality gives the desired bound: $\|\bar{\Upsilon} \| \lesssim  \frac{1}{ 2\eta -1} \sqrt{ \frac{\log n}{ np L } }$, which completes the proof. See Appendix~\ref{app:ProofofLemmaBoundDelta} for the detailed derivation.

%
% consider
%\begin{align}
% L d_{\max} \Upsilon_{ij}  =  \sum_{ \ell=1}^{L} \left ( \frac{ Y_{ji}^{(\ell)} - (1 - \eta) }{ 2 \eta -1 } - \frac{w_{j} }{w_i + w_j}
%\right).
%\end{align}
%Applying Hoeffiding equality into the term inside the summation, we get $ \textrm{Pr} \left[ | L d_{\max } \Upsilon_{ij} | \geq t \right ] \leq 2 \exp \left (  - \frac{2 (t \frac{2 \eta-1}{2 \eta+1})^2 }{  L }  \right )$. This gives
%%\begin{align}
% $\textrm{Pr} \left[ | \Upsilon_{ij} | \geq t \right ] \leq 2 \exp \left (  - 2 t^2 \left(  \frac{2 \eta-1}{2 \eta+1}\right)^2 L d_{\max}^2   \right )$.
%% \end{align}
%So $\Upsilon_{ij}$ is a sub-Gaussian random variable. Hence for $p \geq 2$,
%%\begin{align}
% $\mathbb{E} \left [ | \Upsilon_{ij} |^p \right] \leq \frac{p!}{2} \left( \frac{ 2 \eta + 1}{ 2 \eta -1 } \frac{1}{ \sqrt{ L d_{\max}^2 } } \right)^{p}$,
%%\end{align}
%which gives the desired bound.
We now see that
%\begin{align}
 $R = \frac{ 2 \eta +1 }{ 2 \eta -1} \frac{1}{ \ \sqrt{ L d_{\max}^2 } }$ and $\tilde{A}_{ij}^2 = \frac{ 2 \eta -1}{ 2\eta +1 } \frac{1 }{  L d_{\max}^2    }A_{ij}^2$.
%\end{align}
Some calculation gives
\begin{align}
\sigma_2 &:= \left \| \sum_{i < j : (i,j) \in \calE} \tilde{A}_{ij}^2  \right \| \\
&  =  \frac{ 2 \eta -1}{ 2\eta +1 } \frac{1 }{  L d_{\max}^2    } \left \| \sum_{i=1}^n \sum_{j=i+1}^{n} {\bf 1}
 \left \{  (i,j) \in \calE\right \} \left[
  \begin{array}{cc}
    e_i e_i^T + e_j e_j^T      &  0  \\
    0  & e_i e_i^T + e_j e_j^T  \\
  \end{array}
\right]
  \right \| \\
 %&= \frac{ 2 \eta -1}{ 2\eta +1 } \frac{1 }{  L d_{\max}^2    } \left \|  \left \{
%\sum_{i=1}^n d_i^{(a)} \left[
%  \begin{array}{cc}
%    e_i e_i^T       &  0  \\
%    0  & e_i e_i^T  \\
%  \end{array}
%\right] + \sum_{i=1}^n \sum_{ j \in {\cal B}_i} \left[
%  \begin{array}{cc}
%     e_j e_j^T      &  0  \\
%    0  & e_j e_j^T  \\
%  \end{array}
%\right] \right \}
%  \right \| \\
  & = \frac{ 2 \eta -1}{ 2\eta +1 } \frac{1 }{  L d_{\max}^2    }
\left \|
\sum_{i=1}^n d_i
\left[
  \begin{array}{cc}
    e_i e_i^T       &  0  \\
    0  & e_i e_i^T  \\
  \end{array}
\right]
\right \| \\
& = \frac{ 2 \eta -1}{ 2\eta +1 } \frac{1 }{  L d_{\max} }.
\end{align}
Now applying Lemma~\ref{lemma:TroppIneq} and using the fact that $\| \bar{ \DDelta} \| = \| \sum_{i,j} Z_{ij} \|$, we get:
\begin{align}
\Pr
\left [   \left \| \bar{\DDelta} \right \| \geq t\right]
\leq 2n \exp \left (  - \frac{ t^2/2}{ \frac{ (2 \eta +1)^2 }{  L d_{\max} (2\eta -1 )^2   }
 + \left ( \frac{2 \eta +1 }{ \sqrt{ L d_{\max}^2 (2 \eta -1 )^2 } } \right)t }  \right ).
\end{align}
Under the assumption that $ d_{\max} \asymp np \geq \log n$ and choosing $t = \frac{c_1}{ 2 \eta-1} \sqrt{ \log n/ (npL)}$, the tail probability is bounded by $ 2 n \exp \{  - c_2^2 \log n \}$ for some constants $c_1$ and $c_2$. Hence, with probability approaching one, we get the desired bound:
\begin{align}
 \| \bar{\DDelta} \| \lesssim \frac{1}{ 2 \eta -1 } \sqrt{ \frac{ \log n }{ npL}}.
\end{align}
 %This together with~\eqref{eq:boundonD} yields the desired bound on $\| \Upsilon \|$, thus completing the proof.

%\section{Proof of Lemma~\ref{lemma:trueloss}}

\section{Proof of Lemma \ref{lemma:l2vslinfty_est}}\label{app:lemma:l2vslinfty_est}
\begin{proof} 
In this proof, for the sake of brevity, we only highlight the parts of the proof of Lemma \ref{lemma:l2vslinfty} that have to be modified when we use $\hat{\eta}$ in place of $\eta$ in the likelihood function.

  Define
\begin{align}
{\hel}^*(\tau)& := \frac{1}{L}\log\hat{\calL}(\tau,\bw_{\setminus i}; \bY_i^{\mathrm{iter}})  \\
&=\sum_{j : (i,j) \in \calE} \left\{  Y_{ij} \log\left( \hat{\eta}\frac{\tau}{\tau + w_j} + (1-\hat{\eta})\frac{w_j}{\tau + w_j}\right) + (1-Y_{ij}) \log \left( \hat{\eta} \frac{w_j}{\tau+w_j} + (1-\hat{\eta}) \frac{\tau}{\tau+w_j}\right)\right\} \label{eqn:def_Lhat} .
\end{align}
Notice that $\hel^*$ is similar to $\ell^*$ in \eqref{eqn:likelihood_fn} except that $\eta$ in the latter is replaced by its surrogate $\hat{\eta}$  in the former because we only have access to this estimate.

Consider the difference
\begin{align}
{\hel}^*(w_i) -{\hel}^*(\tau)  &= \sum_{j :(i,j) \in\calE} \Bigg\{ Y_{ij} \log\left( \frac{ \hat{\eta}\frac{w_i}{w_i+ w_j} + (1-\hat{\eta})\frac{w_j}{w_i+ w_j}}{ \hat{\eta}\frac{\tau}{\tau + w_j} + (1-\hat{\eta})\frac{w_j}{\tau + w_j}}\right) \nn\\*
&\qquad\qquad\qquad+ (1-Y_{ij}) \log\left(  \frac{ \hat{\eta}\frac{w_j}{w_i+ w_j} + (1-\hat{\eta})\frac{w_i}{w_i+ w_j}}{ \hat{\eta}\frac{w_j}{\tau + w_j}+ (1-\hat{\eta}) \frac{\tau}{\tau+w_j}}   \right)\Bigg\} \label{eqn:diff_l} .
\end{align}
Now when we take expectation
\begin{equation}
\bbE[ Y_{ij}] = \eta\frac{w_i}{w_i+w_j} + (1-\eta) \frac{w_j}{w_i+w_j} .\label{eqn:expect_Y}
\end{equation}
Note that this is in terms of $\eta$ and not $\hat{\eta}$ as in the difference of the empirical log-likelihoods in \eqref{eqn:diff_l}.  In particular, $\bbE [ {\hel}^*(w_i) -{\hel}^*(\tau) \,|\, \calG]$ is not a sum of KL divergences but instead there is some ``mismatch''. However, by   some basic approximations,    we have
\begin{align}
&\bbE  \left[ {\hel}^*(w_i) -{\hel}^*(\tau) \, \big|\, \calG \right] \nn\\*
&=\sum_{j : (i,j)\in\calE} \Bigg\{ \left( \eta\frac{w_i}{w_i+w_j} + (1-\eta) \frac{w_j}{w_i+w_j} \right) \log\left( \frac{ \hat{\eta}\frac{w_i}{w_i+ w_j} + (1-\hat{\eta})\frac{w_j}{w_i+ w_j}}{ \hat{\eta}\frac{\tau}{\tau + w_j} + (1-\hat{\eta})\frac{w_j}{\tau + w_j}}\right) \nn\\*
&\qquad\qquad\qquad+ \left( \eta\frac{w_j}{w_i+w_j} + (1-\eta) \frac{w_i}{w_i+w_j} \right) \log\left(  \frac{ \hat{\eta}\frac{w_j}{w_i+ w_j} + (1-\hat{\eta})\frac{w_i}{w_i+ w_j}}{ \hat{\eta}\frac{w_j}{\tau + w_j}+ (1-\hat{\eta}) \frac{\tau}{\tau+w_j}} \right) \Bigg\}\label{eqn:plug_exp}  \\ 
&\succsim  \sum_{j : (i,j)\in\calE}  \Bigg\{\left( \heta\frac{w_i}{w_i+w_j} + (1-\heta) \frac{w_j}{w_i+w_j} \right) \log\left( \frac{ \hat{\eta}\frac{w_i}{w_i+ w_j} + (1-\hat{\eta})\frac{w_j}{w_i+ w_j}}{ \hat{\eta}\frac{\tau}{\tau + w_j} + (1-\hat{\eta})\frac{w_j}{\tau + w_j}}\right) \nn\\*
&\qquad\qquad\qquad+ \left( \heta\frac{w_j}{w_i+w_j} + (1-\heta) \frac{w_i}{w_i+w_j} \right) \log\left(  \frac{ \hat{\eta}\frac{w_j}{w_i+ w_j} + (1-\hat{\eta})\frac{w_i}{w_i+ w_j}}{ \hat{\eta}\frac{w_j}{\tau + w_j}+ (1-\hat{\eta}) \frac{\tau}{\tau+w_j}} \right)  \Bigg\} \label{eqn:approx_mult}\\  
&=\sum_{j : (i,j)\in\calE}  D\left( \heta\frac{w_i}{w_i+w_j}+(1-\heta)\frac{w_j}{w_i+w_j} \,\Big\|\, \heta\frac{\tau}{\tau+w_j}+(1-\heta)\frac{w_j}{\tau+w_j} \right) \label{eqn:div} \\
&\succsim np(2\heta-1)^2 |w_i-\tau|^2  \label{eqn:pin2} %\\
%&\succsim np(2\eta-1)^2 |w_i-\tau|^2  \label{eqn:approx_mult2}
\end{align}
where
\begin{enumerate}
\item \eqref{eqn:plug_exp} follows from the difference of $\kappa^*$'s in \eqref{eqn:diff_l} and the expectation in \eqref{eqn:expect_Y};
\item \eqref{eqn:approx_mult} holds with high probability (guaranteed by the sample complexity bound in Theorem \ref{thm:etaunknown}) by multiplicatively and uniformly approximating $\eta w_i + (1-\eta)w_j$ by $\heta w_i + (1-\heta)w_j$ and $\eta w_j + (1-\eta)w_i$ by $\heta w_j + (1-\heta)w_i$ using Lemma \ref{lem:est_ratio}  (in Appendix \ref{sec:auxlemma1} at the end of this appendix)   with constant $\nu=0.1$ (say);
%\item \eqref{eqn:div} follows from the definition of the Kullback-Leibler divergence~\cite[Ch.~2]{Cov06};
\item \eqref{eqn:pin2} is an application of Pinsker's inequality~\cite[Theorem 2.33]{yeung2008information}.
%\item \eqref{eqn:approx_mult2} holds with high probability  by  multiplicatively approximating $2\heta-1$ by $2\eta-1$ using Lemma \ref{lem:est_eta1} (in Appendix \ref{sec:auxlemma1} at the end of this appendix)   with constant $\nu=0.1$ (say). 
\end{enumerate}
%
%\begin{enumerate}
%\item \eqref{eqn:tay} follows from  \eqref{eqn:diff_l} and \eqref{eqn:expect_Y} and Taylor approximations (with the assumption that $w_{\min},w_{\max}$ are uniformly  bounded);
%\item  \eqref{eqn:apply_pinsk} is an application of Pinsker's inequality~\cite[Lemma~11.6.1]{Cov06};
%\item \eqref{eqn:def_epsL} is due Lemma \ref{lem:fidelity};
%%\item  \eqref{eqn:choice_epsL} is due to the choice of  the scaling of $\eps_L$;%\item  \eqref{eqn:choice_Delta_K} is due to the constraint on $\Delta_K=\Omega(\frac{1}{\sqrt{L \log n}} )$ (cf.~the condition of Theorem~\ref{thm:etaunknown} in~\eqref{eqn:cond_DeltaK0});
%\item  and \eqref{eqn:choice_L} is due to the choice of $ L \succsim \frac{\log n}{(2\eta-1)^4 \Delta_K^4}$  (Theorem \ref{thm:etaunknown}) so the second term in \eqref{eqn:def_epsL} is  $O((2\eta-1)^2\Delta_K^2 )$ which asymptotically no larger than  the first term in \eqref{eqn:def_epsL}.
%\end{enumerate}
The punchline in this calculation is that with our choice of parameters, the scaling of the lower bound of $\bbE [ {\hel}^*(w_i) -{\hel}^*(\tau) \,|\, \calG]$ is the same as that for the known $\eta$ case in \eqref{eq:meantrueloss1}.

Now we bound  the conditional variance.  We have
\begin{align}
&\var  \left[ {\hel}^*(w_i) -{\hel}^*(\tau) \, \big|\, \calG \right] \nn\\*
 &= \var\left[ \sum_{ j: (i,j) \in\calE} Y_{ij} \log \left\{  \frac{ (\hat{\eta} w_i +(1-\hat{\eta})w_j)( \hat{\eta} w_j + (1-\hat{\eta}) \tau)}{ (\hat{\eta} \tau + (1-\hat{\eta} )w_j )(\hat{\eta } w_j  + (1-\hat{\eta}  )w_i )} \right\}\right] \\
&\lesssim |w_i - \tau|^2 (2\hat{\eta}-1)^2 \sum_{ j : (i,j) \in\calE}  \var[Y_{ij}] \label{eqn:follows_ori}\\
&\le |w_i - \tau|^2 (2\hat{\eta}-1)^2 \sum_{ j : (i,j) \in\calE}   \frac{1}{4L} \label{eqn:bd_ber}\\
&\lesssim |w_i - \tau|^2 (2\hat{\eta}-1)^2 \frac{np}{L}. \label{eqn:high_pr_gr}
\end{align}
where
\begin{enumerate}\item \eqref{eqn:follows_ori} follows from the original argument as in the proof of Lemma \ref{lemma:trueloss} in Appendix \ref{app:trueloss};\item \eqref{eqn:bd_ber} follows from the fact that the variance of any Bernoulli random variable is upper bounded by $1/4$;\item and \eqref{eqn:high_pr_gr} holds with high probability due to the nature of the Erd\H{o}s-R\'enyi graph.\end{enumerate}
Thus, by using the bounds in \eqref{eqn:pin2},  \eqref{eqn:high_pr_gr} and Bernstein's inequality (Lemma \ref{lemma:BernsteinIneq}), and mimicking the proof of Lemma \ref{lemma:trueloss} in Appendix~\ref{app:trueloss}   with $\hat{\eta}$ in place of $\eta$, we may conclude that
\begin{equation}
\hel^*(w_i) -\hel^*(\tau)\succsim np(2\hat{\eta}-1)^2 |w_i-\tau|^2 .
\end{equation}
By Lemma \ref{lem:est_eta1} which allows us to  multiplicatively approximate $(2\hat{\eta}-1)^2 $ with $(2 {\eta}-1)^2 $ (to within a constant factor of $(1-\nu)^2$), we also have
\begin{equation}
\hel^*(w_i) -\hel^*(\tau)\succsim np(2 {\eta}-1)^2 |w_i-\tau|^2
\end{equation}
with probability tending to one polynomially fast.
%This holds by  Lemma \ref{lem:est_eta1} because we can multiplicatively approximate $(2\hat{\eta}-1)^2 $ with $(2 {\eta}-1)^2 $ with only a constant loss with probability tending to one polynomially fast.

 Just as in the proof of  Lemma \ref{lemma:l2vslinfty}, we do not have access to the true ground truth scores  $\bw_{\setminus i}$. We instead analyze the behavior of surrogate log-likelihoods $\hhel$ with  the true score vectors $\bw_{\setminus i}$ replaced by their estimates $\hat{\bw}_{\setminus i}$. We have
\begin{align}
\hhel(w_i )-\hhel(\tau )&=\sum_{j : (i,j) \in {\cal E}} \Bigg\{
 Y_{ij} \log  \left \{ \frac{ ( \heta w_i + (1 - \heta) {\hat{w}_j } ) ( \heta {\hat{w}_j } + (1 - \heta) \tau ) }{ ( \heta \tau + (1 - \heta) {\hat{w}_j } ) ( \heta {\hat{w}_j } + (1 - \heta) w_i ) }  \right \}
\nn\\*
&\qquad\qquad\qquad+  \log \bigg\{ \frac{ (\tau  + {\hat{w}_j }) ( \heta {\hat{w}_j } + (1-\heta) w_i ) }{ (w_i + {\hat{w}_j }) ( \heta {\hat{w}_j } + (1-\heta) \tau ) } \bigg\}
 \Bigg\}.
\end{align}
In a similar way to the case where $\eta$ is known (cf.\ \eqref{eqn:gap}), we can quantify the gap between the difference of surrogate log-likelihoods $ \hhel(w_i )-\hhel(\tau ) $ and difference of true log-likelihoods $\hel^*  (w_i )-\hel^*(\tau )$  as follows:
\begin{equation} \label{eqn:four_h}
\hhel(w_i )-\hhel(\tau )-\big( \hel^*  (w_i )-\hel^*(\tau ) \big) \lesssim\sum_{j: (i,j)\in\calE} g_{\eta,\heta}(\hatw_j),
\end{equation}
where now
\begin{align}
g_{\eta,\heta}(t)&:= \frac{ \eta w_i + (1-\eta) w_j }{w_i + w_j} \Bigg \{
 \log  \left ( \frac{ ( \heta w_i + (1 - \heta) {t } ) ( \heta {t } + (1 - \heta) \tau ) }{ ( \heta \tau + (1 - \heta) {t } ) ( \heta {t } + (1 - \heta) w_i ) }  \right ) \nn\\*
&\qquad   \qquad - \log  \left ( \frac{ ( \heta w_i + (1 - \heta) w_j ) ( \heta w_j + (1 - \heta) \tau ) }{ ( \heta \tau + (1 - \heta) w_j ) ( \heta w_j + (1 - \heta) w_i ) }  \right)
\Bigg\} \nn\\
&\qquad + \log \left( \frac{ \tau  + {t } }{ w_i + {t } } \right) + \log \left( \frac{  \heta { t } + (1-\heta) w_i  }{  \heta { t } + (1-\heta) \tau  } \right) -
\log \left( \frac{ \tau  + w_j }{ w_i + w_j } \right)  -  \log \left( \frac{  \heta {w}_j  + (1-\heta) w_i  }{  \heta {w}_j  + (1-\heta) \tau  } \right). \label{eqn:getaheta}
\end{align}
Note that $g_{\eta,\eta}(t)=g_\eta (t)$    in~\eqref{eqn:def_g}  in the proof of Lemma \ref{lemma:l2vslinfty}. The reason why $\eta$ appears in the leading factor in  \eqref{eqn:getaheta} is because we are taking expectation of $Y_{ij}$   which is generated from the {\em true} model with parameter $\eta$ (cf.\ \eqref{eqn:expect_Y}). The parameter $\heta$ appears in $\{\ldots\}$ in \eqref{eqn:getaheta} because the log-likelihood function ${\hel}^*(\cdot)$ (cf.\ \eqref{eqn:def_Lhat}) is defined with respect to the surrogate $\heta$  since here we assume we have no knowledge of the true $\eta$.

Several properties of $g_\eta(t)$ were studied in the proof of Lemma \ref{lemma:l2vslinfty}. Here we need to study $g_{\eta,\heta}(t)$.  In fact, by using Lemma \ref{lem:est_ratio} to approximate $\eta w_i + (1-\eta) w_j$ with $\heta w_i + (1-\heta) w_j$, we see that with probability tending to one polynomially fast,
\begin{align}
g_{\eta,\heta}(t) &\lesssim\frac{ \heta w_i + (1-\heta) w_j }{w_i + w_j} \bigg \{
 \log  \left \{ \frac{ ( \heta w_i + (1 - \heta) {t } ) ( \heta {t } + (1 - \heta) \tau ) }{ ( \heta \tau + (1 - \heta) {t } ) ( \heta {t } + (1 - \heta) w_i ) }  \right \} \nn\\*
&\qquad   \qquad - \log  \left \{ \frac{ ( \heta w_i + (1 - \heta) w_j ) ( \heta w_j + (1 - \heta) \tau ) }{ ( \heta \tau + (1 - \heta) w_j ) ( \heta w_j + (1 - \heta) w_i ) }  \bigg\}
\right \} \nn\\
&\qquad + \log \left( \frac{ \tau  + {t } }{ w_i + {t } } \right) + \log \left( \frac{  \heta { t } + (1-\heta) w_i  }{  \heta { t } + (1-\heta) \tau  } \right) -
\log \left( \frac{ \tau  + w_j }{ w_i + w_j } \right)  -  \log \left( \frac{  \heta {w}_j  + (1-\heta) w_i  }{  \heta {w}_j  + (1-\heta) \tau  } \right)  \label{eqn:getaheta_up} \\
&= g_{\heta}(t)
\end{align}
where $g_{\heta}(t)$ is $g(t)$ in~\eqref{eqn:def_g} with $\eta$ replaced by $\heta$.  Basically, we replaced the factor $\heta w_i + (1-\heta) w_j $ with (a constant multiplied by) $\eta w_i + (1-\eta) w_j$ in \eqref{eqn:getaheta_up}. Now, the bound in \eqref{eqn:four_h} can be further upper bounded as
\begin{equation}
\hhel(w_i )-\hhel(\tau )-\big( \hel^*  (w_i )-\hel^*(\tau ) \big) \lesssim\sum_{j: (i,j)\in\calE} g_{\heta}(\hatw_j).
\end{equation}
The rest of the proof of Lemma \ref{lemma:l2vslinfty}, in particular the steps in~\eqref{eqn:use_bern1}--\eqref{eqn:use_bern4}, goes  through verbatim with $\eta$ replaced by $\heta$. Finally,  we can use Lemma  \ref{lem:est_eta1} to multiplicatively approximate $(2\heta-1)$ with $(2\eta-1)$  to complete the proof of Lemma \ref{lemma:l2vslinfty_est}. \end{proof}

\subsection{Approximation Lemmata and Their Proofs}\label{sec:auxlemma1}

\begin{lemma} \label{lem:est_ratio}
For any pair of weights $( w_i, w_j)$ and any
constant $\nu>0$, if
\begin{equation}
L\succsim \left(\frac{w_{\max}}{\nu w_{\min}}\right)^2\log\frac{n}{\delta} , \label{eqn:Lmult_lb}
\end{equation}
we have that
\begin{equation}
 \bigg|\Big(\frac{   \eta w_i + (1-\eta) w_j }{ \hat{\eta}w_i + (1-\hat{\eta}) w_j  } \Big)-1 \bigg|\le \nu \label{eqn:bd_ratio}
\end{equation}
with probability exceeding $1-\delta$.
\end{lemma}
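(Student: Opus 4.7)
The plan is to apply Lemma~\ref{lem:fidelity} (fidelity of the $\eta$-estimate) with an appropriately chosen accuracy parameter $\varepsilon$, and then convert the additive guarantee $|\hat\eta - \eta| \le \varepsilon$ into the multiplicative guarantee claimed in \eqref{eqn:bd_ratio} via a one-line algebraic manipulation.

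First, I would rewrite the quantity inside the absolute value as
\begin{equation*}
\frac{\eta w_i + (1-\eta)w_j}{\hat\eta w_i + (1-\hat\eta)w_j} - 1 = \frac{(\eta - \hat\eta)(w_i - w_j)}{\hat\eta w_i + (1-\hat\eta)w_j}.
\end{equation*}
Since $\hat\eta \in [0,1]$ (which may be enforced without loss of generality by clipping the output of Algorithm~\ref{alg:est_eta}), the denominator is a convex combination of $w_i$ and $w_j$ and hence is at least $w_{\min}$. The factor $|w_i - w_j|$ in the numerator is at most $w_{\max}$. Therefore the magnitude of the left-hand side of \eqref{eqn:bd_ratio} is bounded by $|\eta - \hat\eta| \cdot w_{\max}/w_{\min}$.

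Next, I would apply Lemma~\ref{lem:fidelity} with $\varepsilon = \nu \, w_{\min}/w_{\max}$, which is permissible since $\nu$ is a fixed positive constant and the ratio $w_{\max}/w_{\min}$ is bounded. Lemma~\ref{lem:fidelity} guarantees $|\hat\eta - \eta| \le \varepsilon$ with probability at least $1 - \delta$, provided $L \succsim \varepsilon^{-2}\log(n/\delta) = (w_{\max}/(\nu w_{\min}))^2\log(n/\delta)$, which is precisely the hypothesis \eqref{eqn:Lmult_lb}. Combining with the bound from the previous paragraph then yields $|(\eta w_i + (1-\eta)w_j)/(\hat\eta w_i + (1-\hat\eta)w_j) - 1| \le \varepsilon \cdot w_{\max}/w_{\min} = \nu$, as desired.

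There is no real obstacle: the lemma is essentially a uniform-over-$(i,j)$ consequence of Lemma~\ref{lem:fidelity} together with the trivial observation that convex combinations of scores in $[w_{\min},w_{\max}]$ are bounded away from zero. The only point requiring mild care is that the bound must hold uniformly over all pairs $(i,j)$, but this is already baked into Lemma~\ref{lem:fidelity}, since the estimated $\hat\eta$ is a single scalar and its accuracy does not depend on $(i,j)$; hence a single application of that lemma suffices.
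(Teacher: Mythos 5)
Your proof is correct and follows essentially the same route as the paper's: both reduce the claim to Lemma~\ref{lem:fidelity} applied with accuracy $\eps = \nu w_{\min}/w_{\max}$, using the facts that $\hat\eta w_i + (1-\hat\eta)w_j \ge w_{\min}$ and $|w_i - w_j| \le w_{\max}$. Your deterministic rewriting of the ratio as $(\eta-\hat\eta)(w_i-w_j)/(\hat\eta w_i + (1-\hat\eta)w_j)$ is marginally cleaner than the paper's two-sided probability manipulation (it avoids the separate $1+\nu$ and $1-\nu$ cases), but it is the same argument in substance.
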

The important point here is that this approximation is {\em uniform} over $(i,j)\in [n]^2$; cf.\ the lower bound on $L$ in \eqref{eqn:Lmult_lb} and the threshold $\nu$ in~\eqref{eqn:bd_ratio}  does not depend on $(i,j)$. This bound implies that, with high probability, we can readily approximate $\eta w_i + (1-\eta) w_j$ with $(1\pm \nu) (\hat{\eta}w_i + (1-\hat{\eta}) w_j)$ for any constant $\nu>0$.  Also note that since $w_{\min},w_{\max}=\Theta(1)$ and $\nu>0$ is also a constant, the bound in \eqref{eqn:Lmult_lb} is in fact $L\succsim \log\frac{n}{\delta}\asymp\log n$ (with $\delta= 1/\poly(n)$). This is clearly satisfied by the assumption  in  \eqref{eq:MinSampleComplexity_unknown} in   Theorem \ref{thm:etaunknown}.

\begin{proof}[Proof of Lemma \ref{lem:est_ratio}]
Assume without loss of generality that $w_i>w_j$ (the expression in  \eqref{eqn:bd_ratio} is symmetric in $w_i$ and $w_j$). Consider
\begin{align}
\Pr \left(  \frac{   \eta w_i + (1-\eta) w_j }{ \hat{\eta}w_i + (1-\hat{\eta}) w_j  } > 1+\nu \right) &= \Pr\big ( \eta w_i + (1-\eta) w_j > (1+\nu)  (\hat{\eta} w_i + (1-\hat{\eta}) w_j  ) \big) \\
&= \Pr \big( (\eta-\hat{\eta})(w_i - w_j) > \nu\hat{\eta}w_i +  \nu(1-\hat{\eta})w_j \big)\\
&\le \Pr \big( (\eta-\hat{\eta})(w_i - w_j) > \nu w_{\min} \big) \label{eqn:lower_bd_w}\\
&= \Pr \bigg(  \eta-\hat{\eta} > \nu \frac{w_{\min}}{w_i - w_j } \bigg) \label{eqn:assume_wiwj}\\
&\le \Pr \bigg(  \eta-\hat{\eta} > \nu \frac{w_{\min}}{w_{\max}} \bigg) \label{eqn:bound_wmax}\\
&\le \Pr \bigg( |\eta-\hat{\eta} |> \nu \frac{w_{\min}}{w_{\max}} \bigg) \label{eqn:bound_abs}
\end{align}
where in~\eqref{eqn:lower_bd_w}, we lower bounded $w_i,w_j$ by $w_{\min}$,  \eqref{eqn:assume_wiwj} assumes that $w_i >w_j$ and \eqref{eqn:bound_wmax} follows because $w_i-w_j\le w_i\le w_{\max}$. A bound  for the other inequality $\Pr \big(  \frac{   \eta w_i + (1-\eta) w_j }{ \hat{\eta}w_i + (1-\hat{\eta}) w_j  } < 1-\nu \big)$ proceeds in a completely analogous   way. Since $w_{\min},w_{\max}=\Theta(1)$, the result follows immediately from the union bound and the probabilistic bound on $|\hat{\eta}-\eta|$ (Lemma \ref{lem:fidelity}).
\end{proof}
\begin{lemma} \label{lem:est_eta1}
For any constant $\nu>0$, if
\begin{equation}
L\succsim \frac{1}{\nu^2(2\eta-1)^2}\log\frac{n}{\delta}, \label{eqn:L_lb} 
\end{equation}
we have that
\begin{equation}
\bigg|\Big(\frac{2\hat{\eta}-1}{2\eta-1} \Big)-1 \bigg|\le \nu \label{eqn:ratio_eta}
\end{equation}
with probability exceeding $1-\delta$.
\end{lemma}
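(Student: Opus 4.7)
The plan is to reduce the claim directly to Lemma~\ref{lem:fidelity} by a one-line algebraic manipulation followed by an appropriate choice of the accuracy parameter $\eps$ in that lemma.

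First I would observe that
\begin{equation}
\bigg|\frac{2\hat\eta-1}{2\eta-1}-1\bigg| \;=\; \frac{|(2\hat\eta-1)-(2\eta-1)|}{|2\eta-1|} \;=\; \frac{2|\hat\eta-\eta|}{2\eta-1},
\end{equation}
where we used that $\eta\in(1/2,1]$ so $2\eta-1>0$. Hence the event $\{|(2\hat\eta-1)/(2\eta-1)-1|\le\nu\}$ coincides with $\{|\hat\eta-\eta|\le\nu(2\eta-1)/2\}$.

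Next I would apply Lemma~\ref{lem:fidelity} with the specific choice $\eps:=\nu(2\eta-1)/2$. Lemma~\ref{lem:fidelity} asserts that whenever $L\succsim (1/\eps^2)\log(n/\delta)$, the tensor-decomposition estimator of Algorithm~\ref{alg:est_eta} satisfies $|\hat\eta-\eta|\le\eps$ with probability at least $1-\delta$. Substituting $\eps=\nu(2\eta-1)/2$, the requirement on $L$ becomes
\begin{equation}
L \;\succsim\; \frac{4}{\nu^2(2\eta-1)^2}\log\frac{n}{\delta} \;\asymp\; \frac{1}{\nu^2(2\eta-1)^2}\log\frac{n}{\delta},
\end{equation}
which is precisely hypothesis~\eqref{eqn:L_lb} (after absorbing the constant $4$ into the $\succsim$). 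Combining this with the display above yields~\eqref{eqn:ratio_eta} with probability at least $1-\delta$, completing the proof.

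There is no genuine obstacle here: the statement is a convenient multiplicative reformulation of the additive error bound on $\hat\eta$ provided by Lemma~\ref{lem:fidelity}, tailored so that subsequent arguments (e.g.\ the Lipschitz-continuity steps in the proofs of Lemmas~\ref{lemma:l2vslinfty_est} and~\ref{lem:l2_error_unknown}, where one needs to replace $2\eta-1$ by $2\hat\eta-1$ up to a constant factor) can be invoked verbatim. The only mild subtlety is that the required sample size per edge scales as $1/(2\eta-1)^2$, so the approximation degrades as $\eta\downarrow 1/2$; this is already reflected in the overall sample complexity bound of Theorem~\ref{thm:etaunknown} and is inherited by any downstream use of this lemma.
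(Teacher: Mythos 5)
Your argument is correct and is essentially identical to the paper's proof: both rewrite $\bigl|\tfrac{2\hat\eta-1}{2\eta-1}-1\bigr|$ as $\tfrac{2|\hat\eta-\eta|}{2\eta-1}$ and then invoke Lemma~\ref{lem:fidelity} with $\eps=\tfrac{\nu}{2}(2\eta-1)$, absorbing the constant into the $\succsim$ in \eqref{eqn:L_lb}. No gaps.
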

Here, in contrast to Lemma \ref{lem:est_ratio},  $(2\eta-1)$  in \eqref{eqn:ratio_eta} may be vanishingly small, so the lower bound on $L$ in \eqref{eqn:L_lb} contains the additional term $(2\eta-1)^2$. 
\begin{proof}[Proof of Lemma \ref{lem:est_eta1}]
Consider
\begin{align}
\Pr\left( \bigg|\Big(\frac{2\hat{\eta}-1}{2\eta-1} \Big)-1 \bigg| >\nu \right) &= \Pr\left( \bigg| \frac{\hat{\eta}-\eta}{2\eta-1}\bigg|>\frac{\nu}{2}\right)\\
&= \Pr\left(  \big| \hat{\eta}-\eta \big|>\frac{\nu}{2}(2\eta-1)\right) \label{eqn:mult_err}.
\end{align}
But we know from  Lemma \ref{lem:fidelity} that if
\begin{equation}
L\succsim \frac{1}{\big(\frac{\nu}{2}(2\eta-1)\big)^2}\log\frac{n}{\delta}\asymp\frac{1}{\nu^2 (2\eta-1)^2 } \log\frac{n}{\delta},
\end{equation}
then the probability in \eqref{eqn:mult_err} is no larger than $\delta$.
\end{proof}

%\section{Proof of the bounding of $\|\hat{\DDelta}\|$ in \eqref{eqn:bd_upsilon} }\label{app:prf_upsilon_unknown}

\section{Proof of Lemma \ref{lem:l2_error_unknown}}\label{app:prf_upsilon_unknown}
From the proof sketch in Section \ref{sec:prf_lem:l2_error_unknown}, we see that it suffices to prove the upper bound on $\|\hat{\DDelta}\|$ in \eqref{eqn:bd_upsilon}.   The entries of $\hat{\DDelta}$ are denoted in the usual way as $\hat{\DDelta}_{ij}$ where $i,j\in [n]$.
%As can be seen from \eqref{eqn:def_Delta_ii} and \eqref{eqn:def_Delta_ij}, to obtain an analogous bound to Lemma \ref{lemma:BoundDelta}, 
When $\eta$ was known, it was imperative to understand the probability that
\begin{equation}
F_{ij}:=Ld_{\max}\DDelta_{ij}=\frac{\big(\sum_{\ell=1}^L Y_{ij}^{(\ell)} \big)- L(1-\eta) }{2\eta-1}- L\frac{w_i}{w_i+w_j} \label{eqn:Cactual}
\end{equation}
deviates from zero. See the corresponding bound in \eqref{eqn:apply_hoeff1}.   
When one only has an estimate of $\eta$, namely $\hat{\eta}$, it is then imperative to do the same for \begin{equation}
\hatF_{ij}:= \frac{\big(\sum_{\ell=1}^L Y_{ij}^{(\ell)} \big)- L(1- \hat{\eta}) }{2\hat{\eta}-1}- L\frac{w_i}{w_i+w_j}. \label{eqn:Cest}
\end{equation}
Our overarching strategy is to bound $\hatF_{ij}$ in terms of $F_{ij}$ and then use the concentration bound  we had established for $F_{ij}$ in \eqref{eqn:apply_hoeff1} to  then understand the stochastic behavior of $\hatF_{ij}$. To simplify notation, define the sum $U:=LY_{ij} = \sum_{\ell=1}^L Y_{ij}^{(\ell)} $. Consequently,
\begin{align}
\big| \hatF_{ij} - F_{ij} \big| &= \left| \frac{U- L(1-\hat{\eta}) }{2\hat{\eta}-1}-\frac{U-L(1-\eta)}{2\eta-1} \right|\\
&\le L \left| \frac{1-\heta}{2\heta-1}-\frac{1-\eta}{2\eta-1}\right| + U\left| \frac{1 }{2\hat{\eta}-1}-\frac{1 }{2\eta-1} \right|\\
&\le L \bigg[\,  \left| \frac{1-\heta }{2\hat{\eta}-1}-\frac{1-\eta }{2\eta-1}  \right| + \left| \frac{1 }{2\hat{\eta}-1}-\frac{1 }{2\eta-1} \right| \, \bigg]
\end{align}
where the final bound follows from the fact that $|U |\le L$ almost surely (since $Y_{ij}^{(\ell)}\in\{0,1\}$).  Now we make use of the following   lemma that uses the sample complexity result in  Lemma \ref{lem:fidelity} to  quantify the Lipschitz constant of the maps $t\mapsto\frac{1}{ 2t-1 }$ and $t\mapsto\frac{1-t}{2t-1}$ in the vicinity of $t=(1/2)^+$.

\begin{lemma}\label{lem:est_eta}
Let $\lambda_1: (1/2,1]\to\bbR_+$ and $\lambda_2 : (1/2,1]\to\bbR_+$ be defined as
\begin{equation}
\lambda_1(t) := \frac{1-t}{2t-1},\quad \mbox{and}\quad \lambda_2(t) := \frac{1}{2t-1} \label{eqn:gt} . 
\end{equation}
Then if
\begin{equation}
L\succsim \frac{1}{(2\eta-1)^2}\log\frac{n}{\delta}  \label{eqn:Lcond0}
\end{equation}
with probability exceeding $1-\delta$ (over the random variable $\hat{\eta}$ which depends on the samples drawn from the mixture distribution  \eqref{eqn:mixture_mode}), we have for each $j = 1,2$,
\begin{equation}
|\lambda_j(\hat{\eta}) - \lambda_j(\eta)| \leq\frac{8}{(2\eta-1)^2}|\hat{\eta}-\eta| .
\end{equation}
\end{lemma}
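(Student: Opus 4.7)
The plan is to combine an elementary mean-value-theorem estimate with the fidelity bound on $\hat{\eta}$ from Lemma \ref{lem:fidelity} to localize $\hat{\eta}$ in a neighborhood of $\eta$ where both $\lambda_1$ and $\lambda_2$ admit a usable Lipschitz constant. The obvious difficulty is that $\lambda_1$ and $\lambda_2$ blow up as $t \downarrow 1/2$, so a naive global Lipschitz bound is useless when $\eta$ itself is close to $1/2$. The remedy is to instead obtain a \emph{local} Lipschitz estimate that is valid uniformly on an interval of half-length comparable to $(2\eta-1)$ around $\eta$, and then to use Lemma \ref{lem:fidelity} to guarantee (with high probability) that $\hat{\eta}$ falls inside that interval.

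First I would compute the derivatives directly: for $t \in (1/2,1]$,
\begin{align}
\lambda_1'(t) = -\frac{1}{(2t-1)^2}, \qquad \lambda_2'(t) = -\frac{2}{(2t-1)^2}.
\end{align}
Hence for any $t$ in the interval $I_\eta := [\eta - \tfrac{1}{4}(2\eta-1),\,\eta + \tfrac{1}{4}(2\eta-1)]\cap (1/2,1]$, we have $2t-1 \geq \tfrac{1}{2}(2\eta-1)$, so $|\lambda_j'(t)| \leq 8/(2\eta-1)^2$ for both $j=1,2$. By the mean value theorem, whenever $\hat{\eta} \in I_\eta$,
\begin{align}
|\lambda_j(\hat{\eta}) - \lambda_j(\eta)| \;\leq\; \sup_{t \in I_\eta}|\lambda_j'(t)|\cdot|\hat{\eta}-\eta| \;\leq\; \frac{8}{(2\eta-1)^2}\,|\hat{\eta}-\eta|,
\end{align}
which is exactly the claimed inequality.

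It remains to ensure $\hat{\eta} \in I_\eta$ with the stated probability. By Lemma \ref{lem:fidelity}, for any target accuracy $\varepsilon>0$, taking $L \succsim \varepsilon^{-2}\log(n/\delta)$ forces $|\hat{\eta}-\eta|\leq \varepsilon$ with probability at least $1-\delta$. Choosing $\varepsilon = \tfrac{1}{4}(2\eta-1)$ translates into the hypothesis $L \succsim (2\eta-1)^{-2}\log(n/\delta)$ of the present lemma, and on this high-probability event we indeed have $\hat{\eta}\in I_\eta$. Combining with the Lipschitz bound above completes the proof. The main (and only real) obstacle is the singularity of $\lambda_1,\lambda_2$ at $1/2$, and this is precisely what dictates the form of the sample-size condition \eqref{eqn:Lcond0}; no other step involves more than elementary calculus.
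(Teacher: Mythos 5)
Your proposal is correct and follows essentially the same route as the paper's own proof: a local Lipschitz bound for $\lambda_1,\lambda_2$ on an interval of half-width $\tfrac{1}{4}(2\eta-1)$ around $\eta$ (the paper phrases this via the midpoint $\eta^*=\tfrac12(\eta+\tfrac12)$, yielding the same constant $8/(2\eta-1)^2$), combined with Lemma~\ref{lem:fidelity} at accuracy $\eps\asymp(2\eta-1)$ to localize $\hat{\eta}$ in that interval with probability $1-\delta$ under condition~\eqref{eqn:Lcond0}. Your explicit intersection of the interval with $(1/2,1]$ is a minor tidiness improvement but does not change the argument.
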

The proof of this lemma is deferred to Appendix \ref{app:prf_est_eta} at the end of this appendix. We take $\delta = 1/\poly(n)$ in the sequel so \eqref{eqn:Lcond0} is equivalently
\begin{equation}
L\succsim \frac{\log n}{(2\eta-1)^2} \label{eqn:Lcond}
\end{equation}
which  when combined with $S =  \binom{n}{2}pL$ is less stringent than the statement of Theorem \ref{thm:etaunknown}.
%\textcolor{red}{Possible to improve \eqref{eqn:Lcond}?}
Thus,  under the condition \eqref{eqn:Lcond},  Lemma \ref{lem:est_eta} yields that
\begin{equation}
\big| \hatF_{ij} - F_{ij} \big|  \le   \frac{16L}{(2\eta-1)^2} |\hat{\eta}-\eta|   \label{eqn:diff_F}
\end{equation}
with probability exceeding $1-1/\poly(n)$.
By the reverse triangle inequality, we obtain
\begin{equation}
\big| \hatF_{ij} - F_{ij} \big|\ge \big|  |\hatF_{ij}|- |F_{ij}|\big|.\label{eqn:rev_F}
\end{equation}
To make the dependence of $|\hat{\eta}-\eta|$ on the number of samples $L$ explicit, we define
\begin{equation}
\eps_L:= |\hat{\eta} -\eta|. \label{eqn:def_epsL0}
\end{equation}
%Recall from the analysis leading up to Subsection \ref{sec:application} that $|\hat{\eta}-\eta|\le\eps$ if $L$ satisfies \eqref{eqn:scaling_L}.
By uniting \eqref{eqn:diff_F}--\eqref{eqn:def_epsL0}, we obtain
\begin{equation}
 |F_{ij} | - \eps_L' \le \big|\hatF_{ij}\big| \le |F_{ij}|+\eps_L' \label{eqn:FF}
\end{equation}
where
\begin{equation}
\eps_L':=\frac{16L}{(2\eta-1)^2} \eps_L. \label{eqn:def_epsLp}
\end{equation}
For later reference, define
\begin{equation}
\eps_L'':=\frac{16L}{(2\eta-1)^2}d_{\max} \eps_L. \label{eqn:def_epsLpp}
\end{equation}
With the estimate in \eqref{eqn:FF}, we observe that for any $t>0$, one has
\begin{equation}
\Pr\left[ \big|\hatF_{ij}\big|\ge t \right]\le \Pr\left[  |F_{ij} |+\eps_L'\ge t \right]= \Pr\left[  |F_{ij} |\ge t-\eps_L' \right] \label{eqn:bd_F}
\end{equation}
where the randomness in the probability on the left is over both $\hat{\eta}$ and $\bY:=\{Y_{ij}^{(\ell) } : \ell \in [L], (i,j)\in\calE\}$ (the former is a function of the latter) whereas the randomness in the probability on the  right is only over  $\bY$. Thus, by using the equality $F_{ij}=Ld_{\max} {\DDelta}_{ij} $ and applying Hoeffding's inequality  to \eqref{eqn:bd_F} (cf.\ the bound in~\eqref{eqn:apply_hoeff1}), we obtain
\begin{align}
%\Pr\left[ \big| Ld_{\max}\hat{ \DDelta}_{ii} \big|\ge t\right]  \le  \\
\Pr\left[\big| Ld_{\max}\hat{ \DDelta}_{ij} \big|\ge t \right]&\le  2 \exp \left (  - \frac{2 ((t-\eps_L') \frac{2 \eta-1}{2 \eta+1})^2 }{  L }  \right ). \label{eqn:dev_Delta_ij}
\end{align}
Now by the same argument as in~\eqref{eqn:def_Delta_ii}, $Ld_{\max} \hat{\DDelta}_{ii} = -\sum_{k\ne i}Ld_{\max} \hat{ \DDelta}_{ik}= -\sum_{k\ne i} \hat{F}_{ik}$ so we have
\begin{equation}
|Ld_{\max}\hat{\DDelta}_{ii}|-\eps_L''\le |Ld_{\max}\hat{\DDelta}_{ii}|\le |Ld_{\max}\hat{\DDelta}_{ii}|+\eps_L''.
\end{equation}
 As a result, similarly to the calculation that led to \eqref{eqn:dev_Delta_ij}, we obtain
\begin{align}
\Pr\left[\big| Ld_{\max}\hat{ \DDelta}_{ii} \big|\ge t \right]&\le  2 \exp \left (  - \frac{2 ((t- \eps_L'') \frac{2 \eta-1}{2 \eta+1})^2 }{  Ld_{\max} }  \right ).\label{eqn:dev_Delta_ii}
\end{align}
%\textcolor{red}{I'm not sure how to calculate the bound on $L$. Seems like $L$ gets cancelled throughout.}
 From the Hoeffding bound analysis leading to the non-asymptotic bound in~\eqref{eqn:dev_Delta_ii}, we know that by choosing
 \begin{equation}
 t := c\sqrt{Ld_{\max}\log n} \Big(\frac{2\eta+1}{2\eta-1}\Big)+\eps_L'', \label{eqn:choose_t}
 \end{equation}
 for some sufficiently large constant $c>0$,
 \begin{equation}
\Pr\left[\big| Ld_{\max}\hat{ \DDelta}_{ii} \big|\ge t \right] = O\Big( \frac{1}{\poly(n)} \Big).
 \end{equation}
 In other words,
\begin{equation}
|\hat{\DDelta}_{ii}|\lesssim \frac{1}{2\eta-1}\sqrt{\frac{\log n}{Ld_{\max}}} + \frac{\eps_L''}{Ld_{\max}} \label{eqn:bound_Delta}
\end{equation}
with  probability at least $1-1/\poly(n)$. Recall the definition of $\eps_L''$ in \eqref{eqn:def_epsLpp}. We now design $(\eps_L,\eps_L'')$ such that
\begin{equation}
\frac{\eps_L''}{Ld_{\max}}=\frac{16}{(2\eta-1)^2} \eps_L = \frac{1}{ 2\eta-1 } \sqrt[4]{ \frac{ \log^2 n}{L d_{\max}}} \label{eqn:choice_eps_L} .
\end{equation}
%This  choice of $(\eps_L,\eps_L'')$  will become apparent in the sequel.
 %In the sequel, we will choose the parameters such that
% that  satisfy
%\begin{equation}
%0\le\kappa\le 1,\qquad \gamma-\lambda\ge 0,\qquad \zeta\ge 0 .\label{eqn:conditions}
%\end{equation}
%With these conditions on the parameters, we see that
Now note   $d_{\max}=\Theta(\log n)$ with high probability. This implies that the second term in~\eqref{eqn:bound_Delta} dominates the first term.  %It turns out that the assumption we made in the theorem statement---namely \eqref{eqn:cond_DeltaK0}, together with our choice of $(\kappa, \gamma, \lambda , \zeta )$ in \eqref{eqn:choice_par1}--\eqref{eqn:choice_par3} in the sequel,   guarantees this dominance. With this assumption in place,  we have
Thus,
\begin{equation}
|\hat{\DDelta}_{ii}|\lesssim    \frac{1}{ 2\eta-1 } \sqrt[4]{ \frac{ \log^2 n}{L d_{\max}}} ,\label{eqn:bound_Delta2}
\end{equation}
with  probability at least $1-1/\poly(n)$. A similar high probability bound, of course, holds for $|\hat{\DDelta}_{ij}|$ if we choose $t$ in \eqref{eqn:dev_Delta_ij} similarly to the choice made in \eqref{eqn:choose_t}. We may rearrange~\eqref{eqn:choice_eps_L} to yield
%\begin{equation}
%\eps_L = \frac{2\eta-1}{(2\eta-1)^2 + 8 }\cdot\sqrt[4]{ \frac{ \log^2 n}{Ld_{\max}}}. \label{eqn:eta_L}
%\end{equation}
%Since the denominator of \eqref{eqn:eta_L} is between $8$ and $9$ (constants), we have
\begin{equation}
\eps_L  \asymp (2\eta-1)\sqrt[4]{ \frac{ \log^2 n}{Ld_{\max}}}. \label{eqn:eps_L_asymp}
\end{equation}
%Plugging \eqref{eqn:eps_L_asymp} into the sample complexity bound in~\eqref{eqn:lower_bd_L}  and rearranging, we obtain that
%\begin{equation}
%L\succsim \frac{\log n}{(2\eta-1)^4 }.\label{eqn:sample_com3}
%\end{equation}
Given the bound on the diagonal elements $\hat{\DDelta}_{ii}$ in~\eqref{eqn:bound_Delta2} and a similar bound on the off-diagonal elements $\hat{\DDelta}_{ij}$, similarly to  the proof of Lemma~\ref{lemma:l2-norm-bound-etaknown} in Appendix \ref{app:ProofofLemmaBoundDelta}, the spectral norm of $\hat{\DDelta}$ can be bounded as
\begin{equation}
 \|\hat{\DDelta}\|\lesssim  \frac{1}{2\eta-1}\sqrt[4]{ \frac{  \log^2 n}{Ld_{\max}}}.
 \end{equation}
Now we check that the lower bound on $L$ is satisfied when we choose $\eps_L $ according to   \eqref{eqn:eps_L_asymp}. Using  the sample complexity bound in~\eqref{eqn:lower_bd_L}  and rearranging, we obtain
\begin{equation}
L\succsim\frac{\log n}{(2\eta-1)^4}
\end{equation}
which when combined with $S =  \binom{n}{2}pL$ is less stringent than the statement of Theorem \ref{thm:etaunknown}. This completes the proof of the upper bound of $\|\hat{\DDelta}\|$ in~\eqref{eqn:bd_upsilon}. 
%This and the condition on $\Delta_K$ complete the proof of \eqref{eqn:bd_upsilon}. \blue{Somehow this is not very satisfactory. I'm not sure how to use the lower bound on $\Delta_K$. Can you help?}
 
 \subsection{Proof of Lemma \ref{lem:est_eta}}\label{app:prf_est_eta}
 
 Consider the functions $\lambda_1 : (1/2,1]\to\bbR$ and $\lambda_2 : (1/2,1]\to\bbR$  given by \eqref{eqn:gt}.
By direct differentiation, we have
\begin{equation}
\lambda_1'(t) =   \frac{-1}{(2t-1)^2 },\quad\mbox{and}\quad \lambda_2'(t) = \frac{-2}{(2t-1)^2 }.
\end{equation}
We note that an everywhere differentiable function $g$ is Lipschitz continuous with Lipschitz constant $\sup g'$. We now assume that $\eta,\hat{\eta} \in [\eta^*,1]$ for some $\eta^*>1/2$. By using the fact that $2 /(2\eta^*-1)^2$ is an upper bound of the derivative of $\lambda_j |_{[\eta^*,1]}$ (i.e., $\lambda_j$ restricted to the domain $[\eta^*,1]$),  one has
\begin{equation}
| \lambda_j(\hat{\eta} ) - \lambda_j(\eta) |\le\frac{2}{( 2\eta^*-1)^2} |\hat{\eta}-\eta|  \label{eqn:lip}
\end{equation}
for $j = 1,2$.  
We now put
\begin{equation}
\eta^* := \frac{1}{2}\Big(\eta+ \frac{1}{2}\Big).
\end{equation}
This quantity is the average of $1/2$ and $\eta$ and so is greater than $1/2$ as required. Also, $\eta-\eta^*=\eta/2-1/4$. Now,  \eqref{eqn:lip} becomes
\begin{equation}
|\lambda_j(\hat{\eta} ) - \lambda_j(\eta) |\le\frac{2}{( \eta-1/2)^2} |\hat{\eta}-\eta| = \frac{8}{(2\eta-1)^2}  |\hat{\eta}-\eta| \label{eqn:approx_error}
\end{equation}
for $j = 1,2$ if $\hat{\eta} \in [\eta^* , 2\eta-\eta^*] \subset [\eta^*,1] $.  The probability that this happens (recalling that $\hat{\eta}$ is the  random in question) is
\begin{align}
\Pr\big[\eta^*\le  \hat{\eta} \le 2\eta-\eta^*   \big]& = \Pr \bigg[ |\hat{\eta}-\eta| \le \frac{\eta}{2}-\frac{1}{4} \bigg] \\
& =1- \Pr \bigg[ |\hat{\eta}-\eta| > \frac{\eta}{2}-\frac{1}{4} \bigg].
\end{align}
From Lemma \ref{lem:fidelity}, we know that if
\begin{equation}
L \succsim \frac{1}{\eps^2}\log\frac{n}{\delta}, 
\end{equation}
then we have $|\hat{\eta}-\eta|\le\eps$ with probability at least $1-\delta$.  Hence, if
\begin{equation}
L\succsim\frac{1}{ (\frac{\eta}{2}-\frac{1}{4})^2} \log\frac{n}{\delta} \asymp \frac{1}{(2\eta-1)^2 }\log\frac{n}{\delta}
\end{equation}
then \eqref{eqn:approx_error} holds with probability at least $1-\delta$.  This completes the proof of Lemma \ref{lem:est_eta}. 

\section{Proof of Lemma~\ref{lem:scale}}\label{app:prf_scalings}
\subsection{The Scaling of Singular Values $\sigma_i(M_2)$} \label{sec:est_sing}
Since $M_2$ is  symmetric and positive semidefinite, its eigenvalues (which are all non-negative) are the same as its singular values. Since the eigenvectors are invariant to scaling, let us assume that
\begin{equation}
 v= \pi_0 + b \pi_1 \label{eqn:e_vector}
\end{equation}
is an eigenvector. Then by uniting the definition of $M_2$ in~\eqref{eqn:defM2} and \eqref{eqn:e_vector}, we have
\begin{equation}
M_2 v = (\eta \|\pi_0\|^2 + \eta b\langle \pi_0,\pi_1\rangle)\pi_0 + ((1-\eta )a \langle \pi_0,\pi_1\rangle+b(1-\eta)\|\pi_1\|^2 )\pi_1.
\end{equation}
Since $v$ is assumed to be an eigenvector, $M_2v$ satisfies that
\begin{equation}
M_2 v = \sigma v
\end{equation}
where $\sigma$ is some eigenvalue or singular value. Since $\pi_0$ is linearly independent of $\pi_1$, this equates to
\begin{align}
\eta \|\pi_0\|^2 + \eta b\langle \pi_0,\pi_1\rangle &= \sigma \label{eqn:simul1}\\
(1-\eta )a \langle \pi_0,\pi_1\rangle+b(1-\eta)\|\pi_1\|^2 &= \sigma b.\label{eqn:simul2}
\end{align}
Now note from the definitions of $\pi_0$ and $\pi_1$ that
\begin{equation}
\|\pi_0\|^2 =\|\pi_1\|^2
\end{equation}
because the elements are the same and $\pi_1$ is simply a permuted version of $\pi_0$.  So we will replace $\|\pi_1\|^2$ with $\|\pi_0\|^2 $ henceforth.
Eliminating $\sigma$ from the simultaneous equations in \eqref{eqn:simul1} and \eqref{eqn:simul2}, we obtain the quadratic equation in the unknown $b$:
\begin{align}
\eta \langle \pi_0,\pi_1\rangle  b^2 + (2\eta-1) \|\pi_0\|^2 b - (1-\eta) \langle \pi_0,\pi_1\rangle  = 0
\end{align}
which implies that
\begin{equation}
b^* = \frac{-(2\eta-1)\|\pi_0\|^2  \pm \sqrt{(2\eta-1)^2\|\pi_0\|^4+4\eta(1-\eta) \langle \pi_0,\pi_1\rangle^2  } }{2\eta\langle \pi_0,\pi_1\rangle  }. \label{eqn:b_star}
\end{equation}
 Now, we observe that
\begin{align}
\langle \pi_0,\pi_1\rangle  &= \sum_{(i,j) \in\calE} 2 \frac{w_iw_j}{w_i+w_j}\\
\|\pi_0\|^2 &=\sum_{(i,j)\in\calE}  \frac{w_i^2 +w_j^2}{(w_i+w_j)^2} .
\end{align}
so by the fact that $w_{\min}$ and $w_{\max}$ are bounded, we see that  $\langle \pi_0,\pi_1\rangle =\Theta(|\calE|)$ and $\|\pi_0\|^2=\Theta(|\calE|)$. Plugging these estimates into $b^*$, we see that $b^*=\Theta(1)$. Thus, by \eqref{eqn:simul1}, we see that  with high probability over the realization of the Erd\H{o}s-R\'enyi graph,
\begin{equation}
 \sigma = \Theta(\eta |\calE|) = \Theta\left( \eta n^2 p \right). \label{eqn:sigma_res}
 \end{equation}
 This scaling holds for both singular values $\sigma_1(M_2)$ and $\sigma_2(M_2)$ so this proves \eqref{eqn:sigma_res1}. Two distinct values for the singular values due to the $\pm$ sign in $b^*$ in~\eqref{eqn:b_star}.  This completes the proof of \eqref{eqn:sigma_res1}.

\subsection{The Scaling of Block-Incoherence Parameter $\mu(M_2)$} \label{sec:est_mu}
 Now let us evaluate the scaling of $\mu(M_2)$.   From \eqref{eqn:e_vector} and \eqref{eqn:b_star}, we know the form of the eigenvectors of $M_2$. The singular vectors must be normalized so they can be written as
 \begin{equation}
 \hatv :=\frac{v}{\|v\|_2}.
 \end{equation}
 Since the length of $v$ is $2|\calE|$, and the values (elements) of $v$ are uniformly upper and lower bounded, it is easy to see that $\|v\|_2 = \Theta( \sqrt{ |\calE|})$. As a result, one has
 \begin{equation}
 \hatv = \Theta \bigg( \frac{1}{\sqrt{|\calE|}}\bigg) v.
 \end{equation}
 Thus, each subblock of $U$ has entries that scale as $O(  |\calE|^{-1/2})$ and so
 \begin{equation}
  \big\| U^{(  k )} \big\|_2 = \Theta \bigg( \frac{1}{\sqrt{|\calE|}}\bigg) .
 \end{equation}
As a result, from the definition of $\mu(M_2)$ in \eqref{eqn:defUk},  we see that $\mu(M_2)$ is of constant order, i.e.,
\begin{equation}
 \mu(M_2) = \Theta(1),\label{eqn:mu_const0}
 \end{equation}
 which completes the proof of \eqref{eqn:mu_M2}.

%\section{Proof of Lemma~\ref{lem:est_eta}}\label{app:prf_est_eta}

\section{Bernstein inequality}
\label{app:BernsteinIneq}

\begin{lemma}
\label{lemma:BernsteinIneq}
Consider $n$ independent random variables $X_i$ with $|X_i| \leq B$. For any $\gamma \geq 2$, one has
\begin{align}
\left |  \sum_{i=1}^n X_i - \mathbb{E} \left[ \sum_{i=1}^n X_i \right] \right | \leq \sqrt{  2 \gamma \log n \sum_{i=1}^n \mathbb{E} \left[ X_i^2 \right] } + \frac{2 \gamma}{3} B \log n
\end{align}
with probability at least $1- 2n^{-\gamma}$.
\end{lemma}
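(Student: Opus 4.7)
The plan is to derive this statement directly from the standard (classical) Bernstein inequality by an appropriate choice of deviation level. Specifically, I will invoke the fact that for independent centered random variables $\{X_i - \mathbb{E}[X_i]\}$ with $|X_i| \le B$, and $V := \sum_{i=1}^n \mathbb{E}[X_i^2]$, one has for every $t > 0$
\begin{equation}
\Pr\!\left[\,\left|\,\sum_{i=1}^n X_i - \mathbb{E}\!\Big[\sum_{i=1}^n X_i\Big]\,\right| \ge t\,\right] \;\le\; 2\exp\!\left(-\frac{t^2/2}{V + \tfrac{B}{3}t}\right).
\label{eq:stdBern}
\end{equation}
(The random variable $X_i - \mathbb{E}[X_i]$ is bounded by $2B$ almost surely, but this only changes universal constants; the standard statement above with $B$ in the denominator follows from replacing $X_i$ by $X_i - \mathbb{E}[X_i]$ and noting $|X_i - \mathbb{E}[X_i]| \le 2B$ together with $\var(X_i) \le \mathbb{E}[X_i^2]$; constants can be absorbed in the proof by adjusting thresholds.)

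Given \eqref{eq:stdBern}, my goal is to choose $t$ so that the right-hand side is $\le 2n^{-\gamma}$. This reduces to showing that
\begin{equation}
t^2 \;\ge\; 2\gamma \log n \cdot \big(V + \tfrac{B}{3}t\big),
\label{eq:quadratic}
\end{equation}
i.e., that $t$ exceeds the positive root of the quadratic $s^2 - \tfrac{2\gamma B \log n}{3}\,s - 2\gamma V \log n = 0$. The key step is a clean algebraic bound on that positive root: by the quadratic formula and the elementary inequality $\sqrt{a+b} \le \sqrt{a}+\sqrt{b}$ (for $a,b\ge 0$), one finds
\begin{equation}
s^{*} \;\le\; \frac{2\gamma B \log n}{3} \;+\; \sqrt{2\gamma V \log n}.
\end{equation}
Hence choosing
\begin{equation}
t \;:=\; \sqrt{2\gamma \log n \cdot V} \;+\; \frac{2\gamma B \log n}{3}
\end{equation}
satisfies $t \ge s^*$, which is precisely \eqref{eq:quadratic}, so substitution into \eqref{eq:stdBern} yields the tail probability $\le 2n^{-\gamma}$.

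This is essentially the ``inverted'' form of Bernstein's inequality: instead of specifying a deviation $t$ and reading off the probability, we specify the probability $2n^{-\gamma}$ and solve for the smallest $t$ that achieves it. The only obstacle is the algebra around the positive root of the quadratic, which the triangle-type inequality $\sqrt{a+b}\le\sqrt{a}+\sqrt{b}$ handles cleanly; there is no substantial probabilistic content beyond invoking \eqref{eq:stdBern}.
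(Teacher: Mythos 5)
Your proposal is correct, and there is nothing in the paper to compare it against: the paper states this lemma in Appendix F without proof, treating it as a standard (inverted) form of Bernstein's inequality, which is exactly what you derive. Your inversion step is right: requiring $t^2 \ge 2\gamma\log n\,(V + \tfrac{B}{3}t)$, bounding the positive root of the quadratic by $\sqrt{2\gamma V\log n} + \tfrac{2\gamma B\log n}{3}$ via $\sqrt{a^2+4c}\le a+2\sqrt{c}$, and plugging into the tail bound gives precisely the stated deviation with probability $1-2n^{-\gamma}$. The only shaky point is the parenthetical justification of your starting inequality: if you genuinely center and use $|X_i-\mathbb{E}[X_i]|\le 2B$, the denominator becomes $V+\tfrac{2B}{3}t$ and the inversion yields $\tfrac{4\gamma}{3}B\log n$ rather than the lemma's $\tfrac{2\gamma}{3}B\log n$ (harmless for the paper's order-wise uses, but not the literal statement). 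The cleaner route is to invoke the Bennett--Bernstein bound for variables bounded \emph{above} by $B$, which uses the second moments $\sum_i\mathbb{E}[X_i^2]$ and the bound $B$ on $X_i$ itself (no centering of the range needed), applied to $X_i$ and $-X_i$ with a union bound; this gives your displayed inequality with constant $B$ exactly, and the rest of your argument then proves the lemma as stated.
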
 

%\section*{Acknowledgments}
%C.~Suh is supported by a gift from Samsung. 
%
%V.~Y.~F.~Tan and R.~Zhao gratefully acknowledge financial support from the National University of Singapore  (NUS) under the  NUS Young Investigator Award R-263-000-B37-133.
\bibliographystyle{ieeetr}
\bibliography{advtopKbib}

\end{document}